\documentclass[twocolumn,floatfix,aps,prx,superscriptaddress]{revtex4-2}

\usepackage[dvipsnames]{xcolor}
\usepackage[T1]{fontenc}
\usepackage{graphicx}
\usepackage{float}
\usepackage{dcolumn}
\usepackage{bm}
\usepackage{amsfonts,amssymb,amscd,amsmath,amsthm}
\usepackage{mathrsfs}
\usepackage[ruled,linesnumbered]{algorithm2e}
\usepackage{algpseudocode}
\usepackage{enumerate}
\usepackage{epsfig}
\usepackage{caption}
\usepackage{subfig}

\usepackage{dsfont}
\usepackage{physics}
\usepackage{makecell}
\usepackage{bbm}
\usepackage{svg}
\usepackage{tikz}
\usepackage[colorlinks = true]{hyperref}
\hypersetup{colorlinks=true, linkcolor=blue, citecolor=blue, urlcolor=black}
\usepackage{amsmath, amsthm, amsfonts, amssymb, amsbsy, mathtools, xcolor, bm, bbm, graphicx, changepage, cleveref}

\usepackage{epstopdf}
\usepackage{framed}
\usepackage{multirow}
\usepackage{color}
\usepackage{comment}
\usepackage{qcircuit}
\usepackage{faktor}
\usepackage[normalem]{ulem}

\makeatletter
\newtheorem*{rep@theorem}{\rep@title}
\newcommand{\newreptheorem}[2]{%
\newenvironment{rep#1}[1]{%
 \def\rep@title{#2 \ref{##1}}%
 \begin{rep@theorem}}%
 {\end{rep@theorem}}}
\makeatother

\newtheorem{theorem}{Theorem}

\newtheorem{lemma}{Lemma}

\newtheorem*{assumption*}{Assumption}
\newtheorem{corollary}{Corollary}

\usepackage{lineno}

\DeclareMathOperator{\support}{\mathrm{supp}}

\newcommand{\normH}[1]{{\left\vert\kern-0.25ex\left\vert\kern-0.25ex\left\vert #1
   \right\vert\kern-0.25ex\right\vert\kern-0.25ex\right\vert}}

\newcommand{\mc}{\mathcal}

\begin{document}
\title{Entanglement-Induced Resilience of Quantum Dynamics}
 \author{Tianfeng Feng}
 \address{QICI Quantum Information and Computation Initiative, Department of Computer Science, The University of Hong Kong, Pokfulam Road, Hong Kong SAR, China}
\author{Yue Cao}
\address{QICI Quantum Information and Computation Initiative, Department of Computer Science, The University of Hong Kong, Pokfulam Road, Hong Kong SAR, China}
\author{Wenjun Yu}
\address{QICI Quantum Information and Computation Initiative, Department of Computer Science, The University of Hong Kong, Pokfulam Road, Hong Kong SAR, China}
\author{Junkai Zeng
}\address{Shenzhen International Quantum Academy, Shenzhen 518048, China}
\author{Xiaopeng Li}
\email{xiaopeng_li@fudan.edu.cn}
\address{State Key Laboratory of Surface Physics, Key Laboratory of Micro and Nano Photonic Structures (MOE),
and Department of Physics, Fudan University, Shanghai 200433, China}

\author{Xiu-Hao Deng}
\email{dengxiuhao@iqasz.cn}
\address{Shenzhen International Quantum Academy, Shenzhen 518048, China}
\address{Shenzhen Branch, Hefei National Laboratory, Shenzhen, 518048, China}
\author{Qi Zhao}
\email{zhaoqi@cs.hku.hk}
\address{QICI Quantum Information and Computation Initiative, Department of Computer Science, The University of Hong Kong, Pokfulam Road, Hong Kong SAR, China}
\address{Shenzhen International Quantum Academy, Shenzhen 518048, China}
\date{\today}

\begin{abstract}
Quantum many-body devices suffer from imperfections that destabilize dynamics and limit scalability. We show that the dynamical growth of entanglement can intrinsically protect generic quantum dynamics against coherent and perturbative noise. Through rigorous theoretical analysis of general quantum dynamics and numerical simulations of spin chains and fermionic lattices, we prove that entanglement-entropy growth confines the influence of local Hamiltonian perturbations, thereby suppressing errors in dynamical errors. The degree of protection correlates quantitatively with the entanglement entropy of subsystems on which the perturbations act, and applies broadly to both analog quantum simulators and real-time control protocols. This entanglement-induced resilience is conceptually distinct from quantum error correction or dynamical decoupling: it passively leverages native many-body correlations without additional qubits, measurements, or control overhead. Our results reveal a generic mechanism linking entanglement growth to dynamical stability and provide practical guidelines for designing noise-resilient quantum devices.

\end{abstract}
\maketitle

Quantum many-body systems hold great promise for advancing technologies ranging from analog quantum simulators to fault-tolerant quantum computers \cite{RevModPhys.86.153Nori,preskill2012quantumcomputingentanglementfrontier,buluta2009quantum, QECPhysRevA.57.127}. However, a major obstacle limits their practical realization: all real-world systems suffer from imperfections. Static disorder—such as fabrication-induced variations in qubit parameters—may dramatically alter quantum phases  \cite{MBLRevModPhys.91.021001,AndersonPhysRev.109.1492,heyl2018dynamical,wang2015fidelity}. Equally problematic is crosstalk between components or control pulse distortions, which creates structured errors that resist conventional error suppression methods \cite{ crosstalkPRXQuantum.3.020301, doria2011optimal,altafini2012modeling, schwenk2017estimating}. These imperfections destabilize quantum dynamics, threatening the scalability of quantum devices. Current solutions like error-correcting codes or dynamical decoupling often require significant overhead \cite{fukui2018high,decouplingPhysRevLett.82.2417,PhysRevX.6.041034}. 
This raises a critical question: Can quantum systems naturally exhibit inherent resilience against these coherent or perturbative imperfections without external intervention?

\begin{figure}
\centering
\includegraphics[width=0.43\textwidth]{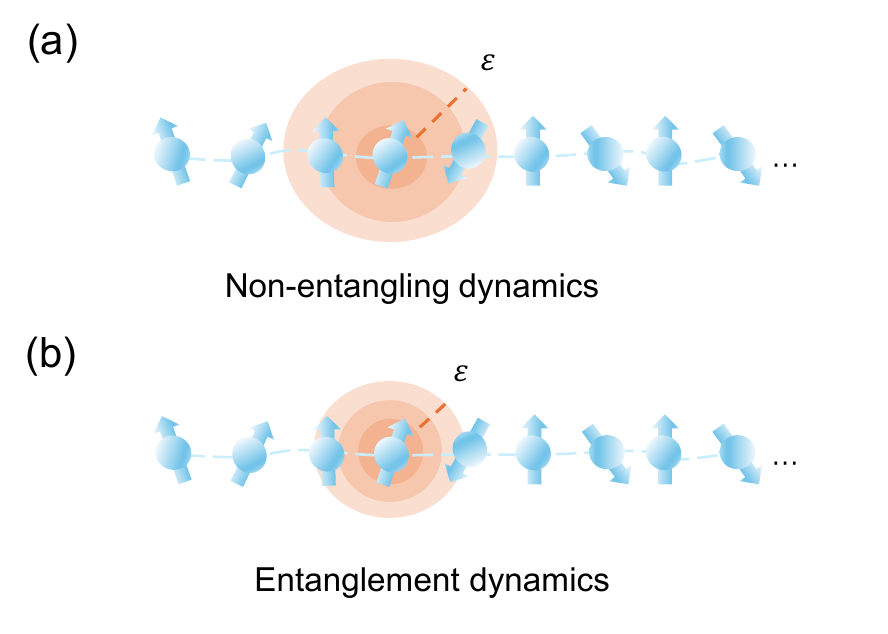}
    \caption{Entanglement-induced resilience of quantum dynamics. The red shaded band shows the error envelope. Consider a local error Hamiltonian acting on a spin chain, where the error terms involve $k$ near-neighbour qubits and comprise  $\text{poly}(k)$ Pauli operators with small coefficients. (a) When the quantum system exhibits weak or no entanglement, the dynamical error may scale linearly with the number of error terms, i.e., $\text{poly}(k)$. (b) In contrast, sufficient entanglement suppresses the dynamical error, leading to a scaling proportional to the square root of the number of error terms, i.e., $\sqrt{\text{poly}(k)}$. }
    \label{fig:entanglement}
\end{figure}

Entanglement is a hallmark of quantum correlations that plays a central role in the dynamics of many-body systems \cite{VVRevModPhys.80.517, EntanglementRevModPhys.81.865, JensRevModPhys.82.277, MBLRevModPhys.91.021001}. It underpins a broad range of tasks in quantum information processing and quantum computing \cite{bennett1993teleporting,jozsa1997entanglement,giovannetti2011advances}. Recent studies suggest that, in certain regimes,  entanglement might play a stabilizing role in simulating quantum systems \cite{Zhao_2025NP,feng2025trotterizationoperatorscramblingentanglement}. In digital quantum simulations \cite{lloydUniversalQuantumSimulators1996,childsTheoryTrotterError2021}, the generation of entanglement helps reduce the algorithmic errors caused by the discretization of time evolution (Trotterization) \cite{Zhao_2025NP}. 
However, digital quantum simulation remains experimentally challenging, and despite extensive efforts \cite{lloydUniversalQuantumSimulators1996,childsTheoryTrotterError2021,lowHamiltonianSimulationQubitization2019,berrySimulatingHamiltonianDynamics2015}, a significant gap persists between current capabilities and the requirements for scalable, fault-tolerant implementations \cite{childsFirstQuantumSimulation2018}. In contrast, analog simulation provides a promising near-term platform not only for strongly coupled fermionic systems such as the Fermi-Hubbard model \cite{arovas2022hubbard,esslinger2010fermi,hart2015observation, shao2024antiferromagnetic} but also for a variety of other many-body models, including spin models \cite{bernien2017probing,simon2011quantum}, bosonic systems \cite{gross2017quantum,yang2020observation}, and lattice gauge theories 
\cite{ wiese2013ultracold,jordan2012quantum,zhou2022thermalization}. 

While most analog quantum simulations have traditionally focused on ground-state properties and the characterization of quantum phases, our research emphasizes the (real-time) dynamical behavior of quantum systems and simulators. A fundamental question arises: does entanglement universally enhance robustness in continuous quantum dynamics—such as in analog simulations or real-time control protocols?
Specifically, does entanglement passively coexist with reliable dynamics, or does it actively suppress intrinsic dynamical errors? This question is especially pertinent in many-body systems where imperfection and disorder intricately interact with complex spatial-temporal correlations.

In this work, we reveal that the dynamic growth of entanglement provides intrinsic protection against perturbations, including disorder and crosstalk couplings of unitary quantum dynamics. A conceptual overview of our results is presented in Fig. \ref{fig:entanglement}, where error accumulation is suppressed through entanglement.
By combining analytical models and numerical simulations (spin chains, fermionic lattices), we show that rapidly increasing entanglement entropy confines local perturbations, preventing their spread. This mechanism applies broadly to analog quantum simulations and beyond, where greater entanglement correlates with higher operational fidelity.
Furthermore, to bridge the gap between theory and experiment, we introduce a dynamical entanglement detection scheme based on error subspaces. We demonstrate that measuring experimentally accessible correlation functions of the noise terms serves as an effective probe to diagnose whether the dynamics is protected by entanglement, circumventing the need for full quantum state tomography.
Our results reshape the understanding of entanglement’s role in realistic dynamics: while static entanglement is often fragile under local, uncorrelated noise, its dynamic generation in many-body evolution can inherently suppress coherent errors—distinct from engineered quantum error correction. We identify this as a natural error suppression mechanism embedded in many-body quantum dynamics. Practically, these insights offer guidelines for designing robust quantum protocols: intentionally engineering entanglement growth through tailored interactions or control sequences can enhance resilience without additional resources such as extra qubits or syndrome measurements.
Furthermore, larger entanglement regions are classically hard to simulate \cite{cirac2021matrix,orus2019tensor}, which not only enhances the quantum advantage but also suggests that extensive entanglement can further bolster error suppression and computational power in many-body systems. These findings immediately benefit analog quantum simulators by leveraging native entanglement dynamics to improve robustness without additional error correction \cite{mcewen2021removing,harrington2025synchronous}.\\


\section{Entanglement induces dynamic resilience}

\textbf{Disorders and imperfections in Quantum dynamics.---}
Errors in quantum dynamics arise from perturbations—imperfections of coupling and disorder—that perturb the real-time evolution of quantum states. These dynamical errors are ubiquitous across platforms, not only in analog quantum simulators but also in digital emulation and real-time control protocols. One can model the perturbed evolution by
   $ H'(t)=H_0(t)+H_{\text{pert}}(t)$, yielding the evolved state 
  $  \ket{\psi'(t)}=\mathcal{T}e^{-i\int H^\prime(t) dt}\ket{\psi(0)}$ with $\mathcal{T}$ the time-ordering operator. 

Generally, perturbations from stochastic disorder and  imperfections  are unified under a generalized perturbative Hamiltonian term
\begin{equation}
   H_{\text{pert}}(t) = \sum_k \delta_k V_k + \sum_m \eta_m N_m,
\end{equation}
where $\delta_k$ parameterizes random disorder (e.g., pulse distortions in quantum control or lattice potential fluctuations in analog simulations), while $\eta_m N_m$ represents time-invariant noise from systematic errors, such as fabrication-induced variations, residual crosstalk couplings in multi-qubit gates \cite{crosstalkPRXQuantum.3.020301} or the effective Trotterization artifacts in digital algorithms \cite{childsTheoryTrotterError2021}). 
Crucially, disorder and imperfection strength bifurcate dynamical responses: Weak perturbation 
induces perturbative energy shifts without altering phase topology, as seen in trapped-ion spectral jitter; Strong disorder or imperfection, 
however, may drive symmetry-breaking or topological phase transitions---MBL shifts entanglement scaling from volume-law to area-law in spin chains \cite{MBLRevModPhys.91.021001}.


\begin{figure*}
    \centering
    \subfloat[]{
    \includegraphics[width=0.42\linewidth]{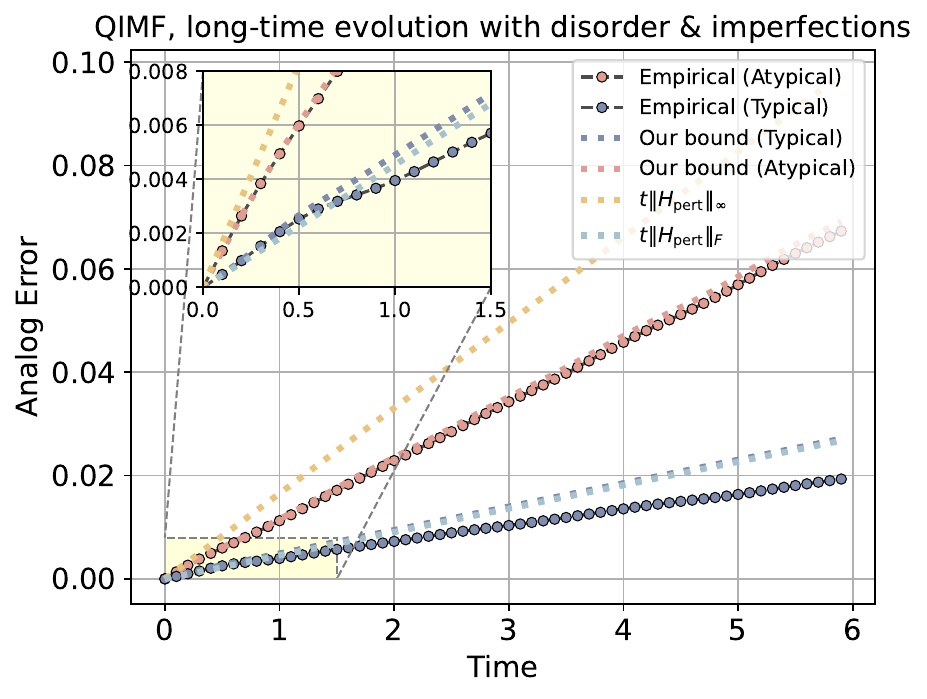}
    }
    \subfloat[]{
    \includegraphics[width=0.47\linewidth]{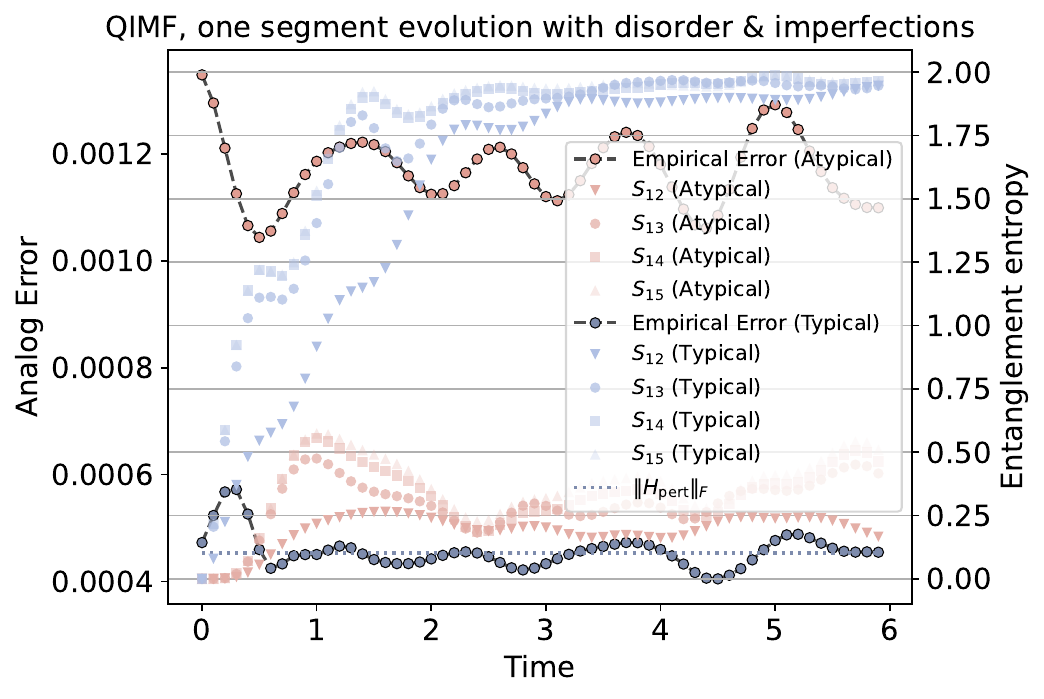}
    }
    \caption{Error of Analog quantum simulation of 1D QIMF model in typical and atypical cases. The perturbation Hamiltonian is comprised of disorder terms with coefficients $\delta_i\in\mathcal N(0,0.01)$ and an imperfection term $\eta=0.01$. (a) Long-time analog simulation error. In typical input state, where the entanglement entropy grows during the system's evolution, our estimate closely matches the average-case performance, scaling with $\norm{H_{\text{pert}}}_F$. Notably, at early times when entanglement has not yet developed sufficiently, the error curve exhibits a slight curvature with a slope higher than the average-case estimate. However, after the entanglement saturates (around $t\approx0.5$), the error slope aligns with the average case, and our bound and the average-case bound run approximately parallel. In contrast, in an atypical input state, where the entanglement entropy remains low throughout the evolution, the error scaling deviates significantly from the average case and approaches the worst-case bound characterized by the spectral norm. 
(b) One-segment simulation error, $  \|(U_0(\delta t)-U(\delta t))\ket{\psi(t)}\|_F$, evaluated for different initial states with $\delta t=0.1$. The corresponding entanglement entropy of two qubits ($S_{1,2},~S_{1,3},~S_{1,4},~S_{1,5}$) is also shown, illustrating how entropy growth correlates with error suppression. }
    \label{fig:mainplaceholder}
\end{figure*}

\textbf{Entanglement-induced resilience of quantum dynamics.---} 
Without loss of generality, we begin with the time-dependent scenario, where the time-independent case can be treated as a special simplification.  
Suppose $H_0(t)$ is the ideal Hamiltonian and $H^\prime (t)$ is the noisy Hamiltonian with perturbation, the evolution of the ideal and noisy unitaries are given as $U_0(t)=\mathcal{T} e^{-i\int_{s=0}^{t}d{s}H_0(s)}$ and  $U(t)=\mathcal{T} e^{-i\int_{s=0}^{t}d{s}H^\prime(s)}$ respectively.
Here, we would like to quantify the distance between $U_0(t)$ and $U(t)$ with initial state $\ket{\psi(0)}$. 
To explore the role of entanglement in the quantum dynamics, 
we utilize the vector norm $\norm{(U_0(t)-U(t))\ket{\psi(0)}}$ to track the quantum dynamics errors (Here the vector norm is defined as $\norm{X\ket{\psi}}=\sqrt{\bra{\psi}X^\dagger X\ket{\psi}}$, and the spectral norm $\norm{X}=\text{max}_{\ket{\psi}}\norm{X\ket{\psi}}$, capturing the worst input state case).  
It is shown that $W(t)=U(t)^\dagger U_0(t)$ is the unique solution of $ i \partial t W(t) =U (t)^\dagger (H(t)-H^\prime(t))U(t ) W(t)$ with $W(0)=I$ \cite{HHKL8555119,doi:10.1126/science.1121541}, by unitary invariant property of vector norm and the triangle inequality, we can bound the dynamic error as (see section A in Materials and Methods)
\begin{equation}
\begin{split}
\label{maineq2}
    \norm{(U_0(t)-U(t))\ket{\psi}} \le\int_0 ^t d\tau \norm{\big[  H^\prime (\tau)-H_0(\tau)\big]\ket{\psi(\tau)}}.
        \end{split}
\end{equation}
This upper bound on the error is governed by the Hamiltonian mismatch $H^\prime(\tau)-H_0(\tau)$, which in general arises for time‑dependent Hamiltonians. Here $\ket{\psi(\tau)}$ denotes the ideally evolved state at time $\tau$. 
Since the perturbed Hamiltonian can be expressed as
$H^\prime(\tau)=H_0(\tau)+H_{\text{pert}}(t)$, 
The upper bound of the dynamic error now is given as $\int_0 ^t d\tau \norm{ H_{\text{pert}}(t) \ket{\psi(\tau)}}$, i.e.,
\begin{equation}
\label{Eq:maineq}
\begin{split}
         \int_0 ^t d\tau \sqrt{\bra{\psi(\tau)} H_{\text{pert}}^\dagger (t)H_{\text{pert}}(t) \ket{\psi(\tau)}}.\\
\end{split}
\end{equation}
This bound can characterize errors in various dynamical processes, including real-time quantum control. 

To better characterize this bound, we connect it to the entanglement entropy of the quantum state. For simplicity, here we consider the perturbation $H_{\text{pert}}$
to be time-independent 
and consists of local perturbative terms $H_{\text{pert}}=\sum_j {H_{\text{pert}}}_j$
(We put the time-dependent analysis in Methods.)
It is shown that the expectation value of a positive semi-definite operator can be related to the entanglement entropy of a quantum state ~\cite{Zhao_2025NP}.
Specifically, let $A=\sum_j A_j$ be a positive semi-defined operator, acting on $N$ qubits, where $A_j$ acts nontrivially on the subsystem $\support(A_j)$, then
 $ | \bra{\psi}A \ket{\psi}|\le \frac{\Tr(A)}{d}+  \Delta_{A(\psi)}, $
where $\Delta_{A(\psi)} =\sum_{j} ~\| A_j\| \sqrt{2\log(d_{\support(A_{j})})-2S(\rho_{j})},$
$\rho_{j}:=\Tr_{[N]\setminus \support(A_j)}(\ket{\psi}\bra{\psi})$ is the reduced density matrix of $\ket{\psi}\bra{\psi}$ on the subsystem of $\support(A_j)$, and $\text{S}(\rho_{j})$ is the entanglement entropy of $\rho_{j}$ ~\cite{Zhao_2025NP}. Since $H_{\text{pert}}^\dagger H_{\text{pert}}=\sum_{j,j^\prime}{H_{\text{pert}}^\dagger}_j{H_{\text{pert}} }_{j^\prime} $ is a positive operator, the bound of error of the quantum dynamics is now given by the degree of entanglement entropy of various partitions, e.g., $ t\norm{H_{\text{pert}}}_F+\int_0 ^t d\tau \sqrt{\Delta_{{H^\dagger_{\text{pert}}}{H_{\text{pert}}}} (\psi(\tau))}$,
where $\norm{X}_F=\sqrt{\Tr(X^\dagger X)/d}$, where $d$ is the dimension of the operator $X$ (see Supplementary Section A).



Generally, quantum states tend to be thermalized after evolution, i.e., the entropy of the subsystem grows universally. 
Suppose $t=c$, state entanglement is growing to case: for $t\ge c$, $\Delta_{H_{\text{pert}}}(\psi(\tau))\approx0$ so that $\norm{H_{\text{pert}} \ket{\psi(\tau)}}\approx \norm{H_{\text{pert}} }_F$. In this situation,
we have

\begin{equation}
\begin{split}
         \norm{(U_0(t)-U(t))\ket{\psi}}&\lesssim 
                 \int_0 ^c d\tau \sqrt{\bra{\psi(\tau)} H_{\text{pert}} ^\dagger  H_{\text{pert}}  \ket{\psi(\tau)}}\\
                 &\quad+(t-c) \norm{H_{\text{pert}} }_F.\\
\end{split}\label{Eq:mainlongtime}
\end{equation}
Here $c$ can be efficiently estimated by measuring the expectation value of $\bra{\psi(t)}H_{\text{pert}}^\dagger H_{\text{pert}}\ket{\psi(t)}$. Typically, $c$ is a small constant; as shown in Fig.~\ref{fig:mainplaceholder}, we find $c\approx 0.5$ in the 1D quantum Ising model with a transverse field with specified parameters (for more numerical details, see the analog quantum simulation section).
When $t\ge c$, the error scaling exhibits average-case behavior, analogous to that for random input states, such as Haar-random states or states forming a unitary 1-design, e.g. 
           $\int_{\psi\in \text{Harr}} d\psi \norm{ (U_0(t)-U(t))\ket{\psi}} \le   t \norm{H_{\text{pert}} }_F$ (see  Supplementary Section A).
Therefore, our entanglement-induced bound achieves the average-case scaling when $t\ge c$.

In general, the perturbation Hamiltonian $H_{\mathrm{pert}}$ whose error scales at most polynomially with the system size $N$, i.e., $\norm{H_{\mathrm{pert}}} \propto N^{p}$ (with $p=1$ for linear scaling).
Our results show that entanglement confers intrinsic robustness against dynamical errors, so that error bounds are expressed in the normalized Frobenius norm $\norm{H_{\mathrm{pert}}}_F$ instead of the spectral norm $\norm{H_{\mathrm{pert}}}$; this, in turn, affords tighter and more reliable analyses of quantum dynamics since $\norm{H_{\mathrm{pert}}}_F\le\norm{H_{\mathrm{pert}}}$. For example, if $H_{\mathrm{pert}}$ comprises $N$ bounded error terms, then one generally has $\norm{H_{\mathrm{pert}}} \propto N $ and $\norm{H_{\mathrm{pert}}}_{F} \propto \sqrt{N}$. This indicates that our entanglement-based analysis suggests a potential quadratic improvement over the worst-case scenario, as the normalized Frobenius norm typically captures average performance \cite{zhaoHamiltonianSimulationRandom2021,chen2024average, kahanamokumeyer2025logdepthinplacequantumfourier}.

Besides, the statistical properties of disorder $\sum_k\delta_k V_k$ may also cancel some errors in quantum dynamics \cite{https://doi.org/10.4230/lipics.tqc.2024.2}.
Suppose $\delta_k$ is a random parameter of  Gaussian distribution with $\mathbb{E} \delta _k=0$ and $\text{Var}(\delta_k)=\sigma^2_k$. By considering the nature of $\delta_k$, the density matrix of the real evolved state is given as 
$\rho(t)=  \mathbb{E}_{\{\delta_1,\delta_2,...\}}U (t)\ket{\psi}\bra{\psi}U^\dagger(t)$.
We find and numerically test that the upper bound of the trace distance $\norm{ \rho_0(t)-\rho(t) }_1$ is determined by the variance of disorder and the imperfection of coupling, which mainly contribute to the error of the evolved density matrix, where entanglement only has an inhibitory effect on the imperfection crosstalk terms (See section C in Materials and Methods).

\section{Analog quantum simulation}

In this section, we focus on analog quantum simulation \cite{buluta2009quantum,farhiAnalogAnalogueDigital1998,daley2022practical}. Building on the general analysis of quantum dynamics presented above, we numerically demonstrate that entanglement plays a pivotal role in suppressing simulation errors.
We first test the one-dimensional (1D) quantum Ising model with a transverse field (QIMF). Specifically, the Hamiltonian of 1D QIMF is given as, 

\begin{equation}
    H_0=h_x\sum_{i=1}^NX_i+h_y\sum_{i=1}^NY_i+J\sum_iX_iX_{i+1},
    \label{eq:QIMF}
\end{equation}
 where the parameters $h_x=0.809,h_y=0.9045,J=1$ \cite{PRXQuantum.4.010311}. 
We assume that the empirical Hamiltonian is $H=H_0+H_{\text{pert}}$.
The source of perturbation $H_{\text{pert}}$ includes stochastic disorder $H_{\text{dis}}=\sum_k\delta_k V_k$ and
intrinsic imperfections $H_{\text{imp}}=\sum_m \eta_m N_m$, i.e., $H_{\text{pert}}=H_{\text{dis}}+H_{\text{imp}}$. 
In our numerical tests, the disorder term $H_{\text{dis}}$ is set as a combination of $N$ single-qubit Pauli operators $X_i$'s, i.e., $H_{\text{dis}}=\sum_{i=1}^N\delta_iX_i$, where $\delta_i$ is randomly chosen with respect to a normal distribution and $X_i$ represents Pauli $X$ acting on the $i$-th qubit. The imperfection term is set as $H_{\text{imp}}=\eta\sum_{i=1}^{N-1}X_iX_{i+1}$, where $\eta$ is a constant.

We compare the simulation error 
for a typical initial state versus an atypical initial state. 
Fig. \ref{fig:mainplaceholder} (a) demonstrates the tightness of our theoretical result for long-time analog simulations. In typical scenarios ($\ket{\psi(0)}=\ket0^{\otimes N}$), where the entanglement entropy grows during the system's evolution, our estimate closely matches the average-case performance, scaling with $\norm{H_{\text{pert}}}_{F}$. In contrast, in atypical cases ($\ket{\psi(0)}=\ket+^{\otimes N}$), where entanglement remains low, the error scaling deviates significantly and is closer to the worst-case spectral norm bound $\norm{H_{\text{pert}}}$. 
Fig. \ref{fig:mainplaceholder} (b) further illustrates how entropy growth correlates with error suppression: rapid entropy increase corresponds to error reduction near the average estimate, while low entropy is associated with larger errors. This comparison highlights the role of entanglement in suppressing analog simulation errors, as captured by our upper bounds. Additional validation in a 2D QIMF model is provided in Supplementary Section C.


\begin{figure*}
    \centering
    \subfloat[]{
        \includegraphics[width=0.45\linewidth]{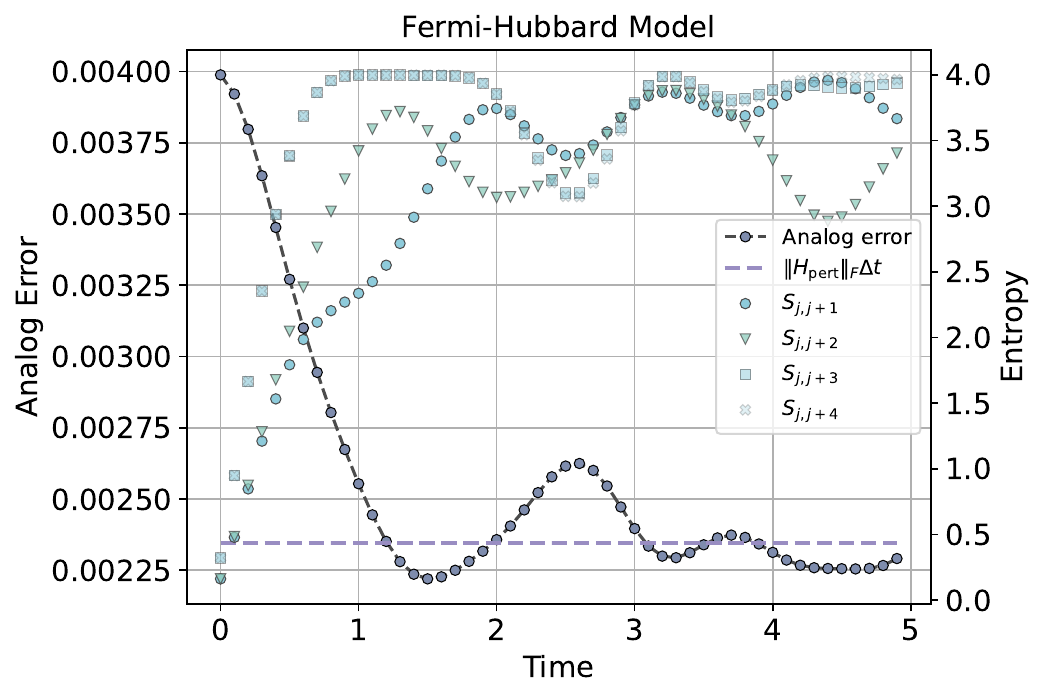}}
    \subfloat[]{
    \includegraphics[width=0.45\linewidth]{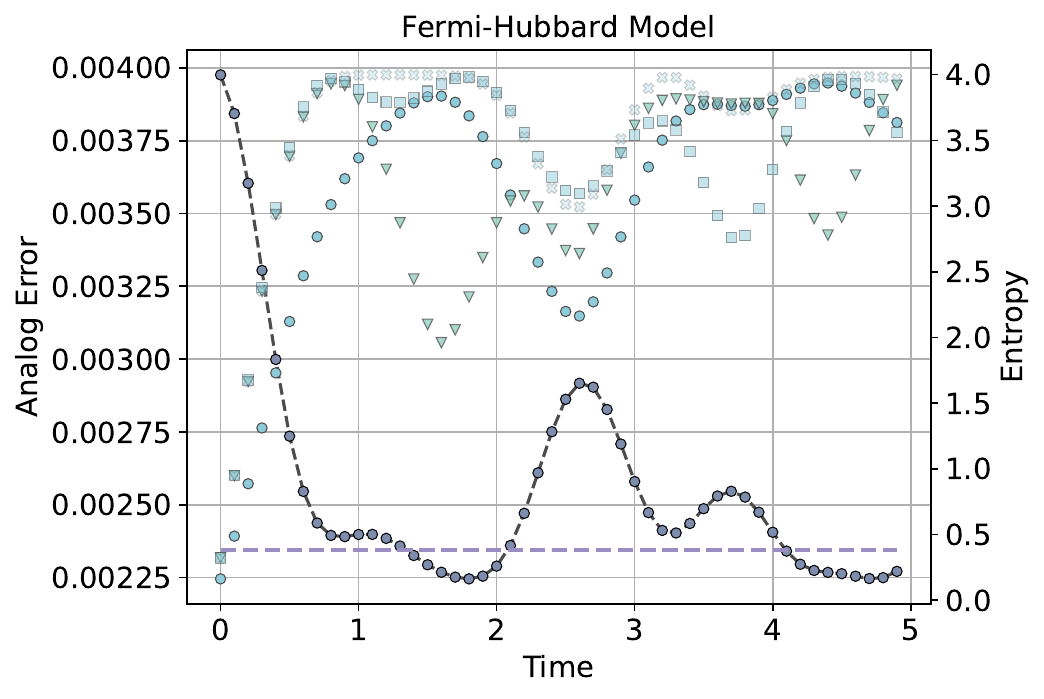}}
    
    \subfloat[]{
    \includegraphics[width=0.45\linewidth]{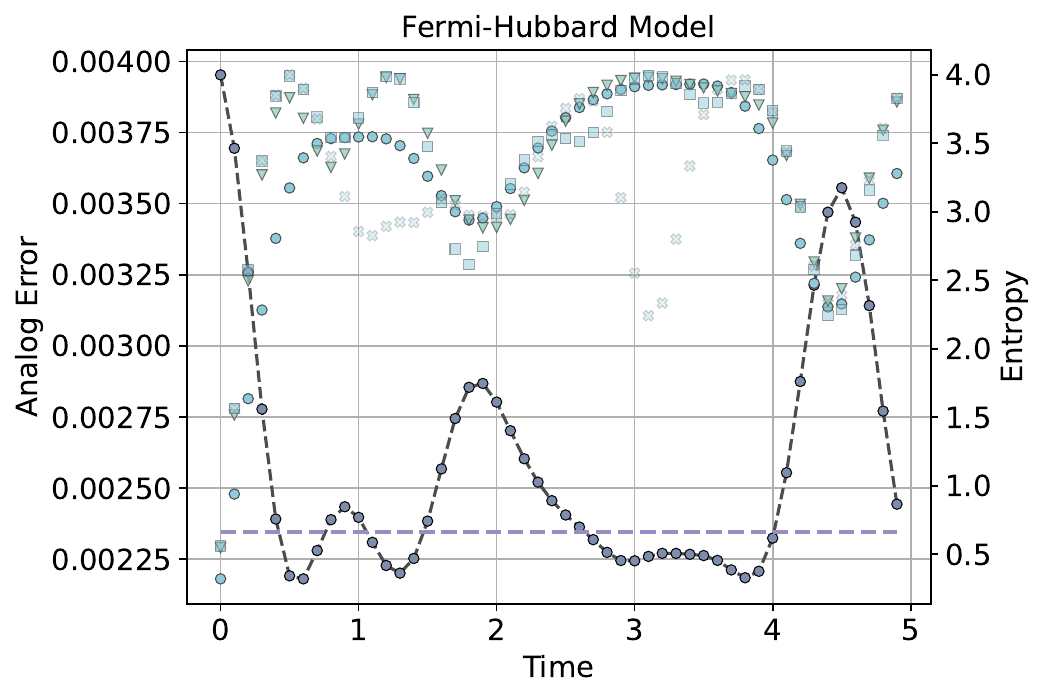}}
    \subfloat[]{
    \includegraphics[width=0.45\linewidth]{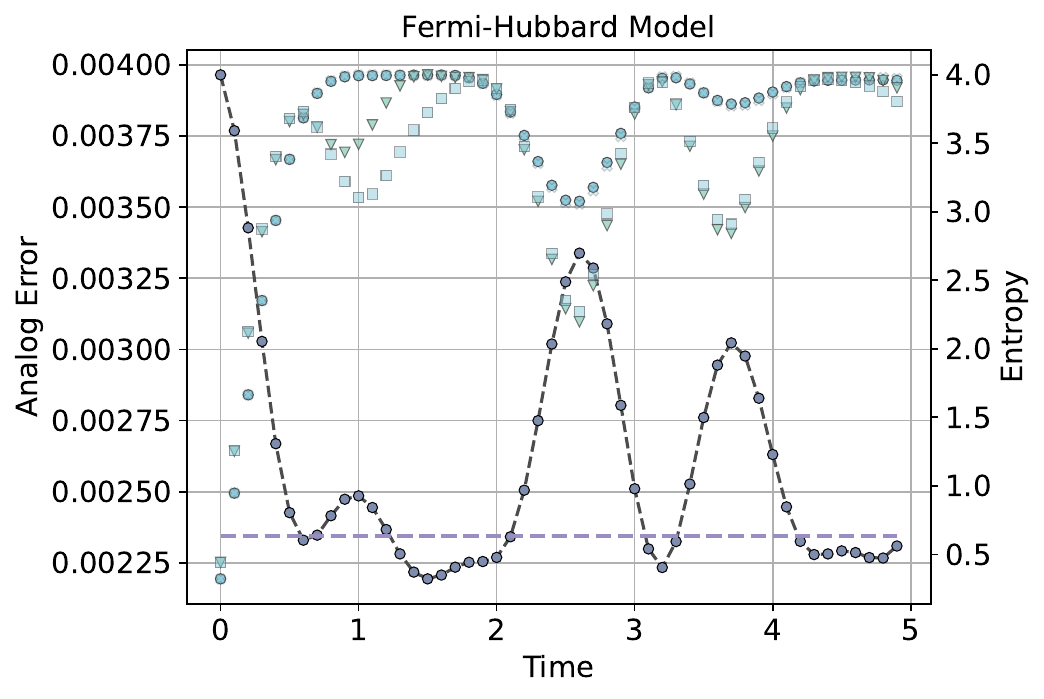}
    }

    \caption{Error of one-segment analog simulation with Fermi-Hubbard model. 
    The vertical axis represents the corresponding empirical error for each individual segment with segment length $\Delta t=0.1$. 
    The system scale is set as $L=8$ and the noise strength is $\delta=0.01$.  The initial state is prepared as (a) $\ket{\uparrow\downarrow}_1\ket{\uparrow\downarrow}_2\ket{\uparrow\downarrow}_3\ket{\uparrow\downarrow}_4$, (b) $\ket{\uparrow\downarrow}_1\ket{\uparrow\downarrow}_2\ket{\uparrow\downarrow}_4\ket{\uparrow\downarrow}_5$, (c) $\ket{\uparrow\downarrow}_1\ket{\uparrow\downarrow}_3\ket{\uparrow\downarrow}_5\ket{\uparrow\downarrow}_7$, (d) $\ket{\uparrow\downarrow}_1\ket{\uparrow\downarrow}_3\ket{\uparrow\downarrow}_4\ket{\uparrow\downarrow}_6$. The entanglement entropy is defined on the subsystem of two sites numbered as $(j,j+1),~(j,j+2),~(j,j+3),~(j,j+4)$, respectively. Entanglement entropy and the simulation error display an inverse relationship: as the entanglement entropy increases to the maximum 4 during the evolution, the simulation error reduces to the Frobenius norm level; when the entanglement entropy drops, the error exhibits sharp spikes.}
    \label{mainfig:fermi}
\end{figure*}

Another case study is the Fermi-Hubbard model, which is fundamental in condensed matter physics, used to describe interacting fermions on a lattice \cite{arovas2022hubbard,esslinger2010fermi,hart2015observation, shao2024antiferromagnetic}. It captures the competition between kinetic energy and on-site Coulomb repulsion, providing insights into phenomena such as magnetism, metal-insulator transitions, and high-temperature superconductivity. We consider a 1D Fermi-Hubbard chain with periodic boundary conditions, governed by the Hamiltonian
\begin{equation}
    H_0=-\sum_{i=1}^L\sum_{\sigma}(a_{i,\sigma}^\dagger a_{i+1,\sigma}+\text{H.c.})+V\sum_{i=1}^La_{i,\uparrow}^\dagger a_{i,\uparrow}a_{i,\downarrow}^\dagger a_{i,\downarrow},
\end{equation}
where $\sigma=\uparrow, \downarrow$, $V$ denotes the on-site Coulomb repulsion, $a_{i,\sigma},a_{i,\sigma}^\dagger$ represent annihilation and creation operators, respectively.
These operators are characterized by the anticommutation expressed as
\begin{equation}
    \{a_i,a_j^\dagger\}=\delta_{ij},\quad \{a_i,a_j\}=\{a_i^\dagger,a_j^\dagger\}=0.
\end{equation}
The dynamics of Fermi-Hubbard model keeps the particle number $N=\sum_i(a^\dagger_{i,\uparrow}a_{i,\uparrow}+a^\dagger_{i,\downarrow}a_{i,\downarrow})$ and the z-component of the total spin $S_z=\sum_i(a^\dagger_{i,\uparrow}a_{i,\uparrow}-a^\dagger_{i,\downarrow}a_{i,\downarrow})$ invariant, with periodic boundary condition ensures the lattice translational symmetry.

In our numerical simulations, we introduce an additional Coulomb perturbation (error term) to the Hamiltonian, i.e.,
\begin{equation}
H_{\text{pert}}=\delta\sum_{i=1}^La_{i,\uparrow}^\dagger a_{i,\uparrow}a_{i,\downarrow}^\dagger a_{i,\downarrow}.
\label{FH}
\end{equation}
To analyze the error at each time step, we discretize the evolution into multiple short segments, each with a duration of $\Delta t=0.1$. The one-segment simulation error is defined as $\norm{U(\Delta t)-U_0(\Delta t)\ket{\psi(t)}}$. Here the initial state is set as $\otimes_{i=1}^4\ket{\uparrow\downarrow}_{i_1}$, which produces the maximal simulation error with noise model $H_{\text{pert}}$ and particle number $N=8$. 
The entanglement entropy is defined on subsystems with two sites (see Materials and Methods for more details). 
Fig.~\ref{mainfig:fermi} depicts the relationship between one-segment simulation errors and entanglement entropy. A clear inverse correlation is observed: as entanglement entropy increases, the dynamical error decreases; conversely, as entanglement entropy decreases, the error increases. This trend highlights the inverse correlation between entanglement growth and dynamical error, which is perfectly match our  theoretical analysis. We present the numerical results for the one-segment error at longer times in Fig. \ref{fig:longtimeFH} (Materials and Methods).

\section{Dynamical entanglement detection with error subspaces}

Our results show that, although the error bound is formally captured by measurements of $H_{\text{pert}}^\dagger H_{\text{pert}}$, the unknown coefficients of individual noise terms make such measurements impractical. What is experimentally accessible, however, is the \emph{structure} of the noise---the Pauli operators dictated by the device architecture.

This suggest measuring the expectations of these structure terms—akin to correlation functions—probes entanglement across specified partitions. For any traceless local Pauli operator $P$, the expectation $\langle\psi|P|\psi\rangle$ approaches zero when the reduced state on its support becomes nearly maximally mixed, indicating strong entanglement. The operator $H_{\text{pert}}^\dagger H_{\text{pert}}$ contains cross terms
${H_{\text{pert},j}^\dagger H_{\text{pert},j'}}/{\|H_{\text{pert},j}\|\,\|H_{\text{pert},j'}\|}$
with coefficient–independent weights, making their expectation values experimentally accessible correlation probes.
Specifically, if for all pairs $j \neq j'$,
$
\langle \psi(t)| H_{\text{pert},j}^\dagger H_{\text{pert},j'} + \text{H.c.} |\psi(t) \rangle/{\|H_{\text{pert},j}\|\,\|H_{\text{pert},j'}\|} \approx 0
$, the error behavior is Frobenius-norm–like, consistent with substantial entanglement within the supports $\support{H_{\text{pert},j}}$. By contrast, significant deviations from zero suggest limited entanglement, and the error may be close to the spectral norm case.


Figure~\ref{fig:entangle_det} illustrates these ideas in a disordered 1D QIMF with control imperfections (the same model as in the first analog setting). For typical initial states at $t=6$, more than 170 pairs of 
cross-term operators exhibit expectations near zero,
indicating substantial entanglement generated during the evolution and consistent with Frobenius-norm–like error behavior. In contrast, atypical initial states yield many nonzero expectations, indicating limited entanglement growth. These results show that measuring such correlation functions provides an effective diagnostic of dynamical entanglement across the partitions of interest \cite{Hauke_2016}.

Note that this criterion provides a necessary—but not sufficient—condition for strong entanglement; conversely, a significant nonzero value is sufficient to rule out high entanglement (a separability witness).
To reliably confirm the entanglement, we recommend measuring additional observables—at least two Pauli operators (e.g., $XX$ and $ZZ$), whose joint statistics provide complementary constraints on the structure of the state and yield a more robust indication of genuine entanglement.

\begin{figure}
    \centering
    \includegraphics[width=0.85\linewidth]{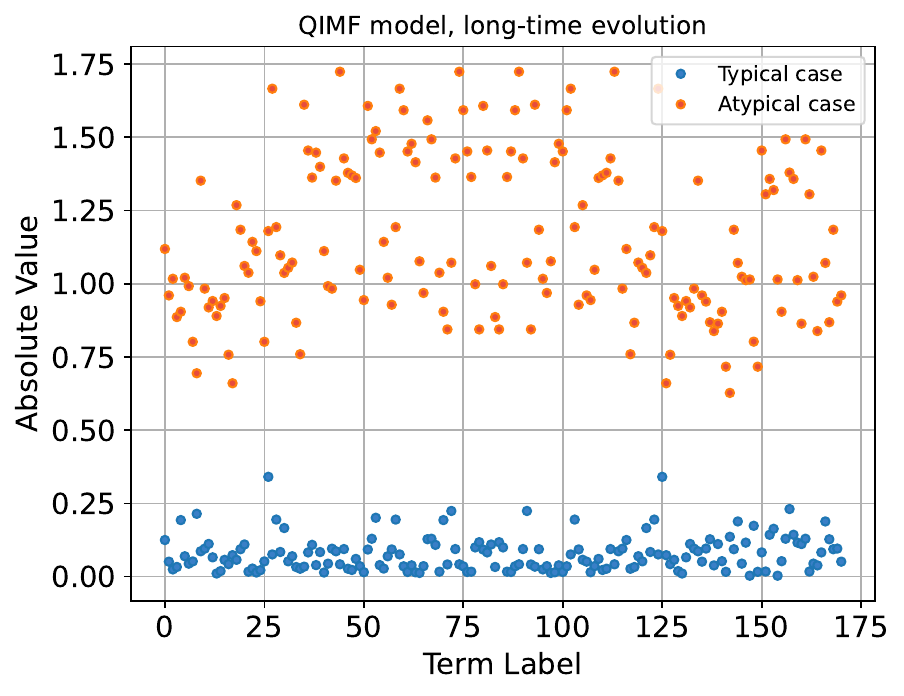}
    \caption{The contribution of Hermitian cross terms for both the typical and the atypical cases. The noise terms include both disorder and control imperfections. The Hermitian contribution is defined as ${\bra{\psi(t)} {{H^\dagger}_{\text{pert} j}} H_{\text{pert}j^\prime}+\text{H.C.}\ket{\psi(t)}}/{\norm{{{H^\dagger}_{\text{pert} j}} } \norm{H_{\text{pert}j^\prime}}}$, where $\ket{\psi(t=6)}$ is the evolved state for typical and atypical cases and $H_{\text{pert}j^\prime}$ is a Pauli string component in the perturbation Hamiltonian. These observables track whether the error behaves close to the Frobenius norm (typical) or drifts toward spectral‑norm (atypical) scaling.}
    \label{fig:entangle_det}
\end{figure}

\section{Quantum control protocols}

Another important application of our analysis lies in quantum control—the design and implementation of protocols that manipulate quantum systems to realize quantum gates with high fidelity. 

We consider implementing a $k$-qubit gate on an $n$-qubit register. In general, the $k$ target qubits are coupled to $m$ neighboring spectator qubits (often referred to as spectators or intruders~\cite{deng2021correcting}); together, these $k + m$ qubits form a subsystem of the full $n$-qubit system.
We denote by $H^{m,k}(t)$ the Hamiltonian with quantum control acting on the $m+k$-qubit (targets + neighbors), and its evolution is $U(t)$. The ideal gate $U_0(t)$ is generated by the ideal Hamiltonian $H_{\text{ideal}}(t)$ on the target subspace.

For an initial $n$-qubit state $\ket{\psi}$, the effective error of the quantum gate can be given as $\norm{(I_{n\setminus k}\otimes U_0(t)-I_{n\setminus (m+k)}\otimes U(t))\ket{\psi}}$, which we hereafter simplify to $\norm{(U_0(t)-U(t))\ket{\psi}}$ by omitting the identity operators.
To illustrate the effects of entanglement in quantum control in a simple and clear manner, we partition the entire system into two subsystems, $A$ and $B$.
Let $A$ denote the system of $m+k$ qubits, and $B$ denote the system of the remaining qubits $n-m-k$, and now 
$\ket{\psi}=\ket{\psi}_{AB}$ (as shown in Fig. \ref{mainfig:QD_2D} (a)). 
By utilizing our entanglement-based result of time-dependent analysis (see Materials and Methods), one may have
    $\norm{ (U_0(t)_A-U(t)_A)\ket{\psi}_{AB}} \approx \int_0^t d\tau \norm{(H_{\text{ideal}}(\tau)-H^{m,k}(\tau) }_F$ for large entangled state $\ket{\psi}_{AB}$. 
This suggests that entanglement can induce robustness in implementing quantum gates with quantum control protocols.\\

\begin{figure*}
    \centering
    
    \subfloat[]{\includegraphics[width=0.30\textwidth]{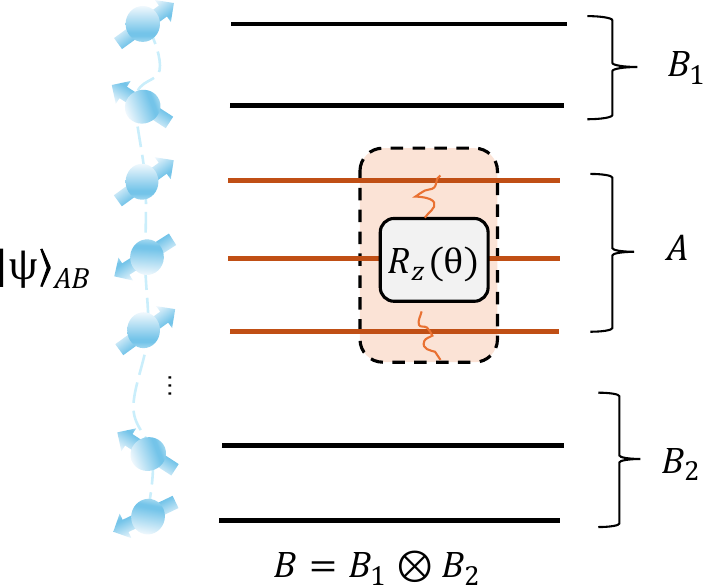}}
    \subfloat[]{\includegraphics[width=0.33\textwidth]{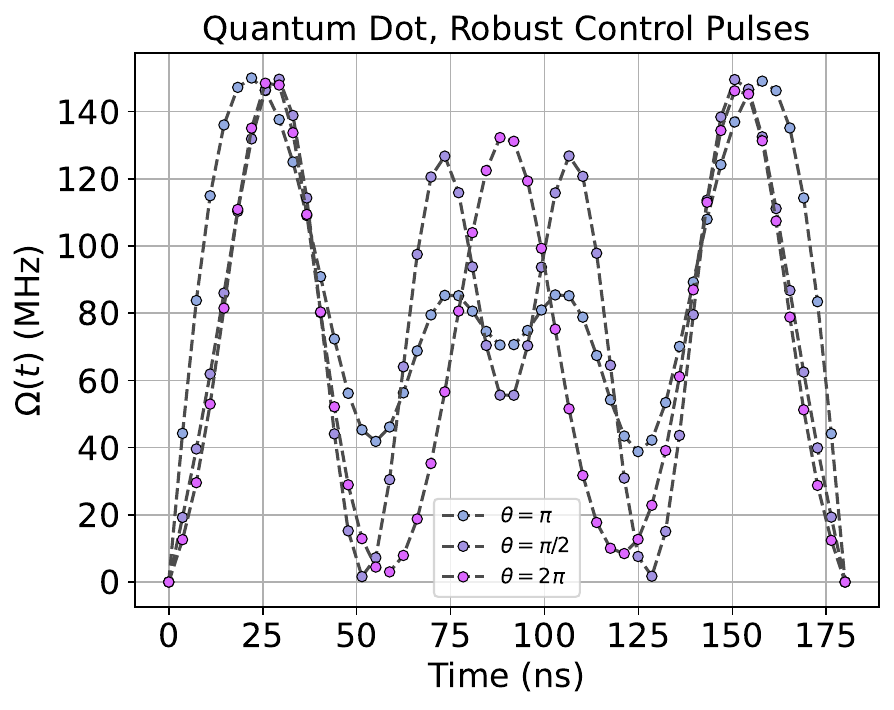}}
    \subfloat[]{\includegraphics[width=0.33\textwidth]{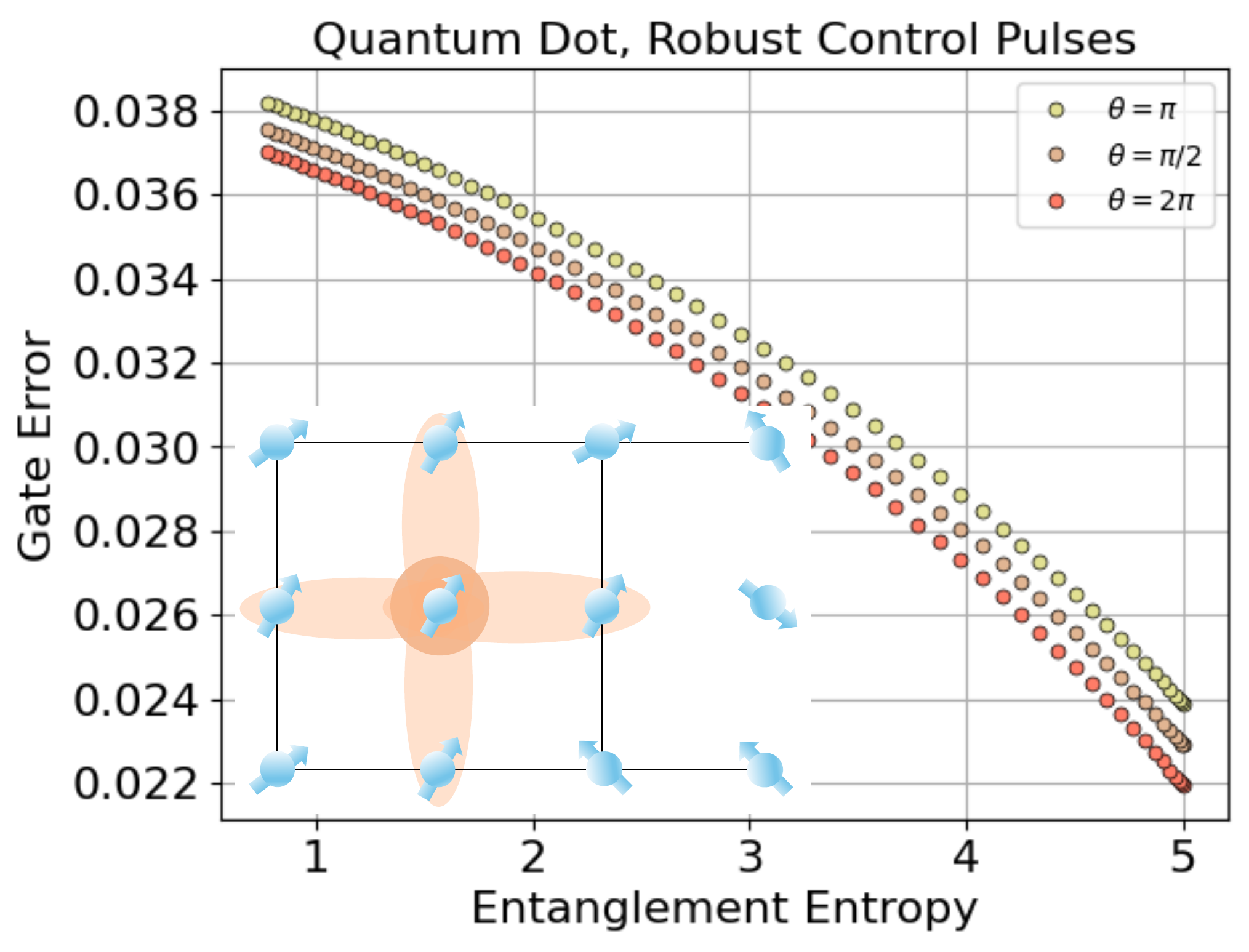}}\
    \caption{Subsystem entanglement and gate error. (a) Schematic partition of a quantum-control circuit for a target single-qubit gate into subsystem $A$ (the target qubit plus nearby qubits directly coupled to the control fields) and subsystem $B$ (the remaining register qubits).  (b) Quantum-dot system driven with robust control pulses for a single-qubit gate. The optimal driving waveform for implement time $T=180$ns and target gates $X_\pi,X_{\pi/2},X_{2\pi}$, respectively. (c) Dependence of effective gate error on the entanglement entropy (higher temperature of purified Gibbs state, higher entanglement entropy) of the initial states with corresponding waveforms. We consider a 2D lattice in which 4 spectator qubits couple to the target (so $|A| = 5$ and $S_{\text{max}} = \log(2^5) = 5$).  States are sampled as Gibbs states with temperatures $T$ chosen as $T\in\left\{\frac1t|t\in\{0.008,0.016,\dots,0.48\}\right\}$ (equivalently $T$ spanning $\approx 2.08 ~\text{to}~125$).  
    As the entanglement entropy $S(A)$ approaches its maximum, the error bound rapidly decreases, indicating that greater preexisting entanglement between $A$ and $B$ suppresses the influence of gate imperfections.
   }
    \label{mainfig:QD_2D}
\end{figure*}


To demonstrate how entanglement enhances quantum control of quantum gates, we perform numerical simulations of quantum gate implementation on a $3\times4$ qubit 2D quantum dot system (see the lattice in Fig. \ref{mainfig:QD_2D} (c)). Specifically, to realize the Pauli gate $U(\theta)=\exp(-iX_t\theta)$ on the target $k$ qubit (here $k=1$, and the target qubit is surrounded by orange circle in Fig. \ref{mainfig:QD_2D} (c)), one drives the system with an ideal control field that generates the single-qubit Hamiltonian $H_{\text{ideal}}=\dfrac{\Omega(t)}2X_k$. Nevertheless, imperfections in the inter-qubit couplings, together with (random) drift of the control and frequency, generate a total time-dependent perturbation term $H_{\text{pert}}(t)=H_{\text{drift}}+H^{m,k}-H_{\text{ideal}}$ in the Hamiltonian. 
The perturbation Hamiltonian is then expressed accordingly 
 (see Supplementary section B)
\begin{equation}
    \begin{aligned}
        H_{\text{pert}}(t) &= \delta \sum_{i\in R_k}Z_i + \epsilon\Omega_k(t)X_k \\&+ \frac{J}{4} \sum_{i\in R_k}Z_iZ_k
         + \frac{\tan\theta \Omega_k(t)}{2} \\ 
         & \times\sum_{i\in R_k}
         \Big[\cos(\Delta \tilde{E}_z t) X_iZ_k - \sin(\Delta \tilde{E}_z t) Y_iZ_k \Big],
    \end{aligned}
    \label{main_Eq_multi_noise}
\end{equation}
where $R_k$ represents the target qubit and its adjacent qubits as illustrated by orange ovals in Fig. \ref{mainfig:QD_2D} (c) where $k=1$ and $m=4$ ($|R_k|=4$).  
For best control accuracy, we can set the driving field $\Omega(t)$ as a robust control pulse (RCP) \cite{hai2025scalablerobustquantumcontrol}.
We test the performance of RCPs for 3 different target gates $X_{\pi},X_{\pi/2},X_{2\pi}$. The gate time is $T=180$ns, and the waveform of these RCPs are demonstrated in Fig.~\ref{mainfig:QD_2D} (b).

To show the relationship between entanglement entropy and the faithfulness of the implementation of quantum gates, we consider a set of two-partite purified Gibbs states or thermofield double states \cite{cottrell2019build}, 
\begin{equation}
    \ket{\psi_T}_{AB}=\dfrac{\sum_ie^{-\epsilon_i/T}\ket{\phi_i}_{R_k}\ket{\phi^*_i}_{B_1}}{\|\sum_ie^{-\epsilon_i/T}\ket{\phi_i}_{R_k}\ket{\phi^*_i}_{B_1}\|}\otimes\ket 0^{\otimes 2}_{B_2},\label{maineq:Gibbs}
\end{equation}
where the system is divided into $R_k$ and the complementary part. $B_1$ consists of 5 qubits responding to $R_k$, while $B_2$ represents the other 2 qubits. Here $\{(\epsilon_i,\phi_i)\}$'s are eigenvalues and eigenvectors of the 5-qubit Hamiltonian $V=-\sum_{i\in R_k}Z_i$. 
The entanglement entropy of $\ket{\psi_T}$ increases as the temperature $T$ increases. 
Fig.~\ref{mainfig:QD_2D}(c) shows the relationship between the entanglement entropy and the gate error, with parameters $\Delta E_Z=200$MHz, $J=100$kHz, $\delta =50$kHz, $\epsilon=0.001$ and $\theta=\arctan\dfrac J{2\Delta E}$. The error quickly reduces when the entanglement rises close to its maximum. This suggests that higher entanglement entropy is correlated with improved fidelity, indicating that entanglement can further enhance the robustness of quantum control protocols even with RCPs already applied. Additional validation of a class of two-qubit gates in a 2D quantum dot system is provided in Supplementary Section C.

\section{Discussions and conclusions}

Our results demonstrate that the dynamical growth of many-body entanglement can provide intrinsic protection against the coherent and perturbative noises of quantum dynamics. In the context of the Eigenstate Thermalization Hypothesis \cite{ETHPhysRevA.43.2046,ETHPhysRevE.50.888}, this implies a fundamental connection between error proliferation and the entanglement and thermalization of the system—a connection that persists even in regimes of incomplete thermalization. Moreover, extending these ideas to non-Hermitian quantum dynamics (Parity-Time symmetry system \cite{PTPhysRevLett.80.5243, PT2PhysRevLett.98.040403}) indicates that entanglement may still serve as a vital resource in mitigating errors. Recent studies have shown that entanglement can even protect against noise in cross-platform protected qubits \cite{qjhp-8x6zlogicalqubitE} and operator dynamic simulation \cite{feng2025trotterizationoperatorscramblingentanglement}, suggesting that harnessing entanglement could be a powerful strategy in developing noise-resilient quantum technologies.

Translating these insights into the language of quantum circuits, 
we define the error unitary as the product of the ideal circuit $U_0$ and the approximate circuit $U$ (see section D in Supplementary Materials). 
Our entanglement-based analysis of quantum circuits suggest that, given a fixed initial state, entanglement growth in a quantum circuit decreases the sensitivity of parameter changes in the sequential circuit so which may induce phenomena such as barren plateaus \cite{ortiz2021entanglement,PhysRevResearch.3.033090}, hindering the efficiency of variational algorithms. While rapid entanglement generation can impede gradient-based optimization, from a state preparation perspective \cite{gleinig2021efficient,Zhang_2022}, it enhances robustness against noise. This dual role underscores the importance of balancing entanglement control: suppressing excessive growth to facilitate optimization, while leveraging its benefits to improve system stability.

In general,  harnessing entanglement provides a quantitative framework for understanding and enhancing the stability of complex quantum systems, thereby facilitating the development of noise-resilient quantum technologies. On a fundamental level, these insights may connect to phenomena such as quantum phase transitions \cite{osterloh2002scaling,EntanglementQFTPhysRevA.66.032110,HaiqingLin2004PhysRevLett.93.086402} and the emergence of robust properties or classical behavior from quantum many-body systems \cite{zurek2022emergence}.  Understanding these connections deepens our comprehension of the fundamental physics and could inspire new strategies for designing robust quantum devices capable of operating in noisy environments.

\section*{MATERIALS and Methods}

\subsection{Universal error Upper bound for time-dependent Hamiltonian}
The derivation of our main results will be based on the following lemma concerning a time-dependent Hamiltonian.
\begin{lemma}     \label{HHKL}\cite{HHKL8555119}
    Let $A(t)$ and $B(t)$ be continuous time-dependent Hermitian operators, and let $U_A(t)$ and $U_B(t)$  with $U_A(0) = U_B(0) = I$ be the corresponding time evolution unitaries. 
    Then $W(t)=U_B(t)^\dagger U_A(t)$ is the unique solution of $ i \partial t W(t) =U_B (t)^\dagger (A(t)-B(t))U_B(t ) W(t)$, and $W(0)=I$.
\end{lemma}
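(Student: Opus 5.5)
The plan is to verify directly that $W(t)=U_B(t)^\dagger U_A(t)$ satisfies the stated differential equation and initial condition, and then to invoke standard uniqueness for linear ODEs with continuous coefficients. The only inputs are the defining Schrödinger equations $i\partial_t U_A(t)=A(t)U_A(t)$ and $i\partial_t U_B(t)=B(t)U_B(t)$ with $U_A(0)=U_B(0)=I$. These are the time-ordered exponentials of the Hermitian generators $A,B$, and they exist and are unitary because $A,B$ are continuous.

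\textbf{Existence.} First, take the adjoint of the equation for $U_B$. This gives $-i\partial_t U_B(t)^\dagger=U_B(t)^\dagger B(t)$, using $B(t)^\dagger=B(t)$. Equivalently, $i\partial_t U_B^\dagger=-U_B^\dagger B$. Next, apply the product rule:
\begin{equation*}
i\partial_t W(t)=\big(i\partial_t U_B^\dagger\big)U_A+U_B^\dagger\big(i\partial_t U_A\big)=-U_B^\dagger B U_A+U_B^\dagger A U_A=U_B^\dagger (A-B)U_A .
\end{equation*}
Finally, insert $I=U_BU_B^\dagger$ between $(A-B)$ and $U_A$, using unitarity of $U_B$. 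This rewrites the right-hand side as $U_B^\dagger(A-B)U_B\,U_B^\dagger U_A=U_B(t)^\dagger(A(t)-B(t))U_B(t)\,W(t)$, which is the claimed equation. The initial condition is immediate: $W(0)=I^\dagger I=I$.

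\textbf{Uniqueness.} The generator $G(t):=U_B(t)^\dagger(A(t)-B(t))U_B(t)$ is continuous in $t$ in the finite-dimensional setting, since $U_B$ is differentiable and hence continuous. So $i\partial_t W=G(t)W$ is a linear ODE with continuous coefficients, and Picard–Lindelöf (global Lipschitz in $W$ with constant $\sup_{[0,t]}\|G\|$) gives a unique solution on any compact interval.

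For a self-contained argument, suppose $W_1,W_2$ both solve the equation with the same initial value, and set $D=W_1-W_2$. Then $D$ satisfies the same linear equation with $D(0)=0$. Integrating gives $\|D(s)\|\le\int_0^s\|G\|\,\|D\|$, and Grönwall then forces $D\equiv0$.

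The main point needing care is regularity rather than algebra. One must justify that $U_A,U_B$ are differentiable and unitary given only continuity of $A,B$. One must also justify that $\partial_t(U_B^\dagger)=(\partial_t U_B)^\dagger$. In finite dimensions both follow from the Dyson-series construction of the time-ordered exponential, which converges in operator norm, together with unitarity from $\partial_t(U^\dagger U)=0$. In infinite dimensions with unbounded generators this step would require domain assumptions, and I would restrict to the bounded (finite-qubit) case relevant to the paper.
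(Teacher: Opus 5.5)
Your proof is correct and follows the standard argument: differentiate $W=U_B^\dagger U_A$ by the product rule, insert $U_BU_B^\dagger=I$, and invoke uniqueness for linear ODEs with continuous coefficients. The paper does not prove the lemma itself; it imports it from \cite{HHKL8555119}, and the intermediate identity $i\partial_t W = U_B^\dagger (A-B)\, U_A$ you derive is exactly what the paper then uses to bound $\norm{(U_A(t)-U_B(t))\ket{\psi}}$.
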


Suppose $A(t)=H_0(t)$ is the ideal Hamiltonian and $B(t)=H^\prime (t)$ which is the noise Hamiltonian, we have $W(t)=U_{B}(t)^\dagger U_A(t)=\mathcal{T}e^{\int_{s_1=t}^{0}ds_1iH^\prime(s_1)} \mathcal{T}e^{-\int_{s_2=0}^{t}d{s_2}iH_0(s_2)}$.  Define $\mathcal{D}=\norm{(U_A(t)-U_B(t))\ket{\psi}}$.
By making use of Lemma \autoref{HHKL}, we bound the dynamical error as

\begin{equation}
\begin{split}
\mathcal{D}
        &=\norm{U_B(t)( W(t)-I)\ket{\psi}}\\
       &=\norm{( W(t)-I)\ket{\psi}}\\
       &=\norm{( W(t)- W(0))\ket{\psi}}\\
        &=\norm{i\int^t_0 dt  \partial \tau W(\tau)\ket{\psi}}\\
        &= \norm{\big[\int_0 ^t d\tau U_B (\tau)^\dagger (A(\tau)-B(\tau))U_B(\tau) W(\tau) \big]\ket{\psi}}\\
           &= \norm{\big[\int_0 ^t d\tau U_B (\tau)^\dagger (A(\tau)-B(\tau))U_A \big]\ket{\psi}}\\
        &\le \int_0 ^t d\tau\norm{\big[  (A(\tau)-B(\tau))U_A(\tau)\big]\ket{\psi}}\\
        &=\int_0 ^t d\tau \norm{\big[  H^\prime (\tau)-H_0(\tau)\big]e^{-\int_s^{\tau} d siH(s)}\ket{\psi}}\\
        &=\int_0 ^t d\tau \norm{\big[  H^\prime (\tau)-H_0(\tau)\big]\ket{\psi(\tau)}}.
        \end{split}
\end{equation}
In the derivation above, we have used the norm-preserving property of the unitary operator: $\|U X\ket{\psi}\| = \|X\ket{\psi}\|$ for any operator $X$. Our result suggest, for a time-dependent Hamiltonian, the upper bound of error term in the state vector reduces to $H'(\tau) - H_0(\tau)$, where $\ket{\psi(\tau)}$ denotes the ideal time-evolved state at evolution time $\tau$.

\subsection{Entanglement-based bound of time-dependent Hamiltonian}

In general, the perturbative Hamiltonian, $H_{\text{pert}}=H^\prime (\tau)-H_0(\tau)$, may be time-dependent, e.g, $H_{\text{pert}}(\tau)$. To investigate the role of entanglement in bounding the error for such time-dependent cases, we discretize the total evolution time $t$ into $J$ segments of duration $\Delta t= t/J$.  The upper bound $\int_0 ^t d\tau \norm{ H_{\text{pert}}(\tau)\ket{\psi(\tau)}}$ can then be rewritten as a sum over these segments: 
\begin{equation}
   \sum_{j=0}^{J-1} \int_{j\Delta t} ^{(j+1)\Delta t} d\tau \sqrt{\bra{\psi(\tau)} H_{\text{pert}}^\dagger(\tau)  H_{\text{pert}}(\tau) \ket{\psi(\tau)}}. \label{eqtimed}
\end{equation}
A key mathematical relationship is employed. For a operator $A=\sum_j A_j$ composed of non-trivial local operators  $A_j$, the following inequality holds~\cite{Zhao_2025NP} 
\begin{equation}
   \bra{\psi}A^\dagger A\ket{\psi}\le \norm{A}_F^2+\Delta_{A^\dagger A}(\psi),
\end{equation}
where $\Delta_{A^\dagger A(\psi)} =\sum_{j} ~\| A_j\| \sqrt{2\log(d_{\support(A_{j})})-2S(\rho_{j})},$
$\| \cdot\|_F$
denotes the normalized Frobenius norm, $\rho_{j}:=\Tr_{[N]\setminus \support(A_j)}(\ket{\psi}\bra{\psi})$ is the reduced density matrix of $\ket{\psi}\bra{\psi}$ on the subsystem of $\support(A_j)$, and $\text{S}(\rho_{j})$ is the entanglement entropy of $\rho_{j}$. 
Defining $A^{(j+1)\Delta t}_{j\Delta t}=\int_{j\Delta t} ^{(j+1)\Delta t} d\tau H_{\text{pert}}^\dagger(\tau)  H_{\text{pert}}(\tau) $ as a positive operator for segment $j$, we bound the dynamical error as (see Supplemental Section A for details)
\begin{equation}
\begin{split}
& \norm{(U_A(t)-U_B(t))\ket{\psi(0)}}
  \le (\Delta t)^{\frac{1}{2}} \sum_{j=0}^{J-1} \sqrt{\frac{\Tr(A^{(j+1)\Delta t}_{j\Delta t})}{d}} \\& \quad +
   \sum_{j=0}^{J-1}\int_{j\Delta t} ^{(j+1)\Delta t} d\tau  \sqrt{\Delta_{{H^\dagger_{\text{pert}}(\tau)H_{\text{pert}}(\tau)}} (\psi(\tau))  }.
   \end{split}
   \label{timed_methods}
\end{equation}
Assuming the state $\ket{\psi(\tau)}$ remains sufficiently entangled at all times, the upper bound approximately simplifies to
\begin{equation}
 \sum_{j=0}^{J-1} (\Delta t)^{\frac{1}{2}}\sqrt{\frac{\Tr(A^{(j+1)\Delta t}_{j\Delta t})}{d}}=\int_{0}^t \norm{H_{\text{pert}}(\tau)}_F.
    \label{eq:t-dependent_entangled}
\end{equation}
Finally, it can be verified that in the time-independent limit where $H_{\text{pert}}(\tau)=H_{\text{pert}}$, we have $A^{(j+1)\Delta t}_{j\Delta t}=\Delta t H^\dagger_{\text{pert}}H_{\text{pert}} $,
In this case, the general upper bound in \eqref{timed_methods} reduces to 
\begin{equation}
     t\norm{H_{\text{pert}}}_F+\int_{0} ^{t} d\tau  \sqrt{\Delta_{{H^\dagger_{\text{pert}}(\tau)H_{\text{pert}}(\tau)}} (\psi(\tau))},
\end{equation}
which recovers the exact upper bound for the long-time error in the time-independent case in the main text.




\subsection{Error analysis with random disorder}

Suppose that $ H_{\text{pert}} = V+N=\sum_k \delta_k V_k + \sum_m \eta_m N_m$ and suppose $\delta_k$ is a random parameter of  Gaussian distribution with $\mathbb{E} \delta _k=0$ and $\text{Var}(\delta_k)=\sigma^2_k$.  The upper bounds of Eq. \eqref{Eq:maineq}  and Eq. \eqref{Eq:mainlongtime} are for a single run with fixed $\delta_k$. Considering the random nature of $\delta_k$, the density matrix of the real evolved state is given as 
$\rho(t)=  \mathbb{E}_{\{\delta_1,\delta_2,...\}}U (t)\ket{\psi}\bra{\psi}U^\dagger(t)$, where 
$U(t)=\hat{T}e^{-i\int H^\prime(t) dt}$.
Now, the distance between the ideal evolution and the approximated one should be defined by the 1-norm $\norm{\rho_0(t)-\rho(t)}_1$ ($\norm{A}_1=\Tr(\sqrt{A^\dagger A})$) with $\rho_0(t)=U_0(t)\ket{\psi}\bra{\psi}U^\dagger_0(t) $.
Since for any two projectors $\ket{\chi}\bra{\chi}$ and $\ket{\phi}\bra{\phi}$, the following inequality holds, e.g., $\norm{\ket{\chi}\bra{\chi}-\ket{\phi}\bra\phi{}}_1\le 
2 \norm{\ket{\psi}-\ket{\phi}}$ \cite{watrous2018theory}.
Now we have (see Supplementary Section A)

\begin{equation}
\begin{split}
\norm{ \rho_0(t)-\rho(t) }_1 
&\le 2 \int_0 ^t d\tau \sqrt{\sum_k \sigma^2_k+\norm{ N\ket{\psi(\tau)}}^2},\\
&\le 2t \sqrt{\sum_k \sigma^2_k} +2\int_0 ^t d\tau \norm{ N\ket{\psi(\tau)}},
\end{split}
\end{equation}
where $N=\sum_m \eta_m N_m$. 
This suggests that the variance of disorder and the imperfection of coupling mainly contribute to the error of the evolved density matrix, where entanglement only has an inhibitory effect on the imperfection term $N$.
Note that the average case error $ \mathbb{E}_{\{\delta_1,\delta_2,...\}} \norm{(U_0(t)-U(t))\ket{\psi}} =\frac{1}{2}\norm{ \rho_0(t)-\rho(t) }_1$, while $\norm{ \rho_0(t)-\rho(t) }_1$ provide the rigorous upper bound of the error. 

\begin{figure}
    \centering
    \includegraphics[width=0.9\linewidth]{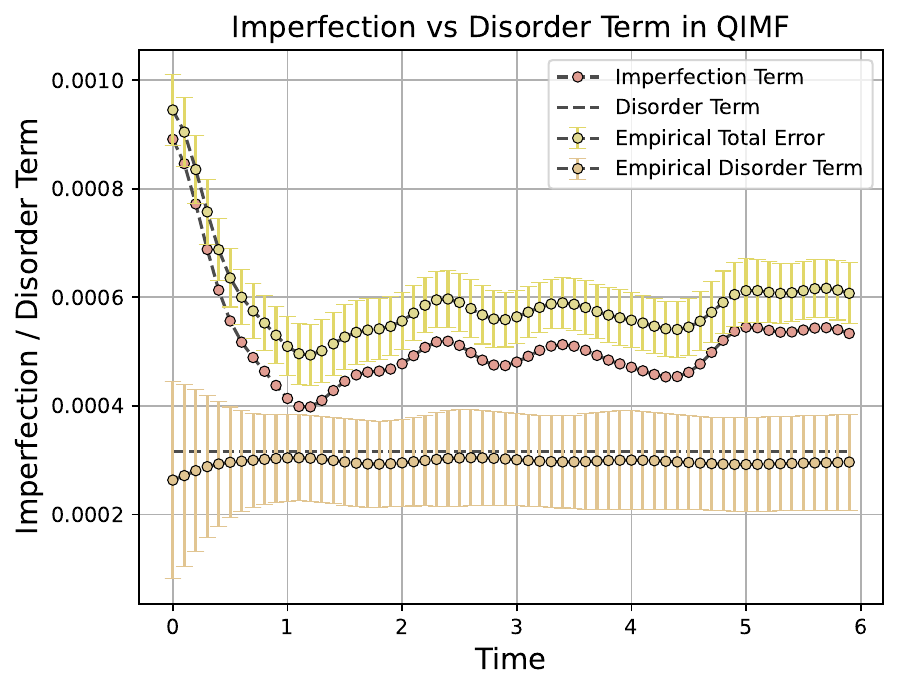}
    \caption{Comparison between the contribution to analog simulation error of disorder and imperfection terms. Here we use the same model as that used in Figure~\ref{fig:mainplaceholder}. We take 30 trials for each moment, with disorder strength $\delta_k$ randomly drawn with respect to the normal distribution $\mathcal N(0,0.01)$. The error bar in the figure shows the standard deviation for the 30 trials. The contribution of disorder and imperfections is defined as $\int_t^{t+\Delta t}d\tau\|V\ket{\psi(\tau)}\|$ and $\int_t^{t+\Delta t}d\tau\|N\ket{\psi(\tau)}\|$, respectively, where $\Delta t$ represents the segment length. The result shows that while the average disorder contribution is almost independent of the state, the contribution of imperfections quickly converges to the entangled case as the evolution time increases.}
    \label{fig:imp_vs_dis}
\end{figure}

Figure~\ref{fig:imp_vs_dis} compares the contribution of imperfection terms and disorder terms on the analog simulation error. It shows that the average contribution of disorder terms remains at almost the same level as the system evolves, while the imperfection terms depends more on the state, which is consistent with our theoretical results.

Remarkably, if the Gaussian distribution of disorder is unbiased, i.e. $\mathbb{E} \delta _k\ne0$, there are some cross terms consisting of the disorder terms and the imperfection terms. The error is also relate to the disorder term but entanglement can also contributes to suppress these error.

\subsection{Fermionic lattices}

\begin{figure*}
    \centering
    \subfloat[]{\includegraphics[width=0.45\textwidth]{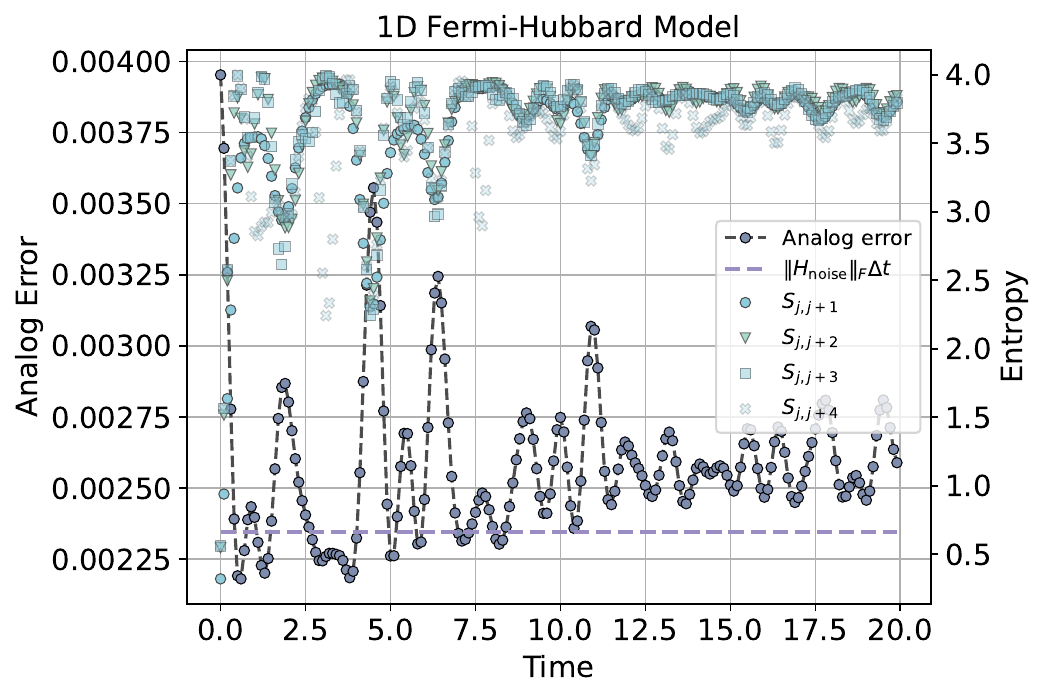}}
    \subfloat[]{\includegraphics[width=0.45\textwidth]{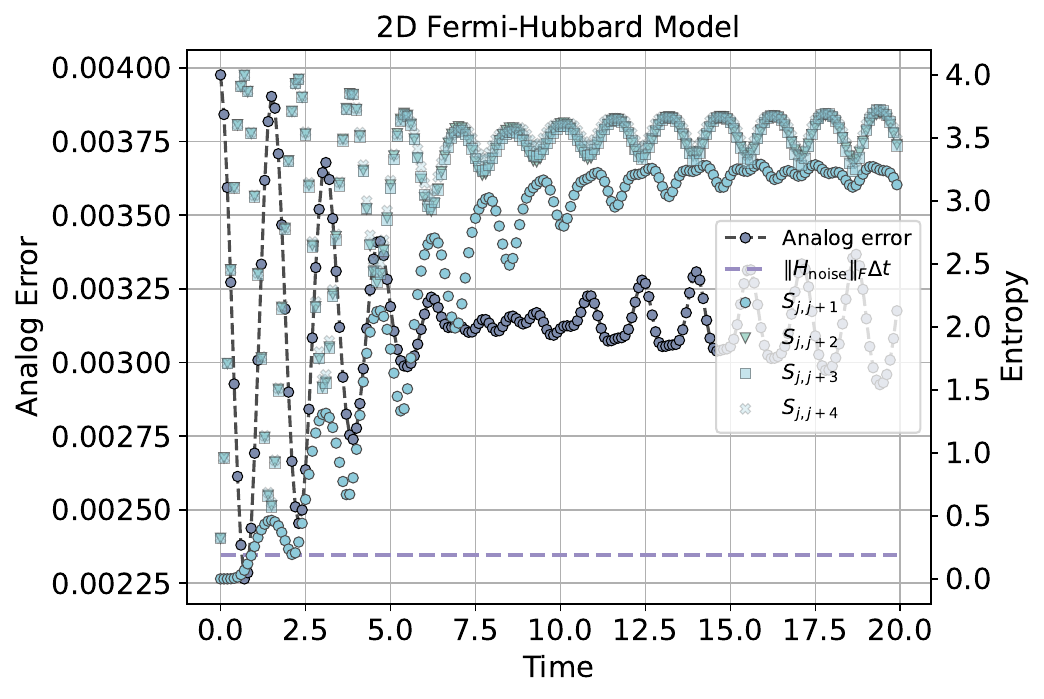}}
    \caption{Error of one-segment analog simulation with (a) 1D Fermi-Hubbard model with 8 sites and (b) $2\times4$ 2D Fermi-Hubbard model in a longer evolution process. The Coulomb interaction strength is $U=0.5$, the segment length is set as $\Delta t = 0.1$, and the noise strength is $\delta = 0.01$. Periodic boundary conditions are applied for both models. The initial state is prepared as $\ket{\uparrow\downarrow}_1\ket{\uparrow\downarrow}_3\ket{\uparrow\downarrow}_5\ket{\uparrow\downarrow}_7$, that is, for the one-dimensional model, every second site is doubly occupied; for the two-dimensional model, each row has two doubly occupied sites. The entanglement entropy is defined on the subsystem of two sites numbered as $(j, j + 1),~ (j, j + 2), ~(j, j + 3),~ (j, j + 4)$, respectively. In the long-time scale, simulation error in both 1D and 2D cases keeps oscillating. In the 1D case where the entanglement gets close to the maximum $\log4^2=4$, the error oscillates around the Frobenius norm estimate. Whereas in the 2D case, where the entanglement does not increase to the maximum, a gap between the empirical error and the Frobenius norm can be observed.}
    \label{fig:longtimeFH}
\end{figure*}

In fermionic systems, the concept of entanglement fundamentally differs from that in systems of distinguishable spins. Since electrons are identical fermions obeying antisymmetric exchange statistics, the reduced density matrix for a subset of sites cannot be straightforwardly defined by tracing over individual particles. Instead, the entanglement is characterized by the correlations among multiple sites, each hosting a spin-$\frac{1}{2}$ fermion.

Consider a lattice $\Lambda$, with each site $i$ hosting a fermionic mode with spin degrees of freedom described by annihilation operators $c_{i,\sigma}$, where $\sigma = \uparrow, \downarrow$. The global state $\rho$ is represented jointly in the fermionic Fock space and the local internal Hilbert space at each site, respecting the Pauli exclusion principle. To analyze entanglement, one partitions the lattice into a subsystem $A \subseteq \Lambda$ and its complement $B = \Lambda \setminus A$. The reduced density matrix $\rho_A$ is obtained via a fermionic partial trace
$\rho_A = \mathrm{Tr}_B (\rho)$.

Since electrons carry spin, the reduced density matrix for the region $A$ encapsulates correlations over both spatial sites and spin degrees of freedom. The maximal entanglement entropy of a local site ($|A|=1$) should be
$S_A = - \mathrm{Tr} (\rho_A \log \rho_A)$=2. 
This emphasizes that, unlike traditional spin models where the local Hilbert space is simply two-dimensional, in fermionic systems with $1/2$ spin, the local Hilbert space of each site is 4-dimensional, behaving like 2 qubits.

Based on this fact, the analysis in the main text regarding entanglement-induced resilience is equally applicable to fermionic systems. In this context, entanglement is directly defined by the partitioning of sites; increasing entanglement among different sites enhances the overall robustness of the system’s dynamics. Importantly, our focus here is on the dynamical evolution rather than on the ground state problem of the Hamiltonian, which distinguishes this work from \cite{HaiqingLin2004PhysRevLett.93.086402}.

We demonstrate additional numerical results of Fermi-Hubbard model under the perturbative noise (see Eq. \eqref{FH}) with a longer time scale in Figure~\ref{fig:longtimeFH}. Because of the Coulomb interaction and the Pauli exclusion principle, the Fermi–Hubbard model does not fully thermalize; as a result, the simulation error continues to oscillate. This effect is more pronounced in the 2D Fermi-Hubbard model, where the analog simulation error is obviously larger than the Frobenius norm estimate, as shown in Figure~\ref{fig:longtimeFH}.


\subsection{Quantum control}

Consider implementing a $k$-qubit quantum gate, where the $k$ target qubits are coupled to neighboring spectator or intruder qubits. In the physical basis, the total Hamiltonian can be expressed as   $\tilde{H}^{m,k}=\tilde{H}_0+\tilde{V}(t)$, where $H_0$ is the native Hamiltonian and $\tilde{V}(t)$ is the control field.
To analyze coupling effects, we transform to the eigenbasis of of the eigenbasis of $\tilde{H}_0$, where the Hamiltonian can be decomposed into a direct sum of blocks \cite{hai2025scalablerobustquantumcontrol}
\begin{equation}
H^{m;k} = \bigoplus_{i=1}^{2^m} H_i + \bigoplus_{i,j=1}^{2^m} V_{ij}(t)
           \label{Eq_GeneralH}.
\end{equation}
The full matrix of the Hamiltonian is defined in the Supplementary Section B.
Here, the superscript $m;d$ denotes a system with $m$ neighboring qubits affecting $k$ target qubits. The Hamiltonian decomposes into $2^m\times2^m$ blocks corresponding to different configurations of the neighboring qubits. For example, in a system with two neighboring qubits and one target qubit, $\text{span}\{\left|\uparrow\uparrow;\uparrow\right\rangle, \left|\uparrow\uparrow;\downarrow\right\rangle\}$ represents the $i=1$ diagonal subspace where both neighboring qubits are in the $\left|\uparrow\right\rangle$ state.


Without loss of generality, we designate the $i=1$ subspace as our reference and define the reference Hamiltonian $\tilde{H}_{\text{ref}}=\tilde{H}_1+\tilde{V}_{11}$ (here the reference Hailtianonian can be set to the ideal Hamiltonian of implementation of the target quantum gate, i.e., $H_{\text{ref}}=H_{\text{ideal}}$ and $\tilde{H}_{\text{ref}}=\tilde{H}_{\text{ideal}}$). The difference between the reference and the original Hamiltonian is identified as a correlated error Hamiltonian involved with quantum control, which can be seen as one part of the perturbation Hamiltonian:
\begin{equation}
H_{\text{error}} = H^{m;k}-H_{\text{ref}}=\bigoplus_{i=2}^{2^m} (\Delta H_i + \Delta V_{ii}) + \bigoplus_{i,j=1,i\neq j}^{2^m} V_{ij} .
\end{equation}
Here, $\Delta H_i=H_i-H_1$ represents frequency shifts experienced by target qubits due to different configurations of neighboring qubits, collectively forming a manifold of $2^m$ distinct energy levels. The terms $\Delta {V}_{ii}={V}_{ii}-{V}_{11}$ represent variations in control field effects across different neighboring-qubit configurations, while $V_{ij}(t)$ for $i\neq j$ captures crosstalk between different subspaces. 

Let $A$ denote the subsystem consisting of the $m + k$ qubits (target and spectators), and $B$ the remaining $n - m - k$ qubits, with the total state denoted as $\ket{\psi(\tau)}_{AB}$. The initial state before implementing the quantum gate is $\ket{\psi(0)}_{AB}$.
Since the error Hamiltonian $H_{\text{error}}(\tau)$ in quantum control is generally time-dependent and acts locally on subsystem $A$, we now bound the resulting error $\norm{\bigl(U_0(t) - U(t)\bigr) \ket{\psi(0)}_{AB}}$ as follows (see Supplementary Section~B for details):

\begin{equation}
\begin{split}
       & \int_0^t d\tau \norm{H_{\text{error}}(\tau)}_F \\
   & + \int_0^t d\tau \sqrt{\norm{ H_{\text{error}}^\dagger (\tau) H_{\text{error}}(\tau)}\sqrt{2(\log d_A-S(\rho_{A}(\tau)))}}.
\end{split}
\end{equation}
When the entanglement in $\ket{\psi(\tau)}_{AB}$ is sufficiently large during the control period, the error is dominated by the normalized Frobenius norm of the error Hamiltonian, yielding the approximation $\norm{(U_0(t)-U(t))\ket{\psi(0)}}\approx \int_0^t d\tau \norm{H_{\text{error}}(\tau)}_F$.

It is noteworthy that entanglement within subsystem~$A$ (between targets and spectators) also aids in noise suppression. For a clear presentation, here we treat all components inside~$A$ as a unified subsystem, instead of separating them into multiple subspaces corresponding to different $H_{\text{error}}^{(j)}(\tau)$.\\

\textbf{Acknowledgment:}
  Q.Z. acknowledges funding from Innovation Program for Quantum Science and Technology via Project 2024ZD0301900, National Natural Science Foundation of China (NSFC) via Project No. 12347104 and No. 12305030, Guangdong Basic and Applied Basic Research Foundation via Project 2023A1515012185, Hong Kong Research Grant Council (RGC) via No. 27300823, N\_HKU718/23, and R6010-23, Guangdong Provincial Quantum Science Strategic Initiative No. GDZX2303007, HKU Seed Fund for Basic Research for New Staff via Project 2201100596.  X. L. thanks the National Basic Research Program of China (Grants No. 2021YFA1400900). X.-H.D. thanks the Shenzhen Science and Technology Program (KQTD20200820113010023) and the Innovation Program for Quantum Science and Technology (2024ZD0300400).
\textbf{Code availability:}           
The code used in this study is available via GitHub: https://github.com/minidas/analog.


\bibliography{ref,ref_2}

\onecolumngrid
\appendix
\newpage
\newcommand{\appendixtoc}{
  \begingroup
  \let\cleardoublepage\clearpage
  \let\clearpage\relax
  \let\appendixsection\section
  \renewcommand{\section}{\addtocontents{toc}{\protect\setcounter{tocdepth}{-1}}\appendixsection}
  \tableofcontents
  \endgroup
  \addtocontents{toc}{\protect\setcounter{tocdepth}{2}}
}



\section*{Supplementary Materials of "Entanglement-Induced Resilience of Quantum Dynamics}

\section{General quantum dynamics of Time-dependent Hamiltonian with entanglement} \label{timedependent}

By utilizing typical information about the states, one can achieve a more accurate quantification of the erroneous behaviors in quantum dynamics. 
We demonstrate that entanglement inherently provides a level of robustness against dynamic errors, allowing us to derive tighter and more reliable error analyses.
This insight not only underscores the significance of entangled states in enhancing the fidelity of quantum simulations but also highlights the broader applicability. 

Before exploring the role of entanglement in quantum dynamics, we introduce the following mathematical lemma:
\begin{lemma}     \label{HHKL}\cite{HHKL8555119}
    Let $A(t)$ and $B(t)$ be continuous time-dependent Hermitian operators, and let $U_A(t)$ and $U_B(t)$  with $U_A(0) = U_B(0) = I$ be the corresponding time evolution unitaries. 
    Then $W(t)=U_B(t)^\dagger U_A(t)$ is the unique solution of $ i \partial t W(t) =U_B (t)^\dagger (A(t)-B(t))U_B(t ) W(t)$, and $W(0)=I$.
\end{lemma}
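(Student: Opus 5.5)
The plan is to verify the claimed differential equation directly and then invoke uniqueness of solutions to linear ODEs with continuous coefficients. Throughout, $U_A(t)$ and $U_B(t)$ are understood as the time-ordered exponentials solving $i\partial_t U_A(t)=A(t)U_A(t)$ and $i\partial_t U_B(t)=B(t)U_B(t)$ with $U_A(0)=U_B(0)=I$. Since $A,B$ are continuous and Hermitian, these solutions exist, are unique, and are unitary: $\partial_t(U^\dagger U)=0$, so $U^\dagger U=I$, and in finite dimension this gives $UU^\dagger=I$ as well.

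\textbf{Step 1: derivative of $U_B^\dagger$.} Taking the adjoint of the Schr\"odinger equation for $U_B$ and using $B^\dagger=B$ gives $-i\partial_t U_B^\dagger(t)=U_B^\dagger(t)B(t)$, that is, $i\partial_t U_B^\dagger = -U_B^\dagger B$.

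\textbf{Step 2: product rule.} For $W=U_B^\dagger U_A$ we compute
\[
i\partial_t W = (i\partial_t U_B^\dagger)U_A + U_B^\dagger (i\partial_t U_A) = -U_B^\dagger B U_A + U_B^\dagger A U_A = U_B^\dagger (A-B) U_A .
\]
Inserting $U_A=U_BU_B^\dagger U_A=U_B W$ (which uses $U_BU_B^\dagger=I$) yields $i\partial_t W = U_B^\dagger(A-B)U_B\,W$. The initial condition is immediate: $W(0)=I^\dagger I=I$.

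\textbf{Step 3: uniqueness.} The generator $G(t)=U_B(t)^\dagger(A(t)-B(t))U_B(t)$ is continuous in $t$, since $U_B$ is differentiable and $A,B$ are continuous. The equation $i\partial_t W=G(t)W$ is therefore linear with continuous, hence locally Lipschitz, coefficients, and Picard--Lindel\"of gives a unique solution with $W(0)=I$. For a direct argument, let $W_1,W_2$ be two solutions and set $D=W_1-W_2$, so $D(0)=0$. Then $\|D(t)\|\le\int_0^t\|G\|\,\|D\|$, and Gr\"onwall's inequality forces $D\equiv0$.

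The argument is essentially routine. The only point needing care is justifying differentiability and unitarity of the time-ordered exponentials. I will take this from the standard Dyson-series construction in finite dimension, which converges uniformly on compact time intervals when $A,B$ are continuous.
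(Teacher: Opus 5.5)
Your proof is correct. The paper does not prove this lemma itself; it only cites Haah, Hastings, Kothari and Low, where the argument is the same as yours: differentiate $W=U_B^\dagger U_A$, rewrite $U_A=U_BW$, and invoke uniqueness for a linear ODE with continuous coefficients.
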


Now suppose $A(t)=H_0(t)$ is the ideal Hamiltonian and $B(t)=H^\prime (t)$ which is the noisy Hamiltonian, we have $W(t)=U_{B}(t)^\dagger U_A(t)=\mathcal{T}e^{\int_{s_1=t}^{0}ds_1iH^\prime(s_1)} \mathcal{T}e^{-\int_{s_2=0}^{t}d{s_2}iH_0(s_2)}$. Note that we would like to quantify the distance between $U_A(t)$ and $U_B(t)$ with initial state $\ket{\psi}$. By making use of Lemma \autoref{HHKL}, we now have

\begin{equation}
\begin{split}
        \norm{(U_A(t)-U_B(t))\ket{\psi}}&=\norm{U_B(t)( W(t)-I)\ket{\psi}}\\
       &=\norm{( W(t)-I)\ket{\psi}}\\
       &=\norm{( W(t)- W(0))\ket{\psi}}\\
        &=\norm{i\int^t_0 dt  \partial \tau W(\tau)\ket{\psi}}\\
        &= \norm{\big[\int_0 ^t d\tau U_B (\tau)^\dagger (A(\tau)-B(\tau))U_B(\tau) W(\tau) \big]\ket{\psi}}\\
           &= \norm{\big[\int_0 ^t d\tau U_B (\tau)^\dagger (A(\tau)-B(\tau))U_A \big]\ket{\psi}}\\
        &\le \int_0 ^t d\tau\norm{\big[  (A(\tau)-B(\tau))U_A(\tau)\big]\ket{\psi}}\\
        &=\int_0 ^t d\tau \norm{\big[  H^\prime (\tau)-H_0(\tau)\big]e^{-\int_s^{\tau} d siH(s)}\ket{\psi}}\\
        &=\int_0 ^t d\tau \norm{\big[  H^\prime (\tau)-H_0(\tau)\big]\ket{\psi(\tau)}}.
        \end{split}
\end{equation}
Here we have use $\norm{U A\ket{\psi }}=\norm{ A\ket{\psi }}$ since $U$ is unitary.
One can see that the error term of the upper bound now becomes $ H^\prime (\tau)-H(\tau)$ in the case of a time-dependent Hamiltonian, where $\ket{\psi(\tau)}$ is the ideal evolved state for evolution time $\tau$. We now summarise the above result as the following Corollary:

\begin{corollary} [Bound on the difference between evolved states with time-dependent Hamiltonians]
Let $\ket{\psi(0)}$ be an initial state, and consider two time-dependent Hamiltonians $H_0(\tau)$ and $H'(\tau)$. The vector norm of the difference of two evolved states satisfies 
    \begin{equation}
        \norm{ (\mathcal{T}e^{-i\int_{0}^{t}d{\tau}H_0(\tau)}   -\mathcal{T}e^{-i\int_{0}^{t}d{\tau}H^\prime(\tau)}) \ket{\psi(0)} } \le \int_0 ^t d\tau \norm{\big[  H^\prime (\tau)-H_0(\tau)\big]\ket{\psi(\tau)}},
    \end{equation}
    where $\mathcal{T}$ denotes time-ordering and  $\ket{\psi(\tau)}$ is the state evolved under the Hamiltonian $H'(\tau)$. \label{Cor:bound}
\end{corollary}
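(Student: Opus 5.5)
The plan is to derive the bound directly from Lemma~\ref{HHKL}, following the same chain of identities used in Section~A of Materials and Methods. I would set $A(\tau)=H_0(\tau)$ and $B(\tau)=H'(\tau)$, and write $U_A(t),U_B(t)$ for the time-ordered exponentials, with $U_A(0)=U_B(0)=I$. I would then define the interaction-picture operator $W(t)=U_B(t)^\dagger U_A(t)$.

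The first step rewrites the quantity of interest as
\begin{equation*}
(U_A(t)-U_B(t))\ket{\psi(0)}=U_B(t)\bigl(W(t)-I\bigr)\ket{\psi(0)}.
\end{equation*}
Since $U_B(t)$ is unitary, the norm equals $\norm{(W(t)-W(0))\ket{\psi(0)}}$. By the fundamental theorem of calculus, this difference is $\int_0^t d\tau\,\partial_\tau W(\tau)\ket{\psi(0)}$. Lemma~\ref{HHKL} gives
\begin{equation*}
\partial_\tau W(\tau)=-i\,U_B(\tau)^\dagger\bigl(A(\tau)-B(\tau)\bigr)U_B(\tau)W(\tau).
\end{equation*}
The key simplification is $U_B(\tau)W(\tau)=U_A(\tau)$, so the integrand becomes
\begin{equation*}
-i\,U_B(\tau)^\dagger\bigl(H_0(\tau)-H'(\tau)\bigr)U_A(\tau)\ket{\psi(0)}.
\end{equation*}

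Next I would move the norm inside the integral, using the triangle inequality for vector-valued integrals. I would then drop the unitary $U_B(\tau)^\dagger$ by unitary invariance, which leaves
\begin{equation*}
\int_0^t d\tau\,\norm{\bigl(H'(\tau)-H_0(\tau)\bigr)U_A(\tau)\ket{\psi(0)}}.
\end{equation*}
Here $U_A(\tau)\ket{\psi(0)}$ is the state evolved under $H_0$. To match the statement, where $\ket{\psi(\tau)}$ is described as evolved under $H'$, I would note that the whole argument is symmetric. Exchanging the roles of $A$ and $B$, i.e.\ taking $W'=U_A^\dagger U_B$, changes only the sign of $U_A-U_B$, which does not affect the norm, and yields the same bound with $U_B(\tau)\ket{\psi(0)}$ in place of $U_A(\tau)\ket{\psi(0)}$. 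Hence the inequality holds with either evolved state. I would state this explicitly, since the main text uses the ideally evolved state.

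I expect no serious obstacle; the only technical point is justifying the integral manipulations. This means checking that $W$ is differentiable, so the fundamental theorem of calculus applies, and that $\norm{\int f}\le\int\norm{f}$ holds for the vector-valued integrand. Both follow from the continuity of $A$ and $B$ assumed in Lemma~\ref{HHKL}. On a finite-dimensional Hilbert space the integrand is then a continuous vector-valued function, and the norm inequality follows, for example, from Riemann sums and the ordinary triangle inequality.
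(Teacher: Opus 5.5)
Your proposal is correct and follows the paper's own proof essentially step for step: Lemma~\ref{HHKL} with $W=U_B^\dagger U_A$, the substitution $U_B W=U_A$ inside the integral, the triangle inequality, and unitary invariance. Your symmetry remark (swapping the roles of $H_0$ and $H'$) is a worthwhile addition. The paper's chain actually produces the state evolved under $H_0$, consistent with the main text's ``ideally evolved state'', while the corollary as written refers to the $H'$-evolved state; your swap shows that both versions of the bound hold.
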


This corollary provides a quantitative measure of how differences in the Hamiltonians influence the evolution of the quantum state over time.


\subsection{Entanglement-based bound of quantum dynamics}

Here, we investigate how entanglement influences the upper bound of error for general quantum dynamics.\\

\emph{Time-independent perturbation.} Suppose $H^\prime(\tau)=H(\tau)+H_{\text{pert}}$ and $H_{\text{pert}}$ is time-independent and $H_{\text{pert}}=\sum_k \delta_k V_k+\sum_m \eta_m N_m=\sum_j {H_{\text{pert}}}_j$ where $V_k$  and $N_m$ are Pauli operators. Note that ${H_{\text{pert}}}$ is Hermitian, thus
we have

\begin{equation}
\label{entbound}
\begin{split}
         \norm{(U_A(t)-U_B(t))\ket{\psi}}&\le \int_0 ^t d\tau \norm{\big[  H^\prime(t) -H_0(t)\big]\ket{\psi(\tau)}}\\
         &=\int_0 ^t d\tau  \norm{H_{\text{pert}}\ket{\psi(\tau)}}\\
                 &=\int_0 ^t d\tau \sqrt{\bra{\psi(\tau)} H_{\text{pert}}^\dagger  H_{\text{pert}} \ket{\psi(\tau)}}\\
                           &\le \int_0 ^t d\tau 
         \sqrt{\norm{H_{\text{pert}}}^2_F+\sum_{j,j^\prime} \norm{{H_{\text{pert}}}^\dagger_j{H_{\text{pert}}}_{j^\prime}}\sqrt{2 \log (d_{\support({H_{\text{pert}}}^\dagger_j{H_{\text{pert}}}_{j^\prime})})-2S(\rho_{j,j^\prime})}}\\
    &\le t\norm{H_{\text{pert}}}_F+\int_0 ^t d\tau  \sqrt{\sum_{j,j^\prime} \norm{{H_{\text{pert}}}^\dagger_j{H_{\text{pert}}}_{j^\prime}}\sqrt{2 \log (d_{\support({H_{\text{pert}}}^\dagger_j{H_{\text{pert}}}_{j^\prime})})-2S(\rho_{j,j^\prime})}},
\end{split}
\end{equation}
where $\rho_{j,j^\prime}:=\Tr_{[N]\setminus \support({H_{\text{pert}}}^\dagger_j{H_{\text{pert}}}_{j^\prime})}(\ket{\psi}\bra{\psi})$ is the reduced density matrix of $\ket{\psi}\bra{\psi}$ on the subsystem of $\support({H_{\text{pert}}}^\dagger_j{H_{\text{pert}}}^\dagger_{j^\prime})$, and $\text{S}(\rho_{j,j^\prime})$ is the entanglement entropy of $\rho_{j,j^\prime}$. Here $\|\cdot\|_{\text{F}}$ denotes the normalized Frobenius norm, and in the penultimate inequality, we have utilized the Lemma \autoref{lemma1} while the final inequality we make use of $\sqrt{a+b}\le \sqrt{a}+\sqrt{b}$ for positive semi-definite $a$ and $b$.

Generally, states tend to be thermalized.
Suppose $t=c$, state entanglement is growing to approximate the average case, and $\tau\ge c$, $\norm{H_{\text{pert}}\ket{\psi(\tau)}}\approx \norm{H_{\text{pert}}}_F$. In this case,
we have

\begin{equation}
\begin{split}
         \norm{(U_A(t)-U_B(t))\ket{\psi}}&\lesssim 
                 \int_0 ^c d\tau \sqrt{\bra{\psi(\tau)} H_{\text{pert}}^\dagger  H_{\text{pert}} \ket{\psi(\tau)}}+(t-c)  \norm{H_{\text{pert}}}_F.\\
\end{split}
\end{equation}

For the average case (Haar random or 1 design), we have

\begin{equation}
    \begin{split}
           \int_{\psi\in \text{Haar}} d\psi \norm{ (U_0(t)-U(t))\ket{\psi}} &\le  \int_{\psi\in \text{Haar}} d\psi\int_0 ^t d\tau \sqrt{\bra{\psi(\tau)} H_{\text{pert}}^\dagger  H_{\text{pert}} \ket{\psi(\tau)}}\\
          &\le  \int_0 ^t d\tau \sqrt{\int_{\psi\in \text{Haar}} d\psi\bra{\psi(\tau)} H_{\text{pert}}^\dagger  H_{\text{pert}} \ket{\psi(\tau)}}\\
          &= \int_0 ^t d\tau \sqrt{\Tr(\frac{H_{\text{pert}}^\dagger H_{\text{pert}}}{d})}=  t \norm{H_{\text{pert}}}_F.
    \end{split}
\end{equation}
 Thus, our entanglement-induced bound matches the average case when $t\ge c$.

Remarkably, there is a special case where the worst case is equal to the average case, i.e., the spectral norm $\norm{H_{\text{pert}}}$ is equal to the normalized Frobenius norm $\norm{H_{\text{pert}}}t$. Specifically, suppose $H_{\text{pert}}^\dagger H_{\text{pert}}= \beta I$ and it is easy to prove that $\beta=\norm{H_{\text{pert}}}^2_F=\norm{H_{\text{pert}}}^2$. In this case, the upper bound of state error becomes

\begin{equation}
\begin{split}
         \norm{(U_A(t)-U_B(t))\ket{\psi}}&\le \int_0 ^t d\tau \lambda \sqrt{\bra{\psi(\tau)} H_{\text{pert}}^\dagger  H_{\text{pert}} \ket{\psi(\tau)}}\\
         &= \int_0 ^t d\tau \lambda  \sqrt{\beta}\sqrt{\bra{\psi(\tau)}I\ket{\psi(\tau)}}=  t \norm{H_{\text{pert}}}_F .
\end{split}
\end{equation}
In this case, the upper bound of error is independent of any states and achieve the average case performance.
One of the sufficient conditions for $ H_{\text{pert}}^\dagger H_{\text{pert}}=\beta I$, is  each pair of ${H_{\text{pert}}}_i$ and ${H_{\text{pert}}}_j$ ($i\ne j$) satisfies $\{ {H_{\text{pert}}}_i,{H_{\text{pert}}}_j\}=0$.\\ 

\emph{Time-dependent perturbation.} In general, $H_{\text{pert}}=H^\prime (\tau)-H_0(\tau)$ may be time-dependent, e.g, $H_{\text{pert}}=H_{\text{pert}}(\tau)$. Here for investigate how entanglement plays the role in the error bound of the time-dependent cases, we define $\Delta t= t/J$ and rewrite the upper bound as
\begin{equation}
  \norm{(U_A(t)-U_B(t))\ket{\psi}} \le \int_0 ^t d\tau \norm{ H_{\text{pert}}(\tau)\ket{\psi(\tau)}}=
   \sum_{j=0}^{J-1} \int_{j\Delta t} ^{(j+1)\Delta t} d\tau \sqrt{\bra{\psi(\tau)} H_{\text{pert}}^\dagger(\tau)  H_{\text{pert}}(\tau) \ket{\psi(\tau)}}. \label{ap:eqtimed}
\end{equation}
Note that $\norm{ H_{\text{pert}}\ket{\psi(\tau)}}^2\le \norm{H_{\text{pert}}(\tau)}_F^2+\Delta_{H^\dagger_{\text{pert}}(\tau) H_{\text{pert}}(\tau)}(\psi(\tau))$ (see Eq. \eqref{entbound}) where $$\Delta_{H^\dagger_{\text{pert}}(\tau) H_{\text{pert}}(\tau)}(\psi(\tau))=\sum_{j,j^\prime} \norm{{H_{\text{pert}}}^\dagger_j(\tau){H_{\text{pert}}}_{j^\prime}(\tau)}\sqrt{2 \log (d_{\support({H_{\text{pert}}}^\dagger_j(\tau){H_{\text{pert}}}_{j^\prime}(\tau))})-2S(\rho_{j,j^\prime})}).$$
Define $A^{(j+1)\Delta t}_{j\Delta t}=\int_{j\Delta t} ^{(j+1)\Delta t} d\tau H_{\text{pert}}^\dagger(\tau)  H_{\text{pert}}(\tau) $, a positive operator for segment $j$, we bound the error as

\begin{equation}
\begin{split}
   (\Delta t)^{\frac{1}{2}}\sum_{j=0}^{J-1} \sqrt{\frac{\Tr(A^{(j+1)\Delta t}_{j\Delta t})}{d}}+
   \sum_{j=0}^{J-1}\int_{j\Delta t} ^{(j+1)\Delta t} d\tau  \sqrt{\Delta_{{H^\dagger_{\text{pert}}(\tau)H_{\text{pert}}(\tau)}} (\psi(\tau))}. 
   \end{split}
   \label{timed}
\end{equation}
We have make use of the inequality $\sqrt{x^2+y^2}\le x+y $ for positive numbers $x$ and $y$, and the Cauchy-Schwarz inequality $\int_a^bf(x)g(x)dx\le \sqrt{\int_a^bf^2(x)dx\int_a^bg^2(x)dx}$.
Suppose $\ket{\psi}(\tau)$ have enough entanglement all the time, one has 
\begin{equation}
    \norm{(U_A(t)-U_B(t))\ket{\psi}}\lesssim  \sum_{j=0}^{J-1} (\Delta t)^{\frac{1}{2}}\sqrt{\frac{\Tr(A^{(j+1)\Delta t}_{j\Delta t})}{d}}=\int_{0}^t \norm{H_{\text{pert}}(\tau)}_F.
\end{equation}

It is easy to verify that, if $H_{\text{pert}}(\tau)=H_{\text{pert}}$, one has $A^{(j+1)\Delta t}_{j\Delta t}=\Delta t H^\dagger_{\text{pert}}H_{\text{pert}} $,
and the Eq. \eqref{timed} can reduce to 

\begin{equation}
     t\norm{H_{\text{pert}}}_F+\int_{0} ^{t} d\tau  \sqrt{\Delta_{{H^\dagger_{\text{pert}}(\tau)H_{\text{pert}}(\tau)}} (\psi(\tau))},
\end{equation}
which is the exact upper bound of the total error for the time-independent perturbation case.

\subsection{Trace distance with statistical disorder}
The statistical properties of $\delta_k$ may further cancel some errors in quantum dynamics.
Recall that $ H_{\text{pert}} = \sum_k \delta_k V_k + \sum_m \eta_m N_m$ and $\delta_k$ is the random real parameter of  Gaussian distribution. Assuming that the  Gaussian distribution has $\mathbb{E} \delta _k=0$ and $\text{Var}(\delta_k)=\sigma^2_k$.  Note that the upper bounds of the above analysis are for a single run with fixed $\delta_k$. By considering the unbiased perturbative nature of $\delta_k$, the density matrix of the real evolved state is given as 
$$\rho(t)= \mathbb{E}_{\{\delta_1,\delta_2,...\}}U (t)\ket{\psi}\bra{\psi}U^\dagger(t),$$ where 
$U(t)=\mathcal{T}e^{-i\int H^\prime(t) dt}$ and $H^\prime=H_0+H_{\text{pert}}$.
Now, the distance between the ideal evolution and the approximated one should be defined by the 1-norm $\norm{\rho_0(t)-\rho(t)}_1$ ($\norm{A}_1=\Tr(\sqrt{A^\dagger A})$), not the vector norm. Here $\rho_0(t)=U_0(t)\ket{\psi}\bra{\psi}U^\dagger_0(t) $.

The trace distance between the ideal evolution state and the approximate evolution state is 
\begin{equation}
\begin{split}
\norm{ \rho_0(t)-\rho(t) }_1 &=\|U_0(t)\ket{\psi}\bra{\psi}U^\dagger _0(t) -  \mathbb{E}_{\{\delta_1,\delta_2,...\}} U (t)\ket{\psi}\bra{\psi}U^\dagger(t)\|_1\\
&\le   \mathbb{E}_{\{\delta_1,\delta_2,...\}} \|U_0 (t)\ket{\psi}\bra{\psi}U^\dagger_0(t) - U (t)\ket{\psi}\bra{\psi}U^\dagger(t)\|_1\\
& \le 2   \mathbb{E}_{\{\delta_1,\delta_2,...\}} \norm{(U_0(t)-U(t))\ket{\psi}} \\
& \le 2  \mathbb{E}_{\{\delta_1,\delta_2,...\}}\int_0 ^t d\tau \sqrt{\bra{\psi(\tau)} H_{\text{pert}}^\dagger H_{\text{pert}} \ket{\psi(\tau)}}\\
& \le 2 \int_0 ^t d\tau \sqrt{\bra{\psi(\tau)}   \mathbb{E}_{\{\delta_1,\delta_2,...\}}H_{\text{pert}}^\dagger H_{\text{pert}} \ket{\psi(\tau)}}.
\end{split}
\end{equation}
 For the first inequality, we use the triangle inequality. For the second inequality, for any two projectors $\ket{\chi}\bra{\chi}$ and $\ket{\phi}\bra{\phi}$, the following inequality holds, e.g., $\norm{\ket{\chi}\bra{\chi}-\ket{\phi}\bra\phi{}}_1\le 
2 \norm{\ket{\psi}-\ket{\phi}}$. In the last inequality, we utilize the Cauchy-Schwarz inequality. 

Generally, $H_{\text{pert}} = V+N=\sum_k\delta_k V_k+\sum_m\eta_m N_m$, one has

\begin{equation}
\begin{split}
    \mathbb{E}_{\{\delta_1,\delta_2,...\}}H_{\text{pert}}^\dagger H_{\text{pert}}&=  \mathbb{E}_{\{\delta_1,\delta_2,...\}} (\sum_k \delta_k V_k + \sum_m \eta_m N_m)^\dagger(\sum_{k^\prime} \delta_{k^\prime} V_{k^\prime} + \sum_{m^\prime} \eta_{m^\prime} N_{m^\prime})    \\
    &= \mathbb{E}_{\{\delta_1,\delta_2,...\}}(\sum_{k,k^\prime}\delta_k\delta_{k^\prime} V^\dagger_kV_{k^\prime}+\sum_{k,m^\prime}\delta_k\eta_{m^\prime} V^\dagger_kN_{m^\prime}+\sum_{m,k^\prime}\eta_m\delta_{k^\prime} N^\dagger_mV_{k^\prime}+\sum_{m,m^\prime}\eta_m\eta_{m^\prime} N^\dagger_mN_{m^\prime})\\
    &=\mathbb{E}_{\{\delta_1,\delta_2,...\}}\sum_{k,k^\prime}\delta_k\delta_{k^\prime} V^\dagger_kV_{k^\prime}+\sum_{m,m^\prime}\eta_m\eta_{m^\prime} N^\dagger_mN_{m^\prime}\\
    &=\mathbb{E}_{\{\delta_1,\delta_2,...\}}\sum_{k}\delta^2_k V^\dagger_kV_{k}+\sum_{m,m^\prime}\eta_m\eta_{m^\prime} N^\dagger_mN_{m^\prime}\\
    &=\mathbb{E}_{\{\delta_1,\delta_2,...\}}\sum_{k}\delta^2_k V^\dagger_kV_{k}+N^\dagger N.\\
\end{split}
\end{equation}
Suppose $\{V_k\}$ are Pauli operators (up to a global phase), 
\begin{equation}
     \mathbb{E}_{\{\delta_1,\delta_2,...\}}H_{\text{pert}}^\dagger H_{\text{pert}}=\mathbb{E}_{\{\delta_1,\delta_2,...\}}\sum_{k}\delta^2_k I+N^\dagger N.
\end{equation}

Now the distance of the quantum state is given as
\begin{equation}
\begin{split}
\norm{ \rho_0(t)-\rho(t) }_1 &\le 2 \int_0 ^t d\tau \sqrt{\bra{\psi(\tau)} \mathbb{E}_{\{\delta_1,\delta_2,...\}}( H_{\text{pert}}^\dagger H_{\text{pert}} )\ket{\psi(\tau)}}\\
&=2 \int_0 ^t d\tau \sqrt{\bra{\psi(\tau)} (\mathbb{E}_{\{\delta_1,\delta_2,...\}}\sum_{k}\delta^2_k I+N^\dagger N)\ket{\psi(\tau)}}\\
&=2 \int_0 ^t d\tau \sqrt{\mathbb{E}_{\{\delta_1,\delta_2,...\}}\sum_{k}\delta^2_k+\bra{\psi(\tau)}  N^\dagger N\ket{\psi(\tau)}}\\
&= 2 \int_0 ^t d\tau \sqrt{\sum_k \sigma^2_k+\bra{\psi(\tau)}  N^\dagger N\ket{\psi(\tau)}},\\
&\le 2t \sqrt{\sum_k \sigma^2_k} +2\int_0 ^t d\tau \sqrt{\bra{\psi(\tau)}  N^\dagger N\ket{\psi(\tau)}},\label{eq:disorder},
\end{split}
\end{equation}
where $\sigma_k=\delta^2_k$ is the variance of the Gaussian distribution. Eventually, one has the entanglement-based bound for the density matrix case, i.e.,

\begin{equation}
    \norm{ \rho_0(t)-\rho(t) }_1\le 2t \sqrt{\sum_k \sigma^2_k}+ 2t\norm{N}_F+2 \int_0^t \tau \sqrt{\Delta_{N^\dagger N}(\psi(\tau))},
\end{equation}
where $\Delta_{N^\dagger N}(\psi(\tau))=\sum_{j,j^\prime} \norm{{N}^\dagger_j{N}_{j^\prime}}\sqrt{2 \log (d_{\support({N}^\dagger_j{N}_{j^\prime})})-2S(\rho_{j,j^\prime}(\tau))}$.

\section{The entanglement role in quantum control}


In this section, we specifically focus on the quantum control with an entangled input state, which is ubiquitous in the realm of quantum many-body physics. We demonstrate that entanglement inherently provides a level of robustness against quantum control errors, allowing us to derive tighter and more reliable error analyses.

Suppose one wants to implement a $k$-qubit quantum gate. In general, the $k$ target qubit is coupled to neighboring qubits (spectators or intruders). The total Hamiltonian with physical basis can be expressed as   $\tilde{H}=\tilde{H}_0+\tilde{V}(t)$ where $\tilde{H}_0$ is the native Hamiltonian and $\tilde{V}(t)$ is the control field.
To analyze coupling effects in the eigenbasis  of $\tilde{H}_0$, the Hamiltonian can be decomposed into blocks \cite{hai2025scalablerobustquantumcontrol}
\begin{equation}
H^{m;k} = \bigoplus_{i=1}^{2^m} H_i + \bigoplus_{i,j=1}^{2^m} V_{ij}(t)=\begin{pmatrix}
            H_1 + V_{11} & V_{12} & ... & V_{1m} \\
            V^\dagger_{12} & H_2 + V_{22}& ... & V_{2m} \\
            ... & ... & ... & ...\\
            V^\dagger_{1m} & V^\dagger_{2m}  & ... & H_m +V_{mm}
           \end{pmatrix}.   
           \label{Eq_GeneralH}
\end{equation}
Here, the superscript $m;d$ denotes a system with $m$ neighboring qubits affecting $k$ target qubits. The Hamiltonian decomposes into $2^m\times2^m$ blocks corresponding to different configurations of the neighboring qubits . For example, in a system with two neighboring qubits and one target qubit, $\text{span}\{\left|\uparrow\uparrow;\uparrow\right\rangle, \left|\uparrow\uparrow;\downarrow\right\rangle\}$ represents the $i=1$ diagonal subspace where both neighboring qubits are in the $\left|\uparrow\right\rangle$ state.

\begin{figure}
    \centering
    \includegraphics[width=0.8\linewidth]{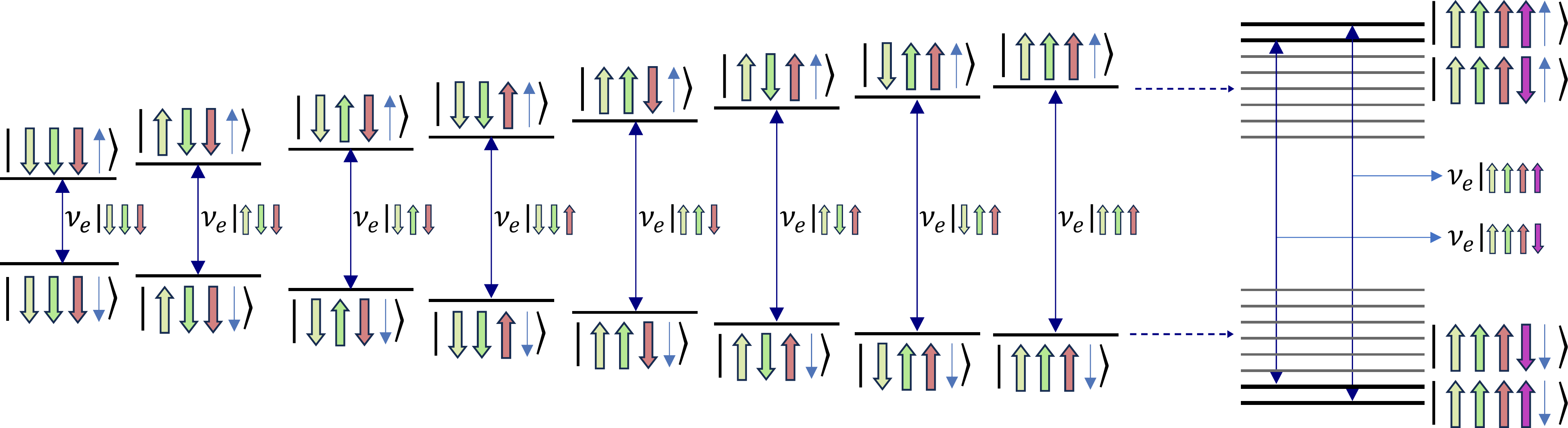}
    \caption{Spectator-induced energy-level splitting. The interaction between the target-controlled qubit and spectators causes the energy levels to split into energy bands. As a result, the spectators lead to unavoidable errors in the implementation of a quantum gate.}
    \label{fig:energy_band}
\end{figure}

Without loss of generality, we designate the $i=1$ subspace as our reference and define the reference Hamiltonian $H_{\text{ref}}=H_1+V_{11}$. The difference between the reference and the original Hamiltonian is identified as the noise Hamiltonian:
\begin{equation}
H_{\text{noise}} = H^{m;k}-H_{\text{ref}}=\bigoplus_{i=2}^{2^m} (\Delta H_i + \Delta V_{ii}) + \bigoplus_{i,j=1,i\neq j}^{2^m} V_{ij}=\begin{pmatrix}
            0 & V_{12} & ... & V_{1m} \\
            V^\dagger_{12} & \Delta H_2 + \Delta V_{22}& ... & V_{2m} \\
            ... & ... & ... & ...\\
            V^\dagger_{1m} & V^\dagger_{2m}  & ... & \Delta H_m +\Delta V_{mm}
           \end{pmatrix}   
\end{equation}
Here, $\Delta H_i=H_i-H_1$ represents frequency shifts experienced by target qubits due to different configurations of neighboring qubits, collectively forming a manifold of $2^m$ distinct energy levels (as shown in Figure \ref{fig:energy_band}). The terms $\Delta V_{ii}=V_{ii}-V_{11}$ represent variations in control field effects across different neighboring-qubit configurations, while $V_{ij}(t)$ for $i\neq j$ captures crosstalk between different subspaces. These noise terms introduce correlated errors, parasitic operations, and cross-coupling effects that degrade gate fidelities.

\subsection{Error analysis of quantum control with a single two-party partition}

Recall that $H_{\text{error}}$ is time-dependent in quantum control protocols, e.g, $H_{\text{error}}=H_{\text{error}}(\tau)$. 
Here, for simplicity to show the important role of entanglement in quantum control, we treat $H_{\text{error}}^\dagger (
       \tau ) H_{\text{error}}(\tau)$ as a single term. Denote $A$ as the subsystem of target qubits and spectators, and $B$ as the remaining qubits. In this setting, $H_{\text{error}}^\dagger (
       \tau ) H_{\text{error}}(\tau)$ can be consider a local error term in the subsystems $A$. 
Now, by using Lemma \ref{lemma1} (single term case), the error upper bound of the implementation of a quantum gate with state $\ket{\psi(0)}=\ket{\psi(0)}_{AB}$ is given as 

\begin{equation}
\begin{split}
       \norm{(U_0(t)-U(t))\ket{\psi(0)}_{AB}} &\le  \int_0^t d\tau  \sqrt{\bra{\psi(\tau)}_{AB}H_{\text{error}}^\dagger (
       \tau
       ) H_{\text{error}}(\tau)\ket{\psi(\tau)}_{AB}}\\
       &\le \int_0^t d\tau \sqrt{\norm{H_{\text{error}}}^2_F+ \norm{ H_{\text{error}}^\dagger  H_{\text{error}}}\sqrt{2(\log d_A-S(\rho_{A}(\tau)))}} \\
       & \le \int_0^t d\tau \norm{H_{\text{error}}(\tau)}_F + \int_0^t\sqrt{\norm{ H_{\text{error}}^\dagger (
       \tau
       ) H_{\text{error}}(\tau)}\sqrt{2(\log d_A-S(\rho_{A}(\tau)))}}.
\end{split}
\end{equation}
 In the main text, we trace the upper bound of quantum gate error with quantum control by $\int_0 ^t d\tau \sqrt{\bra{\psi(\tau)} H_{\text{error}}^\dagger  H_{\text{error}} \ket{\psi(\tau)}}$. 
 For the modest system sizes considered in quantum control protocols, this quantity can be calculated efficiently. 
 Clearly, if the entanglement of $\ket{\psi (\tau)}_{AB}$ is large enough at the period of quantum control, one has $\norm{(U_0(t)-U(t))\ket{\psi}}\approx \int_0^t d\tau \norm{H_{\text{error}}(\tau)}_F$.
 
It is important to note that entanglement can also occur between the target qubits and the nearby spectators within subsystem A, which can contribute to noise suppression. To clearly and simply demonstrate the role of entanglement in quantum control, we categorize these interactions into a single class, e.g., treating the target and spectators as a subsystem, rather than subdividing them into many subspaces composed of error terms $H_{\text{error}}(\tau)_j$.




\subsection{Error analysis with interaction picture}

To more intuitive to show how entanglement suppress the evolved state error after evolution, one can decompose the total evolution operator as $U = U_0 U_{\text{e}}$, separating the ideal evolution $U_0(t) = \mathcal{T} \exp\{-i \int_0^t d\tau H_{\text{ref}}(\tau)\}$ from the error evolution $U_{\text{e}}(t) = \mathcal{T} \exp\{-i \int_0^t d\tau V_{\text{I}}(\tau)\}$. Here, $V_{\text{I}}= U_0^{\dagger} H_{\text{error}} U_0$ represents the noise Hamiltonian in the interaction picture. 
For any operator $K$, we define the super-operator $\mathcal{R}(K(t))=\int_0^t d\tau U_0(\tau)^{\dagger} K(\tau) U_0(\tau)$, which represents the time-integrated effect of $K$ as transformed by the ideal evolution. In the Pauli basis $\{\sigma^k_\nu\}=\{X,Y,Z,I\}^{\otimes k}$, an operator $K=\mathbf{T}\cdot \mathbf{\sigma}^k$ can be conceptualized as a point moving with velocity $\mathbf{T}$ in operator space, and $\mathcal{R}(K(t))=\mathbf{r}\cdot \mathbf{\sigma}^k$ represents the integrated path traced by this point in the Pauli frame~\cite{hai2025geometric}.

Under the assumption that the noise terms are small in comparison to the ideal Hamiltonian, one can employ the first-order Magnus expansion to eliminate time-ordering, resulting in a product of error evolution operators corresponding to each error term. Additionally, we applied the first order of the Taylor series to derive:
\begin{equation}
\begin{aligned}
U_{\text{e}}(t) &=\mathcal{T} \exp\{-i \mathcal{R} H_{\text{error}}(t))\}\approx \prod_{j,k=1}^{2^m}e^{-i\mathcal{R}(\Delta H_j)}e^{-i\mathcal{R}(\Delta V_j)}e^{-i\mathcal{R}(V_{jk})}\\
&\approx I - i\sum_{\mu} \mathcal{R}(K_\mu)=I - i\sum_{\mu}\mathbf{r}_\mu(t)\cdot \mathbf{\sigma}^{m+k},
\end{aligned}
\label{Eq_GeneralUe}
\end{equation}
where each noise term $K_\mu$ belongs to the set $\{\Delta H_j,\Delta V_j, V_{jl} \}$. Each error component generates an error curve $\mathbf{r}_\mu(t)$ in operator space, with the vector norm $\|\mathbf{r}_\mu(t)\|$ quantifying the magnitude of the associated error. Note that here $\sigma^k$ denotes the Pauli operator for the target $k$ qubits. 

Recall that $U_e(t)=I+\mathscr{M}(t)\approx I_{md} - i\sum_{\mu}\mathbf{r}_\mu(t)\cdot \mathbf{\sigma}^{m+k}$, now the additive error of the ideal unitary $U_0(t)$ and the approximate unitary $U(t)=U_0(t)U_e(t)$ for given initial state $\ket{\psi}$ is given as

\begin{equation}
\begin{split}
       \norm{ (U_0(t)-U(t))\ket{\psi}}&= \norm{U_0(t) (I-U_e)\ket{\psi}}\\
        &\approx
        \norm{U_0(t)(i\sum_{\mu}\mathbf{r}_\mu(t)\cdot \mathbf{\sigma}^{m+k}) \ket{\psi}}=\norm{(\sum_{\mu}\mathbf{r}_\mu(t)\cdot \mathbf{\sigma}^{m+k}) \ket{\psi}}.
\end{split}
\end{equation}
The average error is bounded by

\begin{equation}
\begin{split}
      \int_{\psi\in \text{Haar}} d\psi \norm{ (U_0(t)-U(t))\ket{\psi}}&\approx   \int_{\psi\in \text{Haar}} d\psi \norm{(\sum_{\mu}\mathbf{r}_\mu(t)\cdot \mathbf{\sigma}^{k+m}) \ket{\psi}}\\
      &=\int_{\psi\in \text{Haar}} d\psi \sqrt{\bra{\psi}(\sum_{\mu^\prime}\mathbf{r}_{\mu^\prime}(t)\cdot \mathbf{\sigma}^{k+m}) ^\dagger (\sum_{\mu}\mathbf{r}_\mu(t)\cdot \mathbf{\sigma}^{k+m}) \ket{\psi}}\\
      &\le\sqrt{ \int_{\psi\in \text{Haar}} d\psi \bra{\psi}(\sum_{\mu^\prime}\mathbf{r}_{\mu^\prime}(t)\cdot \mathbf{\sigma}^{k+m}) ^\dagger (\sum_{\mu}\mathbf{r}_\mu(t)\cdot \mathbf{\sigma}^{k+m}) \ket{\psi}}\\
      &=\sqrt{\Tr[(\sum_{\mu^\prime}\mathbf{r}_{\mu^\prime}(t)\cdot \mathbf{\sigma}^{m+k}) ^\dagger (\sum_{\mu}\mathbf{r}_\mu(t)\cdot \mathbf{\sigma}^{m+k})]/d}\\
      &= \norm{\sum_{\mu}\mathbf{r}_\mu(t)\cdot \mathbf{\sigma}^{m+k}}_F.
\end{split}
\end{equation}
 In the above inequality, we use the 1-design properties of Haar-random. This suggests our analysis holds for 1-design ensembles.

In quantum control theory, one usually defines the total error distance as a measure of control robustness against all noises
\begin{equation}
D = 2^{-\frac{m+k}{2}} \sum_\mu \left\|\mathbf{r}_\mu\cdot \mathbf{\sigma}^{m+k}\right\|_{\text{2}}
= \sum_\mu \|\mathbf{r}_\mu(t)\|.
\end{equation}
Here $\|\cdot\|_{\text{2}}$ denotes the 2 norm or unnormalized Frobenius norm. Since $\norm{\sum_{\mu}\mathbf{r}_\mu(t)\cdot \mathbf{\sigma}^{m+k}}_F \le \sum_\mu \norm{\mathbf{r}_\mu(t)}$, this measure is the upper bound of the average performance of multiplicate error for random input states. It also suggests that the transitional bound of quantum control is not tight in the average case (note that this distance is not the worst case either).

For perfect gate implementation with high precision and robustness, we require $U_0(T) = U_{\text{target}}$ or the noiseless gate fidelity and error distance $D=0$. 
Therefore, the error upper bound serves as the error-correcting constraint when designing robust control pulses~\cite{xue2025quantum}.
Here, we show that, in the case of an input state that has sufficient entanglement between the target qubits and the remaining qubits, the error of implementing the target unitary can be reduced to the random case. Since 
$       \norm{ (U_0(t)-U(t))\ket{\psi}} \approx\norm{(\sum_{\mu}\mathbf{r}_\mu(t)\cdot \mathbf{\sigma}^{m+k}) \ket{\psi}}$, by utilizing our entanglement-based lemma,
we have

\begin{equation}
    \norm{ (U_0(t)-U(t))\ket{\psi}} \approx \norm{\sum_{\mu}\mathbf{r}_\mu(t)\cdot \mathbf{\sigma}^{m+k}}_F.
\end{equation}
This suggests that entanglement can induce robustness in implementing quantum gates.

\subsection{Quantum control of coupled quantum dots}

To illustrate our approach, we consider a common example: a pair of coupled gate-defined quantum dots. The spin states follow an extended Heisenberg Hamiltonian~\cite{burkard2023semiconductor}
\begin{equation}
 \tilde{H}= \sum_i\mathbf{B}_i \cdot \mathbf{S}_i + \sum_{<i,j>}J
  \left( \mathbf{S}_i \cdot \mathbf{S}_j - \frac{1}{4} \right).
  \label{Eq_Heisenberg_Hamiltonian}
\end{equation}
Here, $\mathbf{S}_j = (S_{Xj}, S_{Yj}, S_{Zj}) / 2$ are the spin operators, and $\mathbf{B}_j = (B_{x, j}, B_{y, j}, B_{z, j})$ represents the magnetic field at each qubit. The z-components of the magnetic fields determine the electron spin resonance frequencies, while transverse fields provide qubit control. The exchange coupling $J$ brings the two spins into the coupled basis $\{\left|\uparrow \uparrow \right\rangle, |\tilde{\uparrow \downarrow}\rangle, |\tilde{\downarrow \uparrow}\rangle, \left|\downarrow \downarrow \right\rangle\}$, in which the Hamiltonian without transverse controls is diagonalized as $\tilde{H}_0 = \text{diag} \{ 2 E_z, -
\Delta \tilde{E}_z - J, \Delta \tilde{E}_z - J, - 2 E_z \}$, where $\Delta \tilde{E}_z = \sqrt{J^2 + \Delta E_z^2}$ , $|\tilde{\uparrow \downarrow}\rangle=\cos\theta|\uparrow \downarrow\rangle+\sin\theta| \downarrow\uparrow\rangle$, $ |\tilde{\downarrow \uparrow}\rangle=\sin\theta|\uparrow \downarrow\rangle+\cos\theta| \downarrow\uparrow\rangle$, and $\tan \theta = \frac{J}{\Delta E_z + \Delta \tilde{E}_z}$.

When driving qubit 2 with a transverse field and transforming to the rotating frame, we obtain 
\begin{equation}
    H^{1;1} = \begin{pmatrix}
        \tilde{H}_1 + \tilde{V}_{11} & \tilde{V}_{12} \\
        \tilde{V}_{21} & \tilde{H}_2 + \tilde{V}_{22}
    \end{pmatrix},
    \label{Eq_2dot_H2_full}
\end{equation}
where the components are given by $H_{1,2}= \text{diag} \{\pm J/4, \mp J/4 \}$, $\tilde{V}_{11} = \tilde{V}_{22} = \frac{\Omega_{2}}{2} X$, $\tilde{V}_{12,21} = \frac{1}{2} e^{\pm i\tilde{E}_z t} \tan \theta \Omega_2 Z$. In the coupled Pauli basis, it is simplified as
\begin{equation}
    \begin{aligned}
        H^{1;1} &= \frac{\Omega_2(t)}{2} IX + \frac{J}{4} ZZ
         + \frac{\tan\theta \Omega_2(t)}{2} \\ 
         & \times
         \Big[\cos(\Delta \tilde{E}_z t) XZ - \sin(\Delta \tilde{E}_z t) YZ \Big],
    \end{aligned}
    \label{Eq_2dot_H_Pauli}.
\end{equation}
This Hamiltonian contains the intended control term ($IX$) along with parasitic terms: always-on $ZZ$ coupling and crosstalk terms ($XZ$, $YZ$). Besides the coupling-induced terms, the total noise Hamiltonian also includes frequency drifts $\delta$ and control amplitude fluctuations $\epsilon$~\cite{yi2024robust}. Hence, the total perturbation Hamiltonian is
\begin{equation}
    \begin{aligned}
       H^{1,1}_{\text{pert}} &= \delta IZ + \delta ZI + \epsilon\Omega_2(t)IX + \frac{J}{4} ZZ
         + \frac{\tan\theta \Omega_2(t)}{2}
         \Big[\cos(\Delta \tilde{E}_z t) XZ - \sin(\Delta \tilde{E}_z t) YZ \Big].
    \end{aligned}
    \label{Eq_2dot_H_Pauli}
\end{equation}

This perturbative term can be generalized to the multi-spectator case. Given  $\{\mathrm{m}, \mathrm{1} \}$ spectators and target qubits, the total perturbation Hamiltonian that induces errors on the target qubit is 
\begin{equation}
    \begin{aligned}
        H_{\text{pert}}^{m,1} &=\sum_i \delta Z_i + \epsilon\Omega_2(t)X_1 + \sum_i\frac{J}{4} Z_i Z_1
         + \sum_i\dfrac{\tan\theta \Omega_2(t)}{2}
         \Big[\cos(\Delta \tilde{E}_z t) X_iZ_1 - \sin(\Delta \tilde{E}_z t) Y_iZ_1 \Big].
    \end{aligned}
    \label{Eq_multidot_H_Pauli}
\end{equation}

In experimentally relevant regimes where $J \ll \Delta E_z$, we consider $J/\Delta E_z < 0.1$ and get $\tan \theta \approx J/2\Delta E_z$. The magnitudes of crosstalk noise and always-on coupling scale as $|\tan\theta \Omega_2|$ and $J$, respectively.
While crosstalk is often negligible in weak-drive regimes~\cite{russ2018high, gungordu2020robust, kanaar2021single}, it becomes significant under strong-drive conditions needed for fast gate operations, thereby necessitating the simultaneous suppression of both error channels~\cite{heinz2024analysis, mkadzik2022precision} for high-fidelity operations in multi-qubit systems.

\subsection{Two-qubit gate with quantum control}
For a two-qubit gate, $\{m,2\}$ indicates two target qubits and $m$ spectators, depending on lattice geometry. For example, in a square lattice, $m=6$. The Hamiltonian to implement a $\sqrt{SWAP}$ gate \cite{he2019two} in the raw basis (computational basis) relies on the standard homogeneous Heisenberg interaction $\tilde{H}= \sum_i\mathbf{B} \cdot \mathbf{S}_i + \sum_{<i,j>}J
  \left( \mathbf{S}_i \cdot \mathbf{S}_j - \frac{1}{4} \right)$ by taking $\mathbf{B}_1=\mathbf{B}_2$. Correspondingly, the entangling gate is $U=diag\{1,-1,1,1\}$ in the coupled basis $\{\left|\uparrow \uparrow \right\rangle, |\tilde{\uparrow \downarrow}\rangle, |\tilde{\downarrow \uparrow}\rangle, \left|\downarrow \downarrow \right\rangle\}$, and the ideal Hamiltonian is
\begin{equation}
    \begin{aligned}
       H^{m,2} &= -\frac{\omega}{2} \sum_{j=1,2} Z_j + \frac{J}{4} Z_1 Z_2 - \sum_{i \in \{m\}}\frac{\omega_i}{2} Z_i.
    \end{aligned}
\end{equation}

Noticing that the two target qubits are on resonance and decoupled from any other spectators, the Hamiltonian in the rotating frame, with all local spins, therefore becomes
\begin{equation}
    \begin{aligned}
       H_{\text{rot}}^{m,2} &= \frac{J}{4} Z_1 Z_2.
    \end{aligned}
    \label{eq:2qubit_rot}
\end{equation}

Eventually, the overall error term of the Hamiltonian for this system, including the spectators, is
\begin{equation}
    \begin{aligned}
       H_{\text{pert}}^{m,2} &=\sum_{j\in{m,2}} \delta_j Z_j + \sum_{i \in \{m\}, j=1,2}\frac{J_{res}}{4} Z_i Z_j.
       \label{eq:2qubit_per}
    \end{aligned}
\end{equation}
where $\delta_j$ are variations in qubit energies, $J_{res}$ are unwanted couplings between targets and spectators, while $J$ is the coupling strength tuned to implement the $\sqrt{SWAP}$ gate.




\section{More Numerical results and details }
Here we demonstrate numerical results to illustrate the role that entanglement plays in both analog simulation of the quantum Ising model with a transverse field (QIMF) and quantum control in the quantum dot model.

\subsection{QIMF models}

\emph{1D-QIMF.---} First, we consider a simple 1-D QIMF with Hamiltonian
\begin{equation}
    H_0=h_x\sum_{i=1}^NX_i+h_y\sum_{i=1}^NY_i+J\sum_iX_iX_{i+1},
\end{equation}
where the parameters $h_x=0.809,h_y=0.9045,J=1$. There is a perturbation term in actual analog simulation $H_{\text{analog}} =H_0+H_{\text{dis}}+ H_{\text{pert}}$. The source of perturbation includes stochastic disorder and
intrinsic control imperfections. In our numerical tests, the disorder term $H_{\text{dis}}$ is set as a combination of $N$ single-qubit Pauli operators $X_i$'s, i.e., $H_{\text{dis}}=\sum_{i=1}^N\delta_iX_i$, where $\delta_i$ is randomly chosen with respect to a normal distribution and $X_i$ represents Pauli $X$ acting on the $i$th qubit. The control disorder term is set as $H_{\text{imp}}=\eta\sum_{i=1}^{N-1}X_iX_{i+1}$, where $\eta$ is a constant. We demonstrate numerical tests of three different scenarios: with disorder only ($\delta_i\in\mathcal N(0,0.001),\eta=0$), with imperfection only ($\delta_i=0,\forall i$ and $\eta=0.001$), and with both disorder and imperfection terms ($\delta_i\in\mathcal N(0,0.001),\eta=0.001$). We compare the simulation error, defined as $$\|e^{-iH_{\text{analog}}t}\ket{\psi(0)}-e^{-iH_0t}\ket{\psi(0)}\|,$$ for a typical initial state versus an atypical initial state in Figure \ref{fig:placeholder}. In typical cases ($\ket{\psi(0)}=\ket0^{\otimes N}$), entanglement entropy increases with the evolution until saturation, whereas in atypical cases ($\ket{\psi(0)}=\ket+^{\otimes N}$), it remains at a low level (See Figure \ref{fig:placeholder}(b)(d)(e)). Consequently, in typical scenarios, the simulation error rapidly decreases and stabilizes near the average estimate, whereas in atypical scenarios, the error remains larger and may approach the worst-case estimate given by the spectral norm. 
\begin{figure}
    \centering
    \subfloat[]{
    \includegraphics[width=0.42\linewidth]{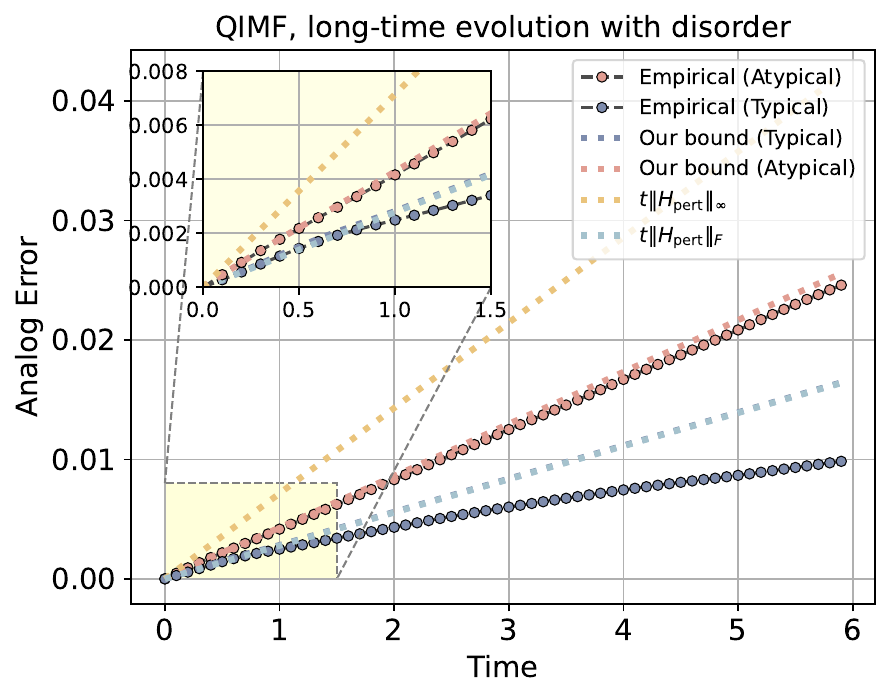}
    }
    \subfloat[]{
    \includegraphics[width=0.5\linewidth]{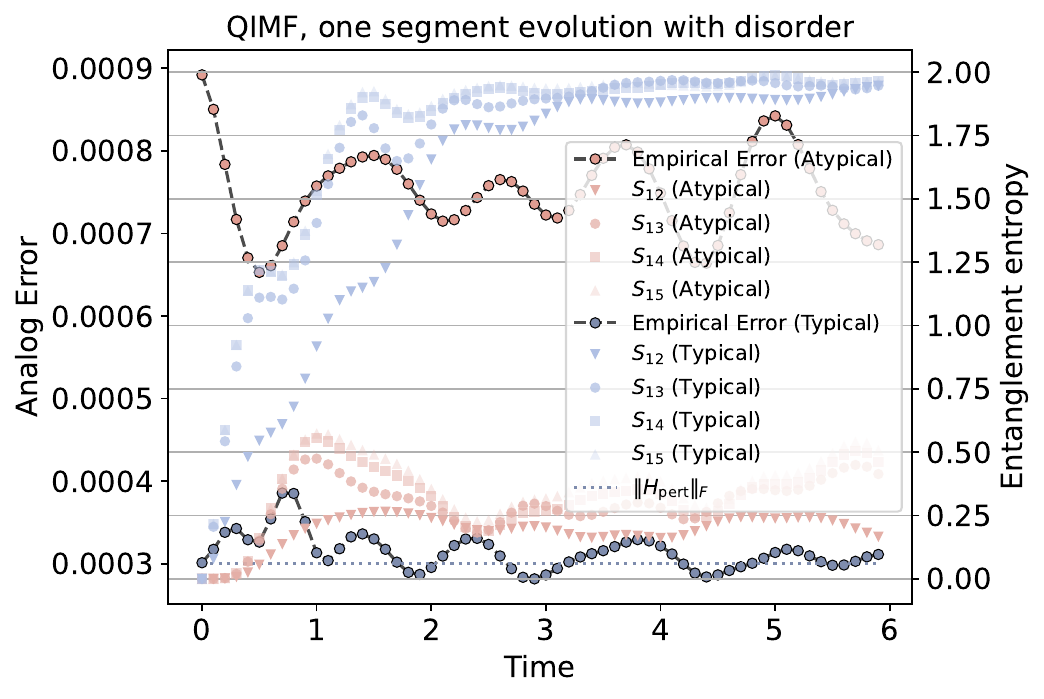}
    }
    
    \subfloat[]{
    \includegraphics[width=0.42\linewidth]{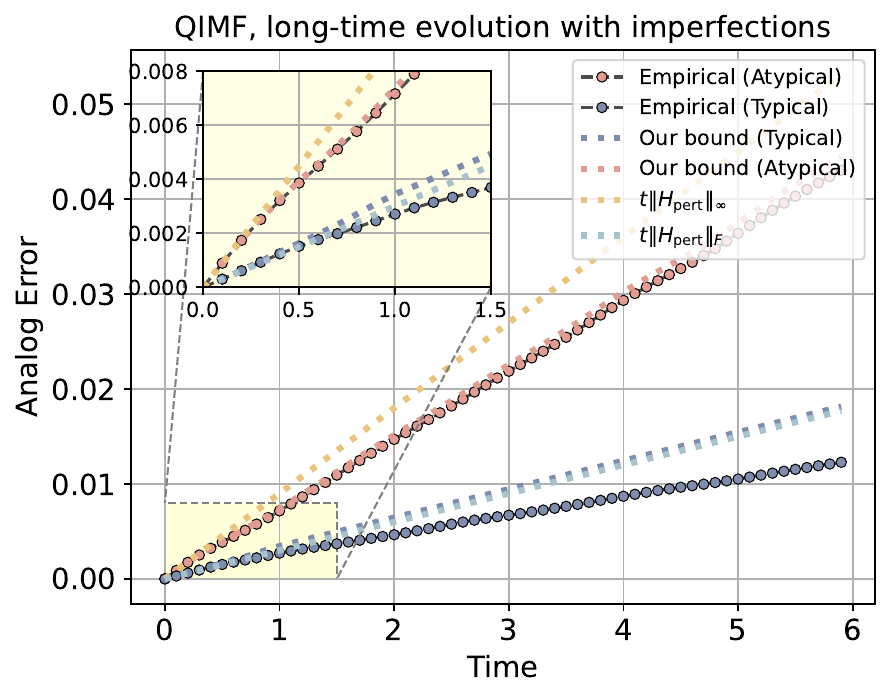}
    }
    \subfloat[]{
    \includegraphics[width=0.5\linewidth]{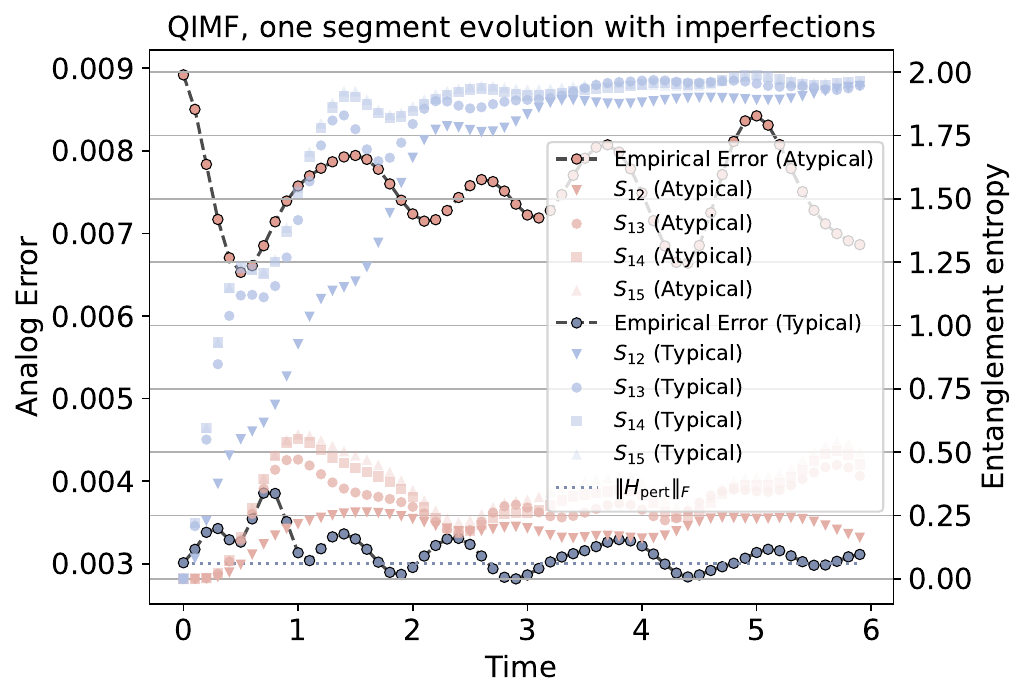}
    }

    \subfloat[]{
    \includegraphics[width=0.42\linewidth]{Figures/long-time_di,c=1_.pdf}
    }
    \subfloat[]{
    \includegraphics[width=0.5\linewidth]{Figures/one_segment_di_.pdf}
    }
    \caption{Error of Analog simulation in typical and atypical cases for 1D QIMF model. (a)(c)(e) Error of long-time analog simulation defined by the distance between real state and simulated state $\|(U_0(t)-U(t))\ket\psi\|$. Error models are with disorder only ($\delta_i\in\mathcal N(0,0.01),\eta=0$), with imperfection only ($\delta_i=0,\forall i$ and $\eta=0.01$) and with both disorder and imperfection terms ($\delta_i\in\mathcal N(0,0.01),\eta=0.01$), respectively. In typical input state, where the entanglement entropy grows during the system's evolution, our estimate closely matches the average-case performance, scaling with $\norm{H_{\text{pert}}}$. In contrast, in an atypical input state, where the entanglement entropy remains low throughout the evolution, the error scaling deviates significantly from the average case and approaches the worst-case bound characterized by the spectral norm. (b)(d)(f) Error of one-segment analog simulation with respect to different initial states defined as $
    \|(U_0(\delta t)-U(\delta t))\ket{\psi(t)}\|$.Error models are with disorder only, with imperfection only, and with both disorder and imperfection, respectively. The entanglement entropy of both typical and atypical cases defined on the subsystem of two qubits ($S_{1,2},~S_{1,3},~S_{1,4},~S_{1,5}$) is also shown.}
    \label{fig:placeholder}
\end{figure}

In practical experimental scenarios, the disorder term will include the effects of various perturbations, so it is more reasonable to treat the coefficient of the disorder term as a normally distributed random number $\delta_k\in\mathcal N(0,\sigma)$. In our numerical tests, the variance is chosen as $\sigma=0.001$. And we consider an analog Hamiltonian  
\begin{equation}
    H'=H_0+\sum_{k=1}^N\delta_kX_k+\eta\sum_{i=1}^{N-1}X_iX_{i+1},
\end{equation}
where $\eta=0.001$. In this setting, the analog simulation error is shown to be small for most cases.

\begin{figure}
    \centering
    \subfloat[]{
    \includegraphics[width=0.42\linewidth]{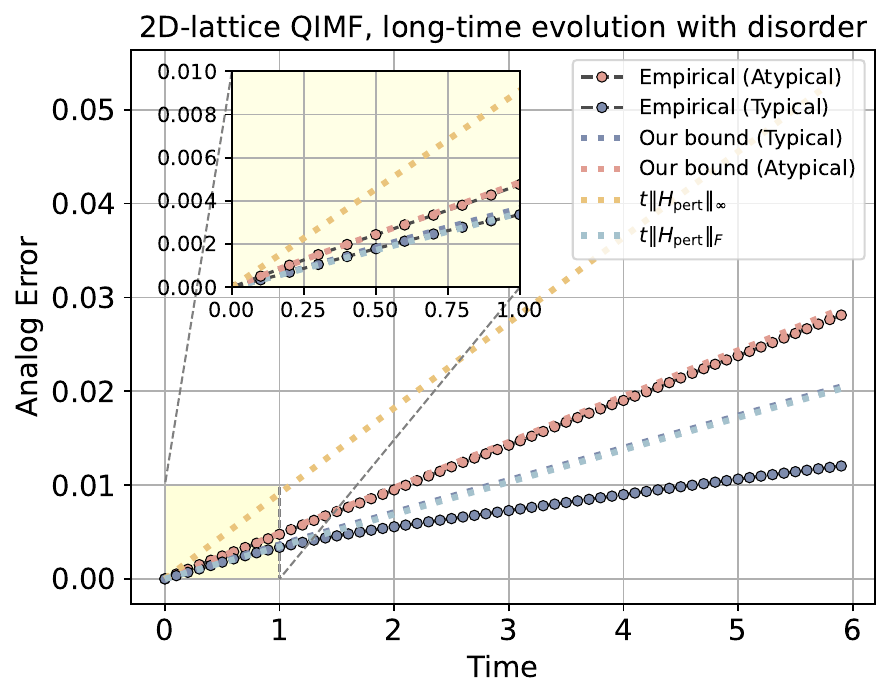}
    }
    \subfloat[]{
    \includegraphics[width=0.5\linewidth]{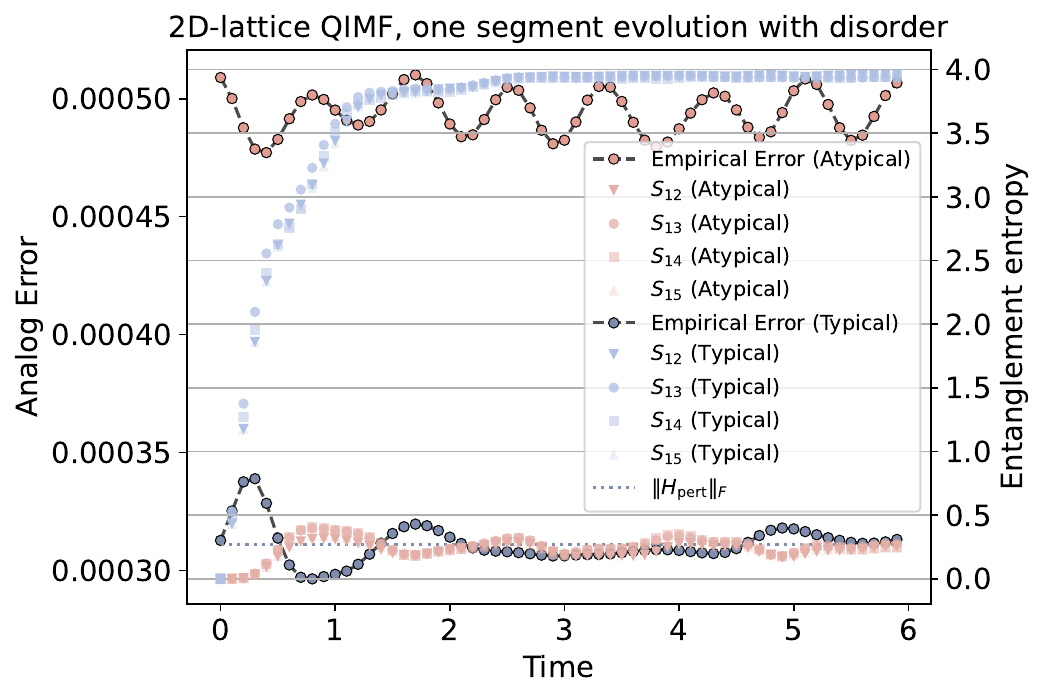}
    }
    
    \subfloat[]{
    \includegraphics[width=0.42\linewidth]{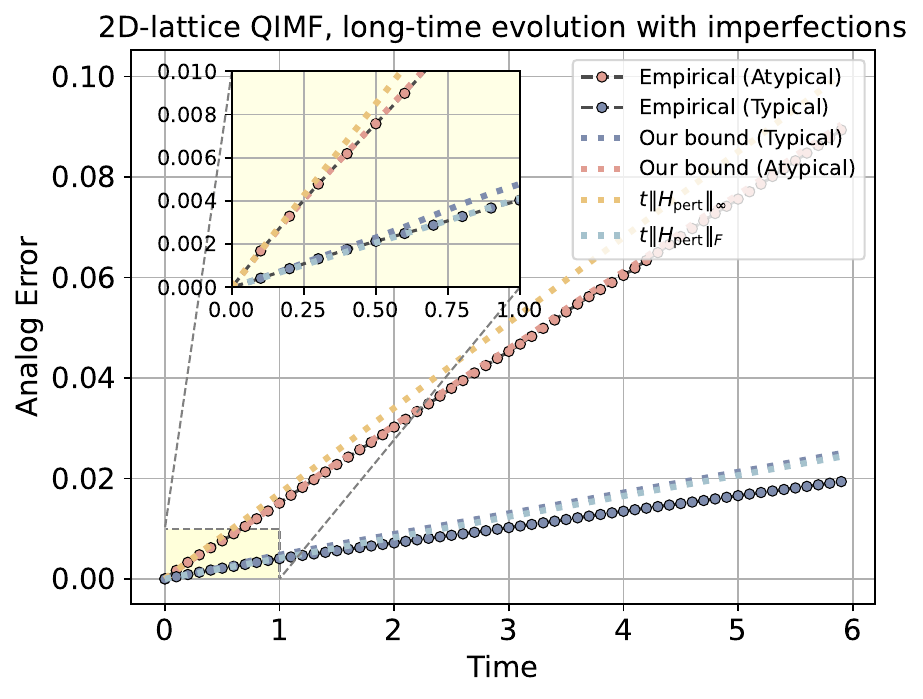}
    }
    \subfloat[]{
    \includegraphics[width=0.47\linewidth]{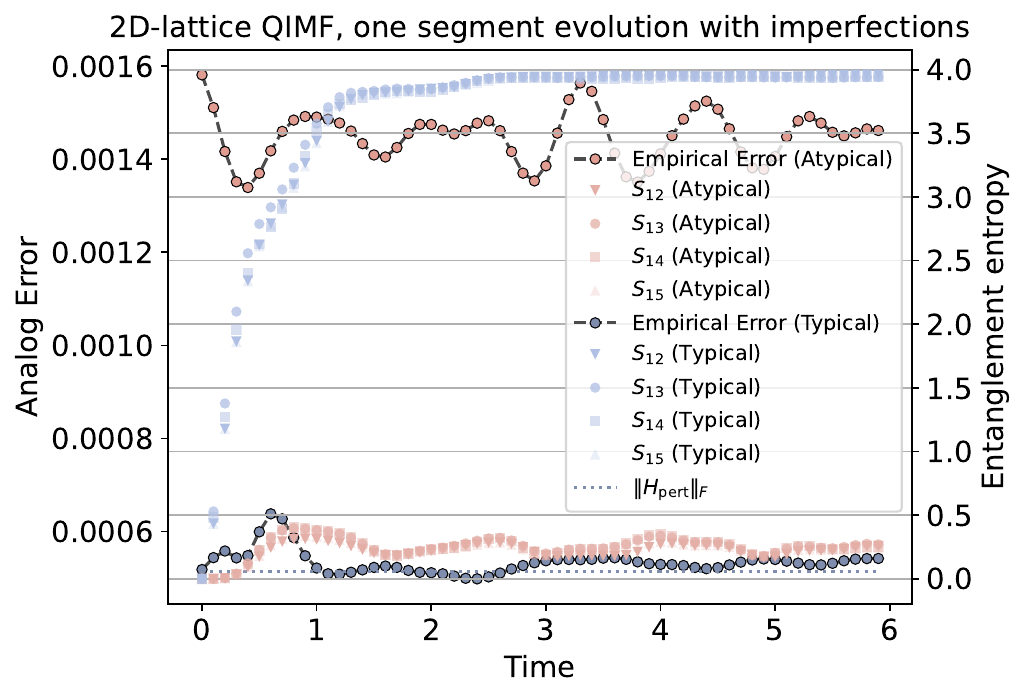}
    }

    \subfloat[]{
    \includegraphics[width=0.44\linewidth]{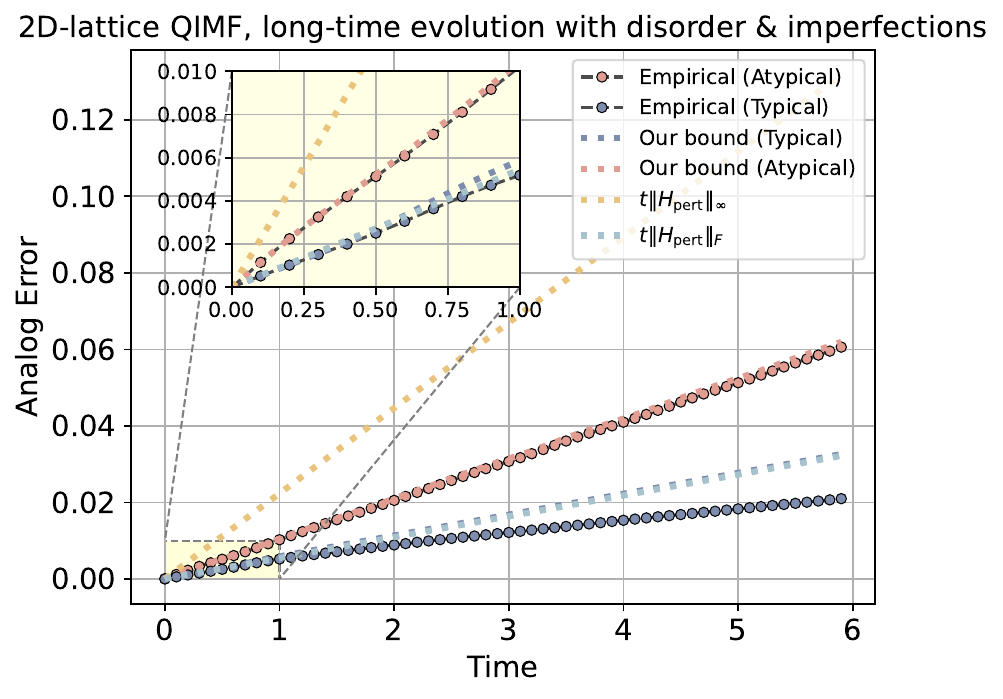}
    }
    \subfloat[]{
    \includegraphics[width=0.47\linewidth]{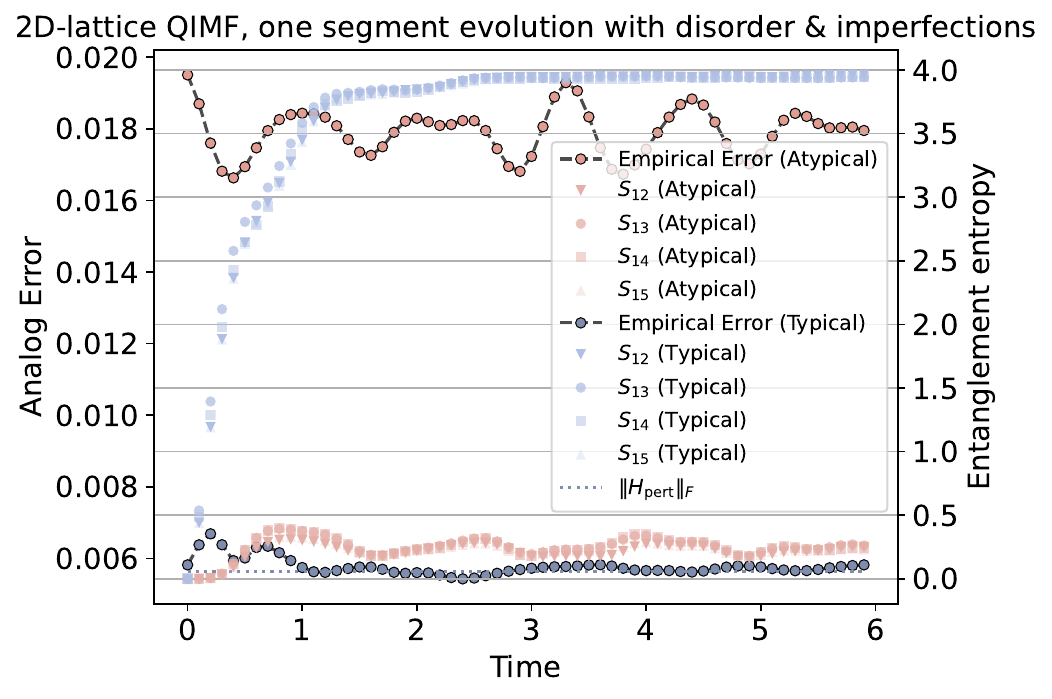}
    }
    \caption{Error of Analog simulation in typical and atypical cases for 2D QIMF model. (a)(c)(e) Error of long-time analog simulation defined by the distance between real state and simulated state $\|(U_0(t)-U(t))\ket\psi\|$. Error models are with disorder only ($\delta\in\mathcal N(0,0.001),\eta=0$), with imperfection only ($\delta=0,\eta=0.001$), and with both disorder and imperfection ($\delta=\mathcal N(0,0.001),\eta=0.001$), respectively. In typical input state, where the entanglement entropy grows during the system's evolution, our estimate closely matches the average-case performance, scaling with $\norm{H_{\text{pert}}}$. In contrast, in an atypical input state, where the entanglement entropy remains low throughout the evolution, the error scaling deviates significantly from the average case and approaches the worst-case bound characterized by the spectral norm. (b)(d)(f) Error of one-segment analog simulation with respect to different initial states defined as $
    \|(U_0(\delta t)-U(\delta t))\ket{\psi(t)}\|$.Error models are with disorder only, with imperfection only, and with both disorder and imperfection, respectively. The entanglement entropy of both typical and atypical cases defined on the subsystem of two qubits ($S_{1,2},~S_{1,3},~S_{1,4},~S_{1,5}$) is also shown.}
    \label{fig:placeholder2D}
\end{figure}

\emph{2D lattice QIMF model.---}
We also numerically test a QIMF model on a $4\times 3$ 2D spin lattice with Hamiltonian
\begin{equation}
    H_0=h_x\sum_i X_i+h_y\sum_iY_i+J\sum_{\langle i,j\rangle}X_iX_j,
\end{equation}
where $\langle
{i,j}\rangle$ represents an index pair of two nearest-neighbor spins, parameters $h_x=0.809$, $h_y=0.9045$, $J=1$. We also consider error models including stochastic disorder and control imperfection terms (see Figure~\ref{fig:placeholder2D}).

    


\subsection{{Fermi-Hubbard Model}}
The Fermi-Hubbard model is fundamental in condensed matter physics, used to describe interacting fermions on a lattice. The Hamiltonian of the 1D Fermi-Hubbard model 
\begin{equation}
        H_0=-\sum_{i=1}^L\sum_{\sigma}(a_{i,\sigma}^\dagger a_{i+1,\sigma}+\text{H.c.})+V\sum_{i=1}^La_{i,\uparrow}^\dagger a_{i,\uparrow}a_{i,\downarrow}^\dagger a_{i,\downarrow},
\end{equation}
captures the competition between the kinetic energy (with parameter $t$) of electrons and the on-site Coulomb repulsion (with parameter $V$). This simple yet rich model helps explain key phenomena such as magnetism, metal–insulator transitions, and high-temperature superconductivity. Fermionic systems are characterized by anticommutation relations between creation and annihilation operators, expressed as
\begin{equation}
    \{a_i,a_j^\dagger\}=\delta_{ij},\quad \{a_i,a_j\}=\{a_i^\dagger,a_j^\dagger\}=0.
\end{equation}
However, quantum computers operate on qubits that obey commutation rather than anticommutation relations. To simulate fermions on qubits, one uses transformations such as the Jordan–Wigner (JW) transform, which maps fermionic operators to strings of Pauli matrices:
\begin{equation}
    a_j=(\otimes_{k<j}Z_k)\otimes\dfrac{X_j+iY_j}2,\quad a_j^\dagger=(\otimes_{k<j}Z_k)\otimes\dfrac{X_j-iY_j}2.
\end{equation}
In our numerical tests, we consider simulating a Fermi-Hubbard model with parameters $t=1$ and $U=0.5$. Compared to the Hamiltonian $H_0$ of the target free fermion model, the Hamiltonian of the simulator $H_{\text{analog}}$ contains an additional noise term from the Coulomb potential.
\begin{align}
    H_0=&-\sum_{i=1}^{L-1}(c_j^{\dagger}c_{j+1}+c_{j+1}^\dagger c_j),\\
    H_{\text{pert}}=&\delta\sum_{i=1}^La^\dagger_{i, \uparrow} a_{i, \uparrow}
                         a^\dagger_{i, \downarrow} a_{i, \downarrow},\\
    H_{\text{analog}}=&H_0+H_{\text{pert}}.
\end{align}
%

\subsection{Entanglement-induced resilience in quantum control}

In the main text, we demonstrate the relationship between the effective error of a single-qubit gate with robust control pulses (RCPs) and entanglement entropy in a 2D lattice. The RCPs used in the main text is given by Ref.~\cite{hai2025scalablerobustquantumcontrol} and take the form
\begin{equation}
\Omega(t) = \Omega_m\sin(\pi t/T)(a_0+\sum_{
j=1}^n a_j \cos(2\pi jt/T+\phi_j)),
\end{equation} 
where parameters are listed in Table~\ref{Table:Pulse}.

\begin{table}[t]
    \centering

    \begin{tabular}{c c c c c}
        \hline
        RCPs & $\Omega_m$(MHz) & $T$(ns) & $a$ & $\phi$ \\
        \hline
        $X_\pi$ &
        150 & 180 &
        $[0.2374,\,0.2683,\,0.1459,\,0.0335,\,0.0030,\,0.0144]$ &
        $[-0.0055,\,-0.0021,\,-0.0006,\,-0.2457,\,-0.0157]$ \\

        $X_{\pi/2}$ &
        150 & 180 &
        $[0.1735,\,0.1438,\,0.0625,\,-0.0427,\,-0.0606,\,0.0207]$ &
        $[0.0013,\,0.0049,\,-0.0139,\,-0.0093,\,0.0062]$ \\

        $X_{2\pi}$ &
        150 & 180 &
        $[0.1522,\,0.1288,\,0.0434,\,-0.0866,\,-0.0375,\,-0.0174]$ &
        $[0.0093,\,-0.0431,\,0.0567,\,-0.0104,\,-0.0313]$ \\
        \hline
    \end{tabular}

    \caption{Optimized pulse parameters for different RCPs.}
    \label{Table:Pulse}
\end{table}

Here, we demonstrate a 2-qubit case where the implementation of a 2-qubit gate produces disturbances on 6 neighboring qubits. 
We consider the Hamiltonian given by Eq.~\eqref{eq:2qubit_rot} with $J=10\pi$MHz and perturbation Eq.~\eqref{eq:2qubit_per} with parameters $\delta =100$kHz, $J_\text{residue}=100$kHz. The numerical results are demonstrated in Figure~\ref{fig:twpqibitgate}. To illustrate the relationship between error and the entanglement entropy between subsystem of the 8 involved qubits and its complementary subsystem, we prepare a set of purified Gibbs states
\begin{equation}
    \ket{\psi_T}_{AB}=\dfrac{\sum_ie^{-\epsilon_i/T}\ket{\phi_i}_A\ket{\phi^*_i}_{B}}{\|\sum_ie^{-\epsilon_i/T}\ket{\phi_i}_A\ket{\phi^*_i}_{B}\|},\label{Apendix:gibs}
\end{equation}
where $A$ denotes the system composed of the evolved, the remaining system $B$ includes the other 8 qubits. $\{(\epsilon_i,\ket{\phi_i}_{A})\}$'s represent eigenvalues and eigenvectors of Hamiltonian $V=-\sum_{i\in A}Z_i$, and $\ket{\phi^*_i}_{B}$'s are corresponding states on $B$.


The reduced density matrix of the purified Gibbs state is 
\begin{equation}
    \rho_A=\tr_B(\ket{\psi_T}\bra{\psi_T})=\dfrac1{\mathcal Z}\sum_ie^{-2\epsilon_i/T}\ket{\phi_i}\bra{\phi_i},
\end{equation}
where $\mathcal Z=\sum_ie^{-2\epsilon_i/T}$. So the entanglement entropy defined by the von Neumann entropy is 
\begin{equation}
    S_A=-\tr(\rho_A\ln\rho_A)=\dfrac1{\mathcal Z}\sum_i\dfrac{2\epsilon_i}{T}e^{-2{\epsilon_i/T}}\ket{\phi_i}\bra{\phi_i}.
\end{equation}
And the gate error can be calculated by
\begin{equation}
    \|(U-U_0)\ket{\psi_T}\|=\tr [(U-U_0)^\dagger(U-U_0)\ket{\psi_T}\bra{\psi_T}]=\tr _A\{\tr_B[(U-U_0)^\dagger(U-U_0)]\rho_A\}
\end{equation}
We consider the implementation of a 2-qubit gate with perturbation Hamiltonian given by Eq.~\eqref{Eq_multidot_H_Pauli} with gate time $T=$25ns, 50ns, 75ns, 100ns, respectively.
\begin{figure}
    \centering
    \includegraphics[width=0.5\linewidth]{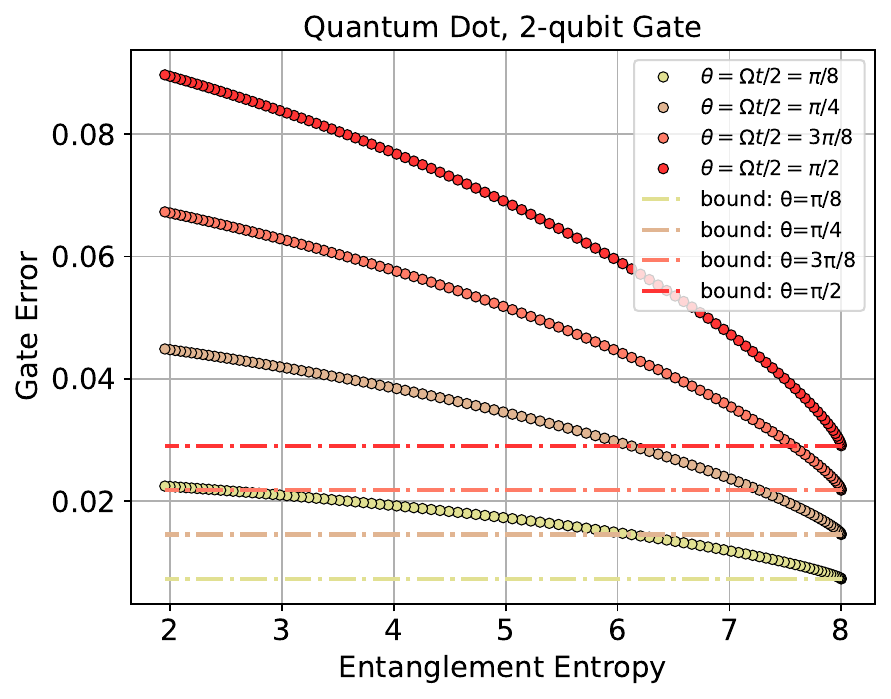}
    \caption{Implementing a 2-qubit gate, the gate error versus entanglement entropy. States are chosen according to Eq.~\eqref{Apendix:gibs}, where temperatures are chosen to be $T\in\left\{\frac1t|t\in\{0.008,0.016,\dots,0.48\}\right\}$. The entanglement entropy is calculated with respect to the subsystem of the first two qubits. When the entanglement entropy gets close to its maximum $\log d=2$, the gate error quickly reduces to the bound given by Eq~\eqref{eq:t-dependent_entangled}. The dependence of gate error on entropy is well described by a biquadratic root function, which is consistent with the entanglement-based bound with a maximally entangled state.}
    \label{fig:twpqibitgate}
\end{figure}

\section{Entanglement-induced resilience in quantum circuits}

Suppose a coherent noise model satisfying $U=U_0 e^{-i\lambda E},$ where $U_0$ is the ideal evolution and $e^{-i\lambda E}$ is the error unitary.
The $p$-order product formula of Trotterization is the special case of this model. Every single segment of the product formula can be modeled by the proposed model.
As we all known, a single (short-time) segment of product formula is given as $\mathscr{U}_p(\delta t)=U_0(\delta t) (I+\mathscr{M})$ where $\norm{\mathscr{M}}=\mc{O}(\delta t^{p+1})$. Since $I+\mathscr{M} $ is a unitary, one can denote it as $e^{\delta t^{p+1} M}=I+\delta t^{p+1}M+t^{2p+2}M^2/2...=I+\mathscr{M}$. Different from the Trotterization case, the model $U=U_0 e^{-i\lambda E}$ does not assume $U_0$ is a short-time evolution (close to identity), and it can be a highly nonlocal gate or strong coupling evolution.

In the following, $\norm{A}$ denote spectral norm, $\norm{A\ket{\psi}}=\sqrt{\bra{\psi}A^\dag A\ket{\psi}}$ denote vector norm. $S(\rho_A)=-\Tr(\rho_A\log\rho_A)$ is defined as the entanglement (Von Neumann) entropy. The entanglement-based lemma for a quantum circuit is given as follows.

\begin{lemma} [Entanglement-induced circuit resilience]Given a quantum state $\ket{\psi}$, an idea unitary/evolution $U_0$ and a approximate one $U$ with coherent error $e^{-i\lambda E}$ satisfying $U=U_0e^{-i\lambda E}$ ($E=\sum_j E_j$ is a Hermitian operator), the additive error of evolution is defined as $U-U_0$, and the evolved state is bounded by 

\begin{equation}
    \norm{(U-U_0)\ket{\psi}}= \mc{O}( \sqrt{\frac{\Tr(\mathscr{B})}{d}}+\sum_j \sqrt{{\norm{\mathscr{B}_j}} \sqrt{2\log(d_{\support(\mathscr{B}_{j})})-2S(\rho_{j})}}).
\end{equation}
 where 
$$\mathscr{B}=\sum_{s_1,s_2=1, s_1+s_2=\text{even}}^{s_1+s_2=K+1} \frac{(-1)^{s_2}({i}\lambda )^{s_1+s_2} (\sum_{j_1,j_2,..,j_{s1+s_2}}^J\prod_{i=1}^{s_1+s_2}E_{j_i})} {s_1!s_2!}, $$
and 
$\mathscr{B}_j=\mathscr{B}_{j(s_1,s_2)}=\frac{(-1)^{s_2}({i}\lambda )^{s_1+s_2} (\sum_{j_1,j_2,..,j_{s1+s_2}}^J\prod_{i=1}^{s_1+s_2}E_{j_i})} {s_1!s_2!}.$
\label{lemma:entanglement-induced}
\end{lemma}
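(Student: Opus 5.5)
Since $U=U_0e^{-i\lambda E}$ and $U_0$ is unitary, the plan is to first remove $U_0$:
\begin{equation*}
\norm{(U-U_0)\ket{\psi}}=\norm{(e^{-i\lambda E}-I)\ket{\psi}}=\sqrt{\bra{\psi}(e^{-i\lambda E}-I)^\dagger(e^{-i\lambda E}-I)\ket{\psi}}.
\end{equation*}
The operator $\mathscr{G}:=(e^{i\lambda E}-I)(e^{-i\lambda E}-I)$ is positive semi-definite. We will expand it as a double Taylor series, bound its expectation using the entanglement lemma of \cite{Zhao_2025NP} (Lemma~\ref{lemma1}, as used in Eq.~\eqref{entbound}), and finally take a square root.

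For the expansion, write $e^{i\lambda E}-I=\sum_{s_1\ge1}(i\lambda)^{s_1}E^{s_1}/s_1!$ and $e^{-i\lambda E}-I=\sum_{s_2\ge1}(-1)^{s_2}(i\lambda)^{s_2}E^{s_2}/s_2!$. The product gives the terms $(-1)^{s_2}(i\lambda)^{s_1+s_2}E^{s_1+s_2}/(s_1!s_2!)$. Expanding $E^{s_1+s_2}=\sum_{j_1,\dots,j_{s_1+s_2}}\prod_i E_{j_i}$ reproduces exactly the summands $\mathscr{B}_{j(s_1,s_2)}$. Because $\mathscr{G}$ is Hermitian, terms with odd $s_1+s_2$ cancel in pairs: swapping $s_1\leftrightarrow s_2$ changes the sign $(-1)^{s_2}$ by $(-1)^{s_1+s_2}=-1$, and $E^{s_1+s_2}$ is symmetric. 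Only even total orders survive, matching the constraint in $\mathscr{B}$. We then truncate at total order $K+1$, so that $\mathscr{G}=\mathscr{B}+\mathscr{R}_K$ with $\norm{\mathscr{R}_K}=\mathcal{O}((\lambda\norm{E})^{K+2}/(K+2)!)$. This remainder is absorbed into the $\mathcal{O}$ by choosing $K$ large, or by viewing $\lambda$ as small.

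Next we apply Lemma~\ref{lemma1} to $\mathscr{B}$ decomposed into its local pieces $\mathscr{B}_j$, each a product of local $E_{j_i}$ supported on $\support(\mathscr{B}_j)$:
\begin{equation*}
\bra{\psi}\mathscr{B}\ket{\psi}\le\frac{\Tr(\mathscr{B})}{d}+\sum_j\norm{\mathscr{B}_j}\sqrt{2\log(d_{\support(\mathscr{B}_j)})-2S(\rho_j)}.
\end{equation*}
The key mechanism behind the lemma is that $|\bra{\psi}\mathscr{B}_j\ket{\psi}-\Tr(\mathscr{B}_j)/d|=|\Tr(\mathscr{B}_j(\rho_j-I/d_j))|\le\norm{\mathscr{B}_j}\,\norm{\rho_j-I/d_j}_1$, together with Pinsker's inequality $\norm{\rho_j-I/d_j}_1\le\sqrt{2D(\rho_j\|I/d_j)}=\sqrt{2\log d_j-2S(\rho_j)}$. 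This per-term version does not require positivity of each $\mathscr{B}_j$, so it applies even though the individual summands are not positive.

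Finally, $\sqrt{a+\sum_j b_j}\le\sqrt a+\sum_j\sqrt{b_j}$ yields the stated form, with the remainder contributing $\sqrt{\norm{\mathscr{R}_K}}$, which is hidden in $\mathcal{O}(\cdot)$.

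The main obstacle is controlling the truncation and the combinatorics. The number of terms $\mathscr{B}_j$ grows like $J^{s_1+s_2}$, and their supports grow with order, so the entropy deficits of larger regions enter the bound. I would handle this by keeping $K$ fixed and treating $\lambda$ as a small parameter. Then the lowest order $s_1=s_2=1$, giving $\lambda^2E^2$, dominates, and the higher orders are lower-order corrections in $\lambda$. A secondary technical point is ensuring the truncated $\mathscr{B}$ remains real on $\ket{\psi}$; this holds because the even-order truncation is Hermitian.
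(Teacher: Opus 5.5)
Your proposal is correct and follows essentially the same route as the paper's proof: expand $\mathscr{A}^\dagger\mathscr{A}=(e^{i\lambda E}-I)(e^{-i\lambda E}-I)$ as a double Taylor series, keep the even total orders up to $K+1$, apply Lemma~\ref{lemma1} to the resulting local pieces, and take square roots using $\sqrt{a+\sum_j b_j}\le\sqrt{a}+\sum_j\sqrt{b_j}$. Your extra details fill in steps the paper only asserts: the explicit $s_1\leftrightarrow s_2$ cancellation of odd orders, the remainder estimate, and the observation that the H\"older--Pinsker argument behind Lemma~\ref{lemma1} does not need the individual $\mathscr{B}_j$ to be positive or even Hermitian.
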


\begin{proof}

Set
$\mathscr{A}=U-U_0=U_0(e^{-i\lambda E}-I)= U_0 {\sum_{s=1} \frac{(-i\lambda E)^s}{s!}}$,      
one has 
\begin{equation}
    \mathscr{A}^\dagger \mathscr{A}= {\sum_{s_1=1}^\infty \big[\frac{(-i\lambda E)^{s_1}}{s_1!}} \big]^\dagger  \big[{\sum_{s_2=1}^\infty  \frac{(-i\lambda E)^{s_2}}{s_2!}}\big].
\end{equation}
Since $e^{-i\lambda E}$ is an error unitary, and generally should be a perturbative rotation with some specific Hermitian operator $E$, where $E=\sum^J_j E_j$ and each $E_j$ is local (with constant weight or support). For example, a two-qubit gate or interaction evolution, its perturbation unitary is local because its driving Hamiltonian is local.  

Here $\lambda$ is a small parameter, we truncated the additive error up to $K$ order of $\lambda $, so that $$\mathscr{A}=U_0[{\sum^{K}_{s=1} \frac{(-i\lambda E)^s}{s!}}+\mc{O}(\lambda^{K+1})].$$
Now the positive semi-define  operator $\mathscr{A}^\dagger \mathscr{A}$ (leading term up to $K+1$ order) is given as

\begin{equation}
\begin{split}
    \mathscr{A}^\dagger\mathscr{A}&=[{\sum^{K}_{s_1=1} \frac{(-i\lambda E)^{s_1}}{s_1!}}+\mc{O}(\lambda^{K+1})]^\dagger [{\sum^{K}_{s_2=1} \frac{(-i\lambda E)^{s_2}}{s_2!}}+\mc{O}(\lambda^{K+1})]
    \\&=  \sum_{s_1,s_2=1}^{s_1+s_2=K+1} \frac{(-1)^{s_2}({i}\lambda E)^{s_1+s_2}}{s_1!s_2!} +\mc{O}(\lambda^{K+2})\\
     \\
    &=  \sum_{s_1,s_2=1, s_1+s_2=\text{even}}^{s_1+s_2=K+1} \frac{(-1)^{s_2}({i}\lambda E)^{s_1+s_2}}{s_1!s_2!} +\mc{O}(\lambda^{K+2})\\
    &= \sum_{s_1,s_2=1, s_1+s_2=\text{even}}^{s_1+s_2=K+1} \frac{(-1)^{s_2}({i}\lambda )^{s_1+s_2} (\sum_{j_1,j_2,..,j_{s1+s_2}}^J\prod_{i=1}^{s_1+s_2}E_{j_i})} {s_1!s_2!} +\mc{O}(\lambda^{K+2}).
    \end{split}
\end{equation}

For simplicity, set $\mathscr{A}^\dagger\mathscr{A}=\mathscr{B}+\mc{O}(\lambda^{K+2})=\sum_j \mathscr{B}_j+\mc{O}(\lambda^{K+2})$ where $\mathscr{B}_j=\mathscr{B}_{j(s_1,s_2)}=\frac{(-1)^{s_2}({i}\lambda )^{s_1+s_2} (\sum_{j_1,j_2,..,j_{s1+s_2}}^J\prod_{i=1}^{s_1+s_2}E_{j_i})} {s_1!s_2!}$.

Using the Lemma \autoref{lemma1} below, we have

\begin{equation}
    \norm{(U-U_0)\ket{\psi}}= \mc{O}(  \sqrt{\frac{\Tr(\mathscr{B})}{d}}+\sum_j \sqrt{{\norm{\mathscr{B}_j}} \sqrt{2\log(d_{\support(\mathscr{B}_{j})})-2S(\rho_{j})}}).
\end{equation}

\end{proof}

\begin{corollary}[1-order approximation]
Given approximate unitary $U=U_0 e^{-i\lambda E}$ where $U_0$ is the ideal one, for truncated order $K=1$, the additive error $\mathscr{A}=U-U_0=U_0(-i\lambda E+\mc{O} (\lambda^2))$ resulting $\mathscr{A}^\dagger \mathscr{A}=\lambda^2E^2+\mc{O}(\lambda^3)=\lambda^2 \sum_{j,j^\prime} E_jE_{j^\prime}+\mc{O}(\lambda^3)$. For any quatnum state $\ket{\psi}$, the additive error $\norm{\mathscr{A}\ket{\psi}}=\norm{(U-U_0)\ket{\psi}}$ is bounded by 

\begin{equation}
    \norm{(U-U_0)\ket{\psi}}= \mc{O}( \lambda \norm{E}_F+ \lambda\sum_j \sqrt{\norm{E_jE_{j^\prime}} \sqrt{2\log(d_{\support(E_{j}E_{j^\prime})})-2S(\rho_{j})}}),
\end{equation}
where  $\|A\|^2_F:= \Tr(A^{\dagger}A) /d$ is the (square) of the normalized Frobenius norm, $\rho_{j}:=\Tr_{[N]\setminus \support(A_j)}(\ket{\psi}\bra{\psi})$ is the reduced density matrix of $\ket{\psi}\bra{\psi}$ on the subsystem of $\support(A_j)$, and $\text{S}(\rho_{j})$ is the entanglement entropy of $\rho_{j}$.
\end{corollary}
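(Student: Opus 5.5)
The plan is to specialize the proof of Lemma~\ref{lemma:entanglement-induced} to $K=1$ and then apply the entanglement-based expectation bound (Lemma~\ref{lemma1}) to the resulting positive operator $\lambda^2E^2$.

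\textbf{Step 1: expand the additive error.} Write $\mathscr{A}=U-U_0=U_0(e^{-i\lambda E}-I)$. Expanding the exponential gives $e^{-i\lambda E}-I=-i\lambda E+R$, with $\norm{R}\le \lambda^2\norm{E}^2/2$ by the usual Taylor remainder for a Hermitian generator. Unitary invariance of the vector norm then gives
\[
\norm{\mathscr{A}\ket{\psi}}=\norm{(e^{-i\lambda E}-I)\ket{\psi}}.
\]

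\textbf{Step 2: identify $\mathscr{A}^\dagger\mathscr{A}$ exactly.} Working directly with the exact operator is cleaner than truncating. Since $e^{-i\lambda E}$ commutes with $E$,
\[
\mathscr{A}^\dagger\mathscr{A}=2I-e^{i\lambda E}-e^{-i\lambda E}=4\sin^2(\lambda E/2)\preceq \lambda^2E^2 .
\]
The last inequality follows by functional calculus from $|2\sin(x/2)|\le |x|$. This gives the clean bound
\[
\norm{\mathscr{A}\ket{\psi}}^2\le \lambda^2\bra{\psi}E^2\ket{\psi}=\lambda^2\sum_{j,j'}\bra{\psi}E_jE_{j'}\ket{\psi},
\]
so one does not need to track the $\mathcal{O}(\lambda^3)$ remainder at all. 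It is also consistent with the stated expansion $\mathscr{A}^\dagger\mathscr{A}=\lambda^2E^2+\mathcal{O}(\lambda^3)$.

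\textbf{Step 3: apply the entanglement-based bound.} Apply Lemma~\ref{lemma1} to the positive semidefinite operator $E^2=\sum_{j,j'}E_jE_{j'}$, with local pieces $E_jE_{j'}$ supported on $\support(E_j)\cup\support(E_{j'})$. This yields
\[
\bra{\psi}E^2\ket{\psi}\le \frac{\Tr(E^2)}{d}+\sum_{j,j'}\norm{E_jE_{j'}}\sqrt{2\log d_{\support(E_jE_{j'})}-2S(\rho_{j,j'})}.
\]
By definition, $\Tr(E^2)/d=\norm{E}_F^2$. Taking square roots and using $\sqrt{a+\sum_k b_k}\le \sqrt a+\sum_k\sqrt{b_k}$ for nonnegative terms gives
\[
\lambda\norm{E}_F+\lambda\sum_{j,j'}\sqrt{\norm{E_jE_{j'}}\sqrt{2\log d_{\support(E_jE_{j'})}-2S(\rho_{j,j'})}},
\]
which is the claimed $\mathcal{O}(\cdot)$ form. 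Here the sum over $j$ in the statement is read as a sum over pairs $(j,j')$, with $\rho_j$ meaning the reduced state on the support of $E_jE_{j'}$.

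\textbf{Main obstacle.} The delicate point is that Lemma~\ref{lemma1} is stated for a positive operator split into local summands, but the individual pieces $E_jE_{j'}$ are neither positive nor Hermitian. I would handle this in the way Lemma~\ref{lemma1} is actually used for $H_{\text{pert}}^\dagger H_{\text{pert}}$ earlier in the paper. Group the pieces into Hermitian pairs $E_jE_{j'}+E_{j'}E_j$ and bound each one's deviation from its maximally mixed expectation. The relevant estimate is
\[
\Bigl|\Tr\bigl(O(\rho_{S}-I/d_S)\bigr)\Bigr|\le\norm{O}\,\norm{\rho_S-I/d_S}_1\le \norm{O}\sqrt{2\bigl(\log d_S-S(\rho_S)\bigr)},
\]
which follows from Pinsker's inequality since $D(\rho_S\|I/d_S)=\log d_S-S(\rho_S)$. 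Summing these deviations reconstructs $\Tr(E^2)/d$ plus the entropy terms, which is all that Step 3 requires.
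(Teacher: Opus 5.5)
Your argument is correct, and its skeleton is the paper's: reduce to $\lambda^2\bra{\psi}E^2\ket{\psi}$, apply Lemma~\ref{lemma1} to the positive operator $E^2=\sum_{j,j'}E_jE_{j'}$, and split the square root. The genuine difference is your Step~2.

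The paper obtains $\mathscr{A}^\dagger\mathscr{A}$ as the $K=1$ case of its truncated Taylor expansion. It therefore only knows $\mathscr{A}^\dagger\mathscr{A}=\lambda^2E^2+\mc{O}(\lambda^3)$, and it drops a remainder that is implicitly controlled through the spectral norm $\norm{E}$. That remainder is negligible only when $\lambda\norm{E}$ is small, which is precisely the worst-case quantity the entanglement bound is meant to bypass; this is why the corollary is stated as an $\mc{O}(\cdot)$ estimate.

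Your closed form $\mathscr{A}^\dagger\mathscr{A}=2I-2\cos(\lambda E)=4\sin^2(\lambda E/2)\preceq\lambda^2E^2$ resums all orders at once. It explains the vanishing of odd orders that the paper checks order by order, and it shows that the discarded remainder is in fact negative semidefinite. The payoff is a genuine inequality with constant one, valid for every $\lambda$:
\[
\norm{(U-U_0)\ket{\psi}}\le\lambda\norm{E\ket{\psi}}\le\lambda\norm{E}_F+\lambda\sum_{j,j'}\sqrt{\norm{E_jE_{j'}}\sqrt{2\log d_{\support(E_jE_{j'})}-2S(\rho_{j,j'})}} .
\]
What the paper's expansion offers in return is a uniform scheme for higher truncation orders $K$ (its general lemma), although your closed form already contains all of those orders.

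Two minor remarks. First, the Taylor remainder estimate in Step~1 is never used once Step~2 is in place. Second, your \emph{main obstacle} is not really one. The proof of Lemma~\ref{lemma1} only needs H\"older's inequality $|\Tr(L\sigma)|\le\norm{L}\,\norm{\sigma}_1$ with $\sigma=\rho_S-I/d_S$, which holds for non-Hermitian $L$, so the pieces $E_jE_{j'}$ can be fed in directly. Your Hermitian pairing is harmless and yields the same bound, since $\norm{E_{j'}E_j}=\norm{E_jE_{j'}}$ and both orderings share a support.

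Your readings of the statement's typos are the intended ones: a sum over pairs $(j,j')$, $\rho_{j,j'}$ on $\support(E_j)\cup\support(E_{j'})$, and $\Tr(E^2)/d=\norm{E}_F^2$ rather than $\norm{E}_F$.
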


\begin{corollary}[2-order approximation]
Given approximate unitary $U=U_0 e^{-i\lambda E}$ where $U_0$ is the ideal one, for truncated order $K=2$, the additive error $\mathscr{A}=U-U_0=U_0(-i\lambda E-\frac{\lambda^2E^2}{2}+\mc{O} (\lambda^3))$ resulting $$\mathscr{A}^\dagger \mathscr{A}=\lambda^2E^2-\frac{-i\lambda^3E^3+i\lambda^3E^3}{2}+\mc{O}(\lambda^4)=\lambda^2E^2+\mc{O}(\lambda^4).$$ For any quatnum state $\ket{\psi}$, the additive error $\norm{\mathscr{A}\ket{\psi}}=\norm{(U-U_0)\ket{\psi}}$ is bounded by 

\begin{equation}
    \norm{U-U_0\ket{\psi}}= \mc{O}( \lambda \norm{E}_F+\lambda\sum_{j,j^\prime} \sqrt{\norm{E_jE_{j^\prime}} \sqrt{2\log(d_{\support(E_{j}E_{j^\prime})})-2S(\rho_{j})}}),
\end{equation}
where  $\|A\|^2_F:= \Tr(A^{\dagger}A) /d$ is the (square) of the normalized Frobenius norm, $\rho_{j}:=\Tr_{[N]\setminus \support(A_j)}(\ket{\psi}\bra{\psi})$ is the reduced density matrix of $\ket{\psi}\bra{\psi}$ on the subsystem of $\support(A_j)$, and $\text{S}(\rho_{j})$ is the entanglement entropy of $\rho_{j}$.
\end{corollary}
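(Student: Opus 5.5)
The statement to prove is the 2-order approximation corollary. The plan is to specialize the proof of Lemma~\ref{lemma:entanglement-induced} to truncation order $K=2$, exploit the fact that the cubic term of $\mathscr{A}^\dagger\mathscr{A}$ cancels, and then apply the entanglement-based expectation bound (Lemma~\ref{lemma1}) to the resulting quadratic operator.

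\textbf{Step 1: expand the additive error.} Write $\mathscr{A}=U-U_0=U_0(e^{-i\lambda E}-I)$. Since $U_0$ is unitary, $\norm{\mathscr{A}\ket{\psi}}^2=\bra{\psi}(e^{i\lambda E}-I)(e^{-i\lambda E}-I)\ket{\psi}$, and $U_0$ drops out. Expanding to second order gives $e^{-i\lambda E}-I=-i\lambda E-\tfrac{\lambda^2}{2}E^2+\mc{O}(\lambda^3)$.

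\textbf{Step 2: multiply out $\mathscr{A}^\dagger\mathscr{A}$.} Because $E$ is Hermitian, all factors commute, and
\[
\mathscr{A}^\dagger\mathscr{A}=\lambda^2E^2+\tfrac{i\lambda^3}{2}E^3-\tfrac{i\lambda^3}{2}E^3+\mc{O}(\lambda^4)=\lambda^2E^2+\mc{O}(\lambda^4).
\]
The $\lambda^3$ terms cancel exactly, as in the statement. A cleaner route is the exact identity $(e^{i\lambda E}-I)(e^{-i\lambda E}-I)=2I-2\cos(\lambda E)=\lambda^2E^2-\tfrac{\lambda^4}{12}E^4+\dots$. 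This shows that only even powers appear, which is the parity observation ($s_1+s_2$ even) already used in the lemma. It also shows $2I-2\cos(\lambda E)\preceq\lambda^2E^2$, since $2-2\cos x\le x^2$ for real $x$. With that operator inequality, $\norm{\mathscr{A}\ket{\psi}}^2\le\lambda^2\bra{\psi}E^2\ket{\psi}$ holds exactly, and the $\mc{O}(\cdot)$ becomes a genuine inequality.

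\textbf{Step 3: apply the entanglement bound to $E^2$.} Write $E^2=\sum_{j,j'}E_jE_{j'}$, a positive operator decomposed into local pieces supported on $\support(E_jE_{j'})$. Lemma~\ref{lemma1} then gives
\[
\bra{\psi}E^2\ket{\psi}\le \norm{E}_F^2+\sum_{j,j'}\norm{E_jE_{j'}}\sqrt{2\log d_{\support(E_jE_{j'})}-2S(\rho_{j,j'})}.
\]
Taking the square root and using $\sqrt{a+b}\le\sqrt a+\sqrt b$ term by term yields $\lambda\norm{E}_F+\lambda\sum_{j,j'}\sqrt{\norm{E_jE_{j'}}\sqrt{\cdots}}$, which is the claimed form.

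\textbf{Main obstacle.} The delicate point is Step 3. Lemma~\ref{lemma1} is stated for a positive operator, but the individual cross terms $E_jE_{j'}$ are neither Hermitian nor positive. I would first symmetrize them into $\tfrac12(E_jE_{j'}+E_{j'}E_j)$. I would then check that the lemma's proof only needs $|\Tr(\rho_{j,j'}X)-\Tr(X)/d_X|\le\norm{X}\,\norm{\rho_{j,j'}-I/d_X}_1$ together with Pinsker's inequality, and that this argument does not use positivity of each summand. Positivity of the total operator $E^2$ is needed only to take the square root at the end.
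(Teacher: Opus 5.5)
Your proposal is correct and follows essentially the paper's route: expand $e^{-i\lambda E}-I$ to second order, observe that the $\lambda^3$ cross terms cancel so that $\mathscr{A}^\dagger\mathscr{A}=\lambda^2E^2+\mc{O}(\lambda^4)$, apply Lemma~\ref{lemma1} to $E^2=\sum_{j,j'}E_jE_{j'}$ (using $\Tr(E^2)/d=\norm{E}_F^2$), and split the square root with $\sqrt{a+b}\le\sqrt a+\sqrt b$. Two of your additions go beyond what the paper writes. The exact identity $(e^{i\lambda E}-I)(e^{-i\lambda E}-I)=2I-2\cos(\lambda E)\preceq\lambda^2E^2$ turns the paper's $\mc{O}(\cdot)$ statement into an honest inequality with no uncontrolled higher-order remainder. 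Your remark that the H\"older/Pinsker step in Lemma~\ref{lemma1} never uses positivity of the individual summands $E_jE_{j'}$ is precisely the justification the paper leaves implicit.
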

 Note that $\Tr(E^2)/d=\norm{E}_F$ since $E^\dag=E$.
It can be seen that the additive error $\norm{\mathscr{A} \ket{\psi}}$ of 1-order and 2-order approximation of $e^{-i\lambda E}$ is the same while ignoring the
higher-order error terms. \\

Since the unitary here is not a short-time evolution, i.e., $U_0$ is far from $I$ and can be highly nonlocal, $\mathscr{A}=U_0(e^{-i\lambda E}-I)$ is not local anymore. Fortunately, $\mathscr{A}^\dagger \mathscr{A}$ can be local.  Now one can simply rewrite the Lemma proposed in \cite{Zhao_2025NP}, obtaining

\begin{lemma} \cite{Zhao_2025NP} \label{lemma1}
         Let $A=\sum_j A_j$ be a positive semi-defined operator, acting on $N$ qubits, where $A_j$ acts nontrivially on the subsystem with $\support(A_j)$. Then
\begin{align}
  | \bra{\psi}A \ket{\psi}|\le \frac{\Tr(A)}{d}+    \sum_{j} ~\| A_j\| \sqrt{2\log(d_{\support(A_{j})})-2S(\rho_{j})},
\end{align}
where $\rho_{j}:=\Tr_{[N]\setminus \support(A_j)}(\ket{\psi}\bra{\psi})$ is the reduced density matrix of $\ket{\psi}\bra{\psi}$ on the subsystem of $\support(A_j)$, and $\text{S}(\rho_{j})$ is the entanglement entropy of $\rho_{j}$.   
\end{lemma}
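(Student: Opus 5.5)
The plan is to split the expectation into a maximally mixed contribution plus a local deviation term for each $A_j$, and to control each deviation by a relative-entropy estimate. Write $d=2^N$ and, for each $j$, let $d_j:=d_{\support(A_j)}$ and $\tau_j:=I/d_j$ be the maximally mixed state on $\support(A_j)$.

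\emph{Decomposition.} Since $A$ is positive semi-definite, $\bra{\psi}A\ket{\psi}\ge 0$, so the absolute value can be dropped. By linearity,
\begin{align}
\bra{\psi}A\ket{\psi}=\sum_j \Tr(A_j\rho_j),
\end{align}
where $A_j$ is viewed as an operator on its support (with the identity elsewhere). Decompose each term as $\Tr(A_j\rho_j)=\Tr(A_j\tau_j)+\Tr\big(A_j(\rho_j-\tau_j)\big)$. The first pieces sum exactly to the trace term: for an operator $A_j\otimes I$ on the full system,
\begin{align}
\Tr(A_j\tau_j)=\frac{\Tr_{\support(A_j)}(A_j)}{d_j}=\frac{\Tr(A_j\otimes I)}{d}.
\end{align}
Summing over $j$ therefore gives $\Tr(A)/d$.

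\emph{Deviation terms.} By Hölder's inequality, $|\Tr(A_j(\rho_j-\tau_j))|\le \|A_j\|\,\|\rho_j-\tau_j\|_1$. Pinsker's inequality then gives
\begin{align}
\|\rho_j-\tau_j\|_1\le\sqrt{2D(\rho_j\|\tau_j)}.
\end{align}
The relative entropy to the maximally mixed state evaluates exactly as
\begin{align}
D(\rho_j\|\tau_j)=-S(\rho_j)-\Tr(\rho_j\log\tau_j)=\log d_j-S(\rho_j).
\end{align}
Hence each deviation is at most $\|A_j\|\sqrt{2\log d_j-2S(\rho_j)}$. Summing over $j$ and using the triangle inequality gives the claimed bound.

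\emph{Main obstacle.} The step needing care is the constant in Pinsker's inequality, which depends on the logarithm base. The form $\|\rho-\sigma\|_1\le\sqrt{2D}$ holds with natural logarithms. With base-2 logarithms one has $D_2=D/\ln 2\ge D$, so the right-hand side only becomes larger and the stated bound remains valid in either convention. The other point to check is that each $A_j$ is Hermitian. In the stated applications the terms come from expanding $A^\dagger A$ and may be non-Hermitian, so I would pair them as $A_j+A_j^\dagger$, or simply note that Hölder's inequality $|\Tr(XY)|\le\|X\|\,\|Y\|_1$ holds for arbitrary $X$. Positivity of the individual $A_j$ is never used; only positivity of $A$ is needed, and only to remove the absolute value.
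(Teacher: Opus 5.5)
Your proof is correct and matches the paper's argument step for step: split off the maximally mixed contribution $\Tr(A_j)/d_{\support(A_j)}=\Tr(A_j\otimes I)/2^N$, bound the remainder by Hölder with $\|A_j\|$, and apply quantum Pinsker together with $S(\rho_j\|\mathbb{I}/d_j)=\log d_j-S(\rho_j)$. Your additional remarks are all sound and make explicit points the paper leaves implicit, namely removing the absolute value via positivity of $A$, the logarithm-base convention, and Hölder holding for non-Hermitian $A_j$.
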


\begin{proof}
The term $A_j$ in the expression for $A$ only acts nontrivially on $\support(A_j)$. We denote its nontrivial part by $L_{j}:=\Tr_{[N]\setminus \support(A_j)} (A_j)$. Since $2^{-N}\Tr(A_j)=d_{\support(A_j)}^{-1}\Tr(L_{j})$, we have
\begin{equation}
\begin{aligned}
\sum_j\bra{\psi}A_j \ket{\psi}&=\sum_j\Tr(L_{j}\rho_{j})=\Tr[L_{j}(\rho_{j}- \mathbb{I}_{\support(A_j)}/d_{\support(A_j)})]+\sum_j\Tr(L_{j})/d_{\support(A_j)}\\
&= \sum_j\Tr[L_{j}(\rho_{j}- \mathbb{I}_{\support(A_j)}/d_{\support(A_j)})]+\Tr(A_j)/2^N\\
&\sum_j\leq \|L_{j}\|\Tr|\rho_{j}- \mathbb{I}_{\support(A_j)}/d_{\support(A_j)}|+\sum_j\Tr(A_j)/2^N\\
&=\sum_j \|A_j\| \Tr|\rho_{j}- \mathbb{I}_{\support(A_j)}/d_{\support(A_j)}|+\Tr(A)/2^N.
\end{aligned}
\end{equation}
Moreover, the trace distance of $\rho_{j}$ and $\mathbb{I}_{\support(A_j)}/d_{\support(A_j)}$ can be bounded by the relative entropy, the quantum Pinsker inequality, as
\begin{align}
    \Tr|\rho_{j}- \mathbb{I}_{\support(A_j)}/d_{\support(A_j)}| \le \sqrt{2S(\rho_{j}\|\mathbb{I}_{\support(A_j)}/d_{\support(A_j)}) }=
    \sqrt{2\log(d_{\support(A_j)})-2S(\rho_{j})}.
\end{align}
\end{proof}

If  $A=B^\dagger B$, where $B$ is a square matrix, Lemma \autoref{lemma1} reduces to the one in \cite{Zhao_2025NP}.

\end{document}